\newif\if@restonecol
\newtheorem{lemma}{Lemma}
\newtheorem{theorem}{Theorem}
\newtheorem{definition}{Definition}}
\newenvironment{pkl}{%
\begin{itemize}%
%\vspace{-\topsep}%
\setlength\itemsep{-0.5\parskip}%
\setlength\parsep{0in}%
}{%
%\vspace{-\topsep}%
\end{itemize}}
\newcommand{\etal}{\emph{et.al.}}
\newcommand{\Paragraph}[1]{~\vspace*{-0.9\baselineskip}\\{\bf #1}}
\newlength{\figsize} \setlength{\figsize}{0.2\textwidth}
\def\R{\mathbb{R}}
\newcommand{\eps}{\varepsilon}
\renewcommand{\b}[1]{\ensuremath{\mathbb{#1}}}
\newcommand{\denselist}{\itemsep -2pt\parsep=-1pt\partopsep -2pt}
\newcommand{\omt}[1]{}
\newcommand{\s}[1]{{\textsf{#1}}}
\renewcommand{\b}[1]{\ensuremath{\mathbb{#1}}}
\newcommand{\E}{\textbf{\s{E}}}
\newcommand{\Var}{\textbf{\s{Var}}}
\renewcommand{\Pr}{\textbf{\s{Pr}}}
\newcommand{\svd}{\s{svd}}
\newcommand{\diag}{\s{diag}}
\newcommand{\rank}{\s{rank}}
\newcommand{\unif}{\s{Unif}}
\newcommand{\pluseq}{\mathrel{+}=}
\newcommand{\piw}{\s{P1}}
\newcommand{\piiw}{\s{P2}}
\newcommand{\piiiw}{\s{P3}}
\newcommand{\piiiiw}{\s{P4}}
\newcommand{\pim}{\s{P1}}
\newcommand{\piim}{\s{P2}}
\newcommand{\piiim}{\s{P3}}
\begin{document}

\title{Continuous Matrix Approximation on Distributed Data\thanks{Thanks to support from NSF grants CCF-1115677, IIS-1251019, IIS-1200792, and IIS-1251019.}}

\numberofauthors{3} 

\author{
\alignauthor
Mina Ghashami\\
       \affaddr{School of Computing}\\
       \affaddr{University of Utah}\\
       \email{ghashami@cs.uah.edu}
% 2nd. author
\alignauthor
Jeff M. Phillips\\
       \affaddr{School of Computing}\\
       \affaddr{University of Utah}\\
       \email{jeffp@cs.uah.edu}
% 3rd. author
\alignauthor Feifei Li \\
       \affaddr{School of Computing}\\
       \affaddr{University of Utah}\\
       \email{lifeifei@cs.uah.edu}
}

\maketitle

Tracking and approximating data matrices in streaming fashion is a
fundamental challenge. The problem requires more care and attention
when data comes from multiple distributed sites, each receiving a
stream of data.
This paper considers the problem of ``tracking approximations to a
matrix'' in the distributed streaming model. In this model, there are $m$
distributed sites each observing a distinct stream of data (where each
element is a row of a distributed matrix) and has a communication
channel with a coordinator, and the goal is to track an
$\eps$-approximation to the norm of the matrix along any direction. To that end,
we present novel algorithms to address the matrix approximation
problem. Our algorithms maintain a smaller matrix $B$, as an
approximation to a distributed streaming matrix $A$, such that for any
unit vector $x$: $| \|A x\|^2 - \|B x\|^2 | \leq \eps \|A\|_F^2$. Our
algorithms work in streaming fashion and incur small communication,
which is critical for distributed computation. Our best method is
deterministic and uses only $O((m/\eps) \log(\beta N))$ communication,
where $N$ is the size of stream (at the time of the query) and $\beta$
is an upper-bound on the squared norm of any row of the matrix. 
In addition to proving all algorithmic properties
theoretically, extensive experiments with real large datasets
demonstrate the efficiency of these protocols.

\section{Introduction}
\label{sec:intro}
Large data matrices are found in numerous domains, such as scientific
computing, multimedia applications, networking systems, server and
user logs, and many others
\cite{Lib12,DBLP:conf/vldb/PapadimitriouSF05,DBLP:conf/sigcomm/LakhinaCD04,DBLP:conf/sigmetrics/LakhinaPCDKT04}. 
Since such data is huge in size and often generated continuously, it is important to process them in streaming fashion and maintain an approximating summary.
Due to its importance, the matrix approximation problem has received
careful investigations in the literature; the latest significant effort is represented by Liberty~\cite{Lib12} on a centralized stream.

In recent years, {\em distributed streaming model}
\cite{cormode2011algorithms} has become popular, in which there are multiple distributed
sites, each observing a disjoint stream of data and together attempt to monitor a function at a single coordinator site $C$. Due
to its wide applications in practice \cite{cormode2011algorithms}, a
flurry of work has been done under this setting.
This model is more general than the \textit{streaming model}
\cite{babcock2002models} that maintains a function at a single site
with small space. It is also different from \textit{communication
  model} \cite{yao1979some} in which data is (already) stored at
multiple sites and the goal is to do a \textit{one-time} computation
of a target function.  The key resources to optimize in distributed
streaming model is not just the space needed at the coordinator or
each site, but the communication between the sites and coordinator.

Despite prior works on distributed streaming model, and distributed matrix computations (e.g., the MadLINQ library \cite{DBLP:conf/eurosys/QianCKCYMZ12}), 
little is known on 
continuously tracking an matrix approximation in the distributed
streaming model. 
This paper considers the important
problem of {\em ``tracking approximations to matrices''} in the
distributed streaming model \cite{cormode2011algorithms}.

\Paragraph{Motivation.} Our problem is motivated by many applications
in distributed databases, wireless sensor networks, cloud computing,
etc \cite{muthukrishnan2005data} where data sources are distributed
over a network and collecting all data together at a central location
is not a viable option. In many such environments queries must be
answered continuously, based on the total data that has arrived so
far.

For example, in large scale image analysis, each row in the matrix
corresponds to one image and contains either pixel values or other
derived feature values (e.g, 128-dimensional SIFT features). A search
engine company has image data continuously arriving at many data
centers, or even within a single data center at many nodes in a massive
cluster. This forms a distributed matrix and it is critical to obtain
excellent, real-time approximation of the distributed streaming image
matrix with little communication overhead.

Yet another example is for large-scale distributed web crawling or
server access log monitoring/mining, where data in the bag-of-words
model is a matrix whose columns correspond to words or tags/labels
(for textual analysis, e.g. LSI, and/or for learning and classification purpose)
and rows correspond to documents or log records (which arrive
continuously at distributed nodes).

Since data is continuously changing in these applications, query
results can also change with time.  So the challenge is to
minimize the communication between sites and the coordinator
while maintaining accuracy of results {\em at all time}.

In our case, 
each site may generate a record in a given time instance, which is a
row of a matrix. The goal is to {\em approximate the matrix that is
  the union of all the rows from all sites until current time instance
  $t_{\textrm{now}}$ continuously} at the coordinator $C$. This
problem can be easily found in distributed network monitoring
applications \cite{DBLP:conf/imc/PapagiannakiTL04}, distributed data
mining, cloud computing \cite{DBLP:conf/eurosys/QianCKCYMZ12}, stream
mining \cite{DBLP:conf/kdd/HultenSD01}, and log analysis from multiple
data centers
\cite{DBLP:journals/tocs/CorbettDEFFFGGHHHKKLLMMNQRRSSTWW13}.

%%%% tracking HH related works %%%%%%%%%%%%%%
\Paragraph{Distributed streaming matrix approximation.} Formally,
assume $A = (a_1, \ldots, a_n, \ldots)$ is an unbounded stream of
items.  At the current time $t_{\textrm{now}}$, let $n$ denote the
number of items the system has seen so far; that is at the current
time the dataset is $A = (a_1, \ldots, a_n)$.  And although we do not
place a bound on the number of items, we let $N$ denote the total size
of the stream {\em at the time when a query $q$ is
  performed}.  This allows us to discuss results in terms of $n$ at a
given point, and in terms of $N$ for the entire run of the algorithm
until the time of a query $q$.

At each time step we assume the item $a_n$ appears at exactly one of
$m$ sites $S_1, \ldots, S_m$.  The goal is to approximately maintain
or monitor some function $f(A)$ of $A$ at the coordinator node $C$.
Each site can only communicate with the coordinator (which in practice
may actually be one of the sites).  This model is only off by a factor
$2$ in communication with a model where all pairs of sites can
communicate, by routing through the coordinator.

Our goal is to maintain a value $\hat f(A)$ which is off by some
$\eps$-factor from the true value $f(A)$.  The function and manner of
approximation is specified for a variety of
problems~\cite{yi2013optimal, manjhi2005finding,
  keralapura2006communication, babcock2003distributed,
  cormode2005holistic, cormode2005sketching} included total count,
heavy hitters, quantiles, and introduced in this paper, matrix
approximation:

\vspace{-4mm}
\begin{definition} [{\small Tracking distributed streaming matrix}]:\\*
  Suppose each $a_n$ is a record with $d$ attributes, a row from a
  matrix in an application. Thus, at time $t_{\textrm{now}}$, $A =
  (a_1, \ldots, a_n)$ forms a $n\times d$ {\em distributed streaming
    matrix}. At any time instance, $C$ needs to approximately maintain 
  the norm of matrix $A$ along any arbitrary direction. The goal is to
  continuously track a small approximation of matrix $A$. Formally,
  for any time instance $t_{\textrm{now}}$ (i.e., for any $n$), $C$
  needs to maintain a smaller matrix $B\in \mathbb{R}^{\ell\times d}$
  as an approximation to the distributed streaming matrix $A\in
  \mathbb{R}^{n\times d}$ such that $\ell\ll n$ and for any unit
  vector $x$: $| \|Ax\|^2 - \|Bx\|^2 | \leq \eps \|A\|_F^2$.

  As required in the distributed streaming model
  \cite{cormode2011algorithms}, each site $S_i$ must process its
  incoming elements in streaming fashion. The objective is to minimize
  the total communication between $C$ and all sites $S_1, \ldots,
  S_m$. \qed
\end{definition}

\vspace{-3mm} 
\Paragraph{Additional notations.} In the above definition, the
Frobenius norm of a matrix $A$ is $\| A\|_F = \sqrt{\sum_{i=1}^n \|
  a_i\|^2}$ where $\|a_i\|$ is standard Euclidean norm of row $a_i$;
the Frobenius norm is a widely used matrix norm. We also
let $A_k$ be the {\em best rank $k$ approximation} of matrix $A$,
specifically $A_k = {\arg \min}_{ X: \rank(X) \leq k} \| A - X\|_F$.

\Paragraph{Our contributions.}
This work makes important contributions in solving the open problem of
tracking distributed streaming matrix. Instead of exploring heuristic
methods that offer no guarantees on approximation quality, we focus on
principled approaches that are built on sound theoretical
foundations. Moreover, all methods are simple and efficient to
implement as well as effective in practice. Specifically:
\vspace{-2mm}
\begin{pkl}
\item We establish and further the important connection between
  tracking a matrix approximation and maintaining $\eps$-approximate
  {\em weighted} heavy-hitters in the distributed streaming model in
  Section \ref{sec:weightedModel}.  Initial insights along this direction
  were established in very recent work~\cite{Lib12,GP14}; but the
  extent of this connection is still limited and not fully understood.
  This is demonstrated, for instance, by us showing how three
  approaches for weighted heavy hitters can be adapted to matrix
  sketching, but a fourth cannot.

\item We introduce four new methods for tracking the
  $\eps$-approximate {\em weighted} heavy-hitters in a distributed
  stream in Section \ref{sec:weightedModel}, and analyze their
  behaviors with rigorous theoretical analysis.

\item We design three novel algorithms for tracking a good
  approximation of a distributed streaming matrix in Section
  \ref{sec:matrix}; these leverage the new insights connecting this
  problem to solutions in Section \ref{sec:weightedModel} for
  distributed weighted heavy hitters tracking.
%A fourth potential approach is shown why it cannot work.  

\item We present thorough experimental evaluations of the proposed
  methods in Section \ref{sec:exp} on a number of large real data
  sets. Experimental results verify the effectiveness of our methods
  in practice.
\end{pkl} \vspace{-2mm}

In addition, we provide a thorough review of related works and
relevant background material in Section \ref{sec:related}.  The paper
is concluded in Section \ref{sec:conclude}.

\section{Related work}
\label{sec:related}
There are two main classes of prior work that are relevant to our
study: approximating a streaming matrix in a centralized stream, and
tracking heavy hitters in either a centralized stream or (the union
of) distributed steams.

\Paragraph{Matrix approximation in a centralized stream.} Every
incoming item in a centralized stream represents a new row of data in
a streaming matrix. The goal is to continuously maintain a low rank
matrix approximation. It is a special instance of our problem for
$m=1$, i.e., there is only a single site. Several results exist in the
literature, including streaming PCA (principal component analysis)
\cite{mitliagkas2013memory}, streaming SVD (singular value
decomposition) \cite{Strumpen03astream,Brand200620}, 
and matrix sketching 
\cite{clarkson2009numerical,Lib12,GP14}. The matrix sketching
technique \cite{Lib12} only recently appeared and is the
start-of-the-art for low-rank matrix approximation in a single
stream. Liberty \cite{Lib12} adapts a well-known streaming algorithm for
approximating item frequencies, the MG algorithm \cite{mg-fre-82}, to
sketching a streaming matrix. The method, Frequent
Directions (\textsf{FD}), receives $n$ rows of a matrix $A\in
\mathbb{R}^{n\times d}$ one after another, in a centralized streaming
fashion. It maintains a sketch $B\in \mathbb{R}^{\ell\times d}$ with
only $\ell\ll n$ rows, but guarantees that $A^TA\approx B^TB$. More
precisely, it guarantees that $\forall x\in \b{R}^{d}$, $\| x\|=1$, $0\le
\|Ax\|^2 - \|Bx\|^2 \le 2\| A\|^2_F/\ell$. \textsf{FD} uses $O(d\ell)$
space, and each item updates the sketch in amortized $O(d\ell)$ time;
two such sketches can also be merged in $O(d \ell^2)$ time.

A bound on $\|A x\|$ for any unit vector $x$ preserves norm (or
length) of a matrix in direction $x$.  For instance, when one
performs PCA on $A$, it returns the set of the top $k$ orthogonal
directions, measured in this length.  These are the linear
combinations of attributes (here we have $d$ such attributes) which
best capture the variation within the data in $A$.  Thus by increasing
$\ell$, this bound allows one to approximately retain all important
linear combinations of attributes.

An extension of \textsf{FD} to derive streaming sketch results with
bounds on relative errors, i.e., to ensure that $\|A-A_k\|_F^2\le
\|A\|_F^2-\|B_k\|_F^2\le (1+\eps)\|A-A_k\|_F^2$, appeared in
\cite{GP14}. It also gives that $\|A-\pi_{B_k}(A)\|_F^2\le
(1+\eps)\|A-A_k\|^2_F$ where $B_k$ is the top $k$ rows of $B$ and
$\pi_{B_k}(A)$ is the projection of $A$ onto the row-space of $B_k$.
This latter bound is interesting because, as we will see, it indicates
that when most of the variation is captured in the first $k$ principal
components, then we can almost recover the entire matrix exactly.

But none of these results can be applied in distributed streaming
model without incurring high communication cost. Even though the
\textsf{FD} sketches are mergeable, since the coordinator $C$ needs to
maintain an approximation matrix {\em continuously} for all time
instances. One has to either send $m$ sketches to $C$, one from each
site, at every time instance and ask $C$ to merge them to a single
sketch, or one can send a streaming element to $C$ whenever it is
received at a site and ask $C$ to maintain a single sketch using
\textsf{FD} at $C$. Either approach will lead to $\Omega(N)$ communication.

Nevertheless, the design of \textsf{FD} has inspired us to explore the
connection between distributed matrix approximation and approximation
of distributed heavy hitters. 

\Paragraph{Heavy hitters in distributed streams.}  Tracking heavy
hitters in distributed streaming model is a fundamental problem
\cite{yi2013optimal, manjhi2005finding, keralapura2006communication,
  babcock2003distributed}.  Here we assume each $a_n \in A$ is an
element of a bounded universe $[u] = \{1, \ldots, u\}$.  If we denote
the frequency of an element $e \in [u]$ in the stream $A$ as $f_e(A)$,
the $\phi$-heavy hitters of $A$ (at time instance $t_{\textrm{now}}$)
would be those items $e$ with $f_e(A) \geq \phi n$ for some parameter
$\phi \in [0,1]$.  We denote this set as $H_\phi(A)$.  Since computing
exact $\phi$-heavy hitters incurs high cost and is often unnecessary,
we allow an $\eps$-approximation, then the returned set of heavy
hitters must include $H_\phi(A)$, may or may not include items $e$
such that $(\phi - \eps) n \leq f_e(A) < \phi n$ and must not include
items $e$ with $f_e(A) < (\phi - \eps) n$.

Babcock and Olston~\cite{babcock2003distributed} designed some
deterministic heuristics called as \emph{top-$k$ monitoring} to
compute top-$k$ frequent items.
Fuller and Kantardzid modified their technique and proposed
\emph{FIDS} \cite{fuller2007fids}, a heuristic
method, 
to track the heavy hitters while reducing communication cost and
improving overall quality of results. Manjhi \etal
\cite{manjhi2005finding} studied $\phi$-heavy hitter tracking in a
hierarchical communication model.

Cormode and Garofalakis~\cite{cormode2005sketching} proposed another
method by maintaining a summary of the input stream and a prediction
sketch at each site. If the summary varies from the prediction sketch
by more than a user defined tolerance amount, the summary and
(possibly) a new prediction sketch is sent to a coordinator. The
coordinator can use the information gathered from each site to
continuously report frequent items. Sketches maintained by each site
in this method require $O((1/\eps^2) \log(1/\delta))$ space and
$O(\log(1/\delta))$ time per update, where $\delta$ is a probabilistic
confidence.

Yi and Zhang~\cite{yi2013optimal} provided a deterministic algorithm
with communication cost $O((m/\eps) \log N)$ and $O(1/\eps)$ space at
each site to continuously track $\phi$-heavy hitters and the
$\phi$-quantiles. In their method, every site and the coordinator have as
many counters as the type of items plus one more counter for the total
items.
Every site keeps track of number of items it receives in each round,
once this number reaches roughly $\eps/m$ times of the total counter
at the coordinator, the site sends the counter to the coordinator. After the
coordinator receives $m$ such messages, it updates its counters and
broadcasts them to all sites. Sites reset their counter values and
continue to next round. To lower space usage at sites, they suggested
using space-saving sketch\cite{metwally2006integrated}.  The authors
also gave matching lower bounds on the communication costs for both
problems, showing their algorithms are optimal in the deterministic
setting.

Later, Huang \etal ~\cite{huang2012randomized} proposed a randomized
algorithm that uses $O(1/(\eps \sqrt{m}))$ space at each site and
$O((\sqrt{m}/\eps) \log N)$ total communication and tracks heavy
hitters in a distributed stream.  For each item $a$ in the stream a
site chooses to send a message with a probability $p = \sqrt{m}/(\eps
\hat n)$ where $\hat n$ is a $2$-approximation of the total count.  It
then sends $f_e(A_j)$ the total count of messages at site $j$ where
$a=e$, to the coordinator.  Again an approximation heavy-hitter count
$\hat f_e(A_j)$ can be used at each site to reduce space.

The $\eps$-heavy hitters can be maintained from a random sampling of
elements of size $s = O(1/\eps^2)$.  This allows one to use the well
studied technique of maintaining a random sample of size $s$ from a
distributed stream~\cite{CMYZ12,TW11}, which can be done with roughly
$O((m + s) \log (N/s))$ communication.

\Paragraph{Other related work.} Lastly, our work falls into the
general problem of tracking a function in distributed streaming
model. Many existing works have studied this general problem for
various specific functions, and we have reviewed the most related ones
on heavy hitters. A detailed survey of results on other functions
(that are much less relevant to our study) is beyond the scope of this
work, and we refer interested readers to
\cite{cormode2011algorithms,Cor13} and references therein.

Our study is also related to various matrix computations over
distributed matrices, for example, the MadLINQ library
\cite{DBLP:conf/eurosys/QianCKCYMZ12}. However, these results focus on
one-time computation over non-streaming data, i.e., in the {\em
  communication model} \cite{yao1979some} as reviewed in Section
\ref{sec:intro}. We refer interested readers to
\cite{DBLP:conf/eurosys/QianCKCYMZ12} and references therein for
details.

There are also many studies on low-rank approximations of matrices in
centralized, non-streaming setting, e.g.,
\cite{liberty2007randomized,frieze2004fast,achlioptas2001fast,drineas2003pass}
and others, but these methods are not applicable for a distributed
streaming setting.

\vspace{-2mm}
\section{Insights, Weights and Rounds} 
\label{sec:weights}
A key contribution of this paper is strengthening the connection
between maintaining approximate weighted frequency counts and approximately maintaining matrices.

To review, the Misra-Gries (MG) algorithm~\cite{mg-fre-82} is a deterministic, associative sketch to approximate frequency counts, in contrast to, say the popular count-min sketch~\cite{cormode2005improved} which is randomized and hash-based.  
MG maintains an associative array of size $\ell$ whose keys are elements of $e \in [u]$, and values are estimated frequency $\hat f_e$ such that $0 \leq f_e - \hat f_e \leq n/\ell$. Upon processing element $e \in A$, three cases can occur. If $e$ matches a label, it increments the associated counter. If not, and there is an empty counter, it sets the label of the counter to $e$ and sets its counter to $1$.  Otherwise, if no empty counters, then it decrements (shrinks) all counters by $1$. 

Liberty~\cite{Lib12} made this connection between frequency estimates and matrices in a centralized stream through the singular value decomposition (\svd).  
Our approaches avoid the \svd\ or use it in a different way.
The \svd\ of an $n \times d$ matrix $A$ returns (among other things) a
set of $d$ singular values $\{\sigma_1 \geq \sigma_2 \geq \ldots \geq
\sigma_d\}$ and a corresponding set of orthogonal right singular
vectors $\{v_1, \ldots, v_d\}$.  It holds that $\|A\|_F^2 =
\sum_{i=1}^d \sigma_i^2$ and for any $x$ that $\|A x\|^2 =
\sum_{i=1}^d \sigma_i^2 \langle v_i, x \rangle^2$.  The work of
Liberty~\cite{Lib12} shows that one can run a version of the
Misra-Gries~\cite{mg-fre-82} sketch for frequency counts using the
singular vectors as elements and squared singular values as the corresponding
total weights, recomputing the \svd\ when performing the shrinkage
step.

To see why this works, let us consider the restricted case where every
row of the matrix $A$ is an indicator vector along the standard
orthonormal basis.  That is, each row $a_i\in\{e_1, \ldots, e_d\}$,
where $e_j = (0, \ldots, 0, 1, 0, \ldots, 0)$ is the $j$th standard
basis vector. Such a matrix $A$ can encode a stream $S$ of items from
a domain $[d]$. If the $i$th element in the stream $S$ is item
$j\in\{1, \ldots, d\}$, then the $i$th row of the matrix $A$ is set to
$a_i=e_j$. At time $t_{\textrm{now}}$, the frequency $f_j$ from $S$
can be expressed as $f_j=\|A e_j\|^2$, since $\|A x\|^2 = \sum_{i=1}^n
\langle a_i, x \rangle^2$ and the dot product is only non-zero in this
matrix for rows which are along $e_j$.  A good approximate matrix $B$
would be one such that $g_j=\|B e_j\|^2$ is a good approximation of
$f_j$. Given $\|A\|_F^2=n$ (since each row of $A$ is a standard basis
vector), we derive that $|f_j-g_j|\le \eps n$ is equivalent to 
$| \|A e_j\|^2-\|B e_j\|^2 | \le \eps \|A\|_F^2$.

However, general matrices deviate from this simplified example in having non-orthonormal rows.
Liberty's \textsf{FD} algorithm~\cite{Lib12} demonstrates how taking \svd\ of a general matrix gets around this deviation and it achieves the same
bound $| \|A x\|^2-\|B x\|^2 | \le \eps \|A\|_F^2$ for any unit vector $x$.  

Given the above connection, in this paper, first we propose four novel
methods for tracking {\em weighted heavy hitters} in a distributed
stream of items ({\em note that tracking {\em weighted} heavy hitters in
the distributed streaming model has not been studied before}). Then, we
try to extend them to track matrix approximations where
elements of the stream are $d$-dimensional vectors.  Three extensions
are successful, including one based directly on Liberty's algorithm,
and two others establishing new connections between weighted frequency
estimation and matrix approximation.  We also show why the fourth
approach cannot be extended, illustrating that the newly established
connections are not obvious.

\Paragraph{Upper bound on weights.}
As mentioned above, there are two main challenges in extending ideas
from frequent items estimation to matrix approximation.  While, the
second requires delving into matrix decomposition properties, the
first one is in dealing with weighted elements.  We discuss here some
high-level issues and assumptions we make towards this goal.

In the weighted heavy hitters problem (similarly in matrix
approximation problem) each item $a_i$ in the stream has a weight
$w_i$ (for matrices this weight will be implicit as $\|a_i\|^2$).  Let
$W = \sum_{i=1}^n w_i$ be the total weight of the problem.  However,
allowing arbitrary weights can cause problems as demonstrated in the
following example.

Suppose we want to maintain a $2$-approximation of the total weight
(i.e. a value $\hat W$ such that $\hat W \leq W \leq 2 \hat W$).  If
the weight of each item doubles (i.e. $w_i = 2^i$ for tuple $(a_i,
w_i) \in A$), every weight needs to be sent to the coordinator.  This
follows since $W$ more than doubles with every item, so $\hat W$
cannot be valid for more than one step.  The same issue arises in
tracking approximate heavy hitters and matrices.

To make these problems well-posed, often
researchers~\cite{hung2008finding} assume weights vary in a finite
range, and are then able to bound communication cost.  To this end we
assume all $w_i \in [1,\beta]$ for some constant $\beta \geq 1$.

One option for dealing with weights is to just pretend every item with
element $e$ and weight $w_i$ is actually a set of $\lceil w_i \rceil$
distinct items of element $e$ and weight $1$ (the last one needs to be
handled carefully if $w_i$ is not an integer).  But this can increase
the total communication and/or runtime of the algorithm by a factor
$\beta$, and is not desirable.

Our methods take great care to only increase the communication by a
$\log (\beta N)/\log N$ factor compared to similar unweighted
variants.  In unweighted version, each protocol proceeds in $O(\log
N)$ rounds (sometimes $O(\frac{1}{\eps}\log N)$ rounds); a new round
starts roughly when the total count $n$ doubles.  In our settings, the
rounds will be based on the total weight $W$, and will change roughly
when the total weight $W$ doubles.  Since the final weight $W \leq
\beta N$, this will causes an increase to $O(\log W) = O(\log (\beta
N))$ rounds.  The actual analysis requires much more subtlety and care
than described here, as we will show in this
paper. 
Next we first discuss the weighted heavy-hitters problem, as the
matrix tracking problem will directly build on these techniques, and
some analysis will also carry over directly.

\section{Weighted Heavy Hitters in A Distributed
  Stream} 
\label{sec:weightedModel}
The input is a {\em distributed} weighted data stream $A$, which is a
sequence of tuples $(a_1, w_1), (a_2, w_2),\ldots , (a_n, w_n),
\ldots$ where $a_n$ is an element label and $w_n$ is the weight. For
any element $e \in [u]$, define $A_e = \{(a_i, w_i) \mid a_i = e\}$
and let $W_e = \sum_{(a_i, w_i) \in A_e} w_i$.
For notational convenience, we sometimes refer to a tuple $(a_i, w_i)
\in A$ by just its element $a_i$.

There are numerous important motivating scenarios for this
extension. For example, instead of just monitoring counts of objects,
we can measure a total size associated with an object, such as total
number of bytes sent to an IP address, as opposed to just a count of
packets. 
We next describe how to extend four protocols for heavy hitters
to the weighted setting.  
These are extensions of the unweighted protocols described in Section \ref{sec:related}.

\Paragraph{Estimating total weight.} An important task is to
approximate the current total weight $W = \sum_{i=1}^n w_i$ for all
items across all sites.  This is a special case of the heavy hitters
problem where all items are treated as being the same element. So if
we can show a result to estimate the weight of any single element
using a protocol within $\eps W$, then we can get a global estimate
$\hat W$ such that $|W - \hat W| \leq \eps W$.  All our subsequent
protocols can run a separate process in parallel to return this
estimate if they do not do so already.

Recall that the heavy hitter problem typically calls to return all
elements $e \in [u]$ if $f_e(A)/W \geq \phi$, and never if $f_e(A)/W <
\phi-\eps$.  For each protocol we study, the main goal is to ensure
that an estimate $\hat W_e$ satisfies $|f_e(A) - \hat W_e| \leq \eps
W$.  We show this, along with the $\hat W$ bound above, adjusting constants, is sufficient
to estimate weighted heavy hitters.  We return $e$ as a
$\phi$-weighted heavy hitter if $\hat W_e / \hat W > \phi - \eps/2$.

\vspace{-2mm}
\begin{lemma}
  If $|f_e(A) - \hat W_e| \leq (\eps/6) W$ and $|W - \hat W| \leq
  (\eps/5) W$, we return $e$ if and only if it is a valid weighted
  heavy hitter.
\end{lemma}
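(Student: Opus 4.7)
The plan is to verify the standard two-sided $\eps$-approximate heavy-hitter guarantee: namely, show that every true $\phi$-weighted heavy hitter (one with $f_e(A)/W \geq \phi$) passes the threshold $\hat W_e / \hat W > \phi - \eps/2$, and that every element with $f_e(A)/W < \phi - \eps$ fails it. The ``if and only if'' is interpreted in this standard one-sided-error sense (items in the ambiguous band $[\phi - \eps, \phi)$ may go either way).

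First I would unpack the hypotheses into simple one-sided bounds:
\begin{align*}
f_e(A) - (\eps/6) W \;\leq\; \hat W_e \;\leq\; f_e(A) + (\eps/6) W, \\
(1-\eps/5) W \;\leq\; \hat W \;\leq\; (1+\eps/5) W.
\end{align*}
For the completeness direction, assume $f_e(A) \geq \phi W$. Plug the lower bound on $\hat W_e$ and the upper bound on $\hat W$ into the ratio to obtain
\[
\frac{\hat W_e}{\hat W} \;\geq\; \frac{\phi - \eps/6}{1 + \eps/5},
\]
and it remains to check $(\phi - \eps/6) > (\phi - \eps/2)(1 + \eps/5)$, which after expansion reduces to $\phi/5 - \eps/10 < 1/3$. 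This holds for any $\phi \leq 1$ and any reasonable $\eps \leq 1$.

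For the soundness direction, assume $f_e(A) < (\phi - \eps) W$. Using the upper bound on $\hat W_e$ and the lower bound on $\hat W$,
\[
\frac{\hat W_e}{\hat W} \;\leq\; \frac{\phi - 5\eps/6}{1 - \eps/5},
\]
and one needs $(\phi - 5\eps/6) \leq (\phi - \eps/2)(1 - \eps/5)$, which rearranges to the same sufficient condition $\phi/5 - \eps/10 \leq 1/3$, again trivially satisfied.

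The entire argument is a direct substitution; there is no real obstacle. The only thing worth being careful about is picking constants so that both directions give slack on both sides of the decision threshold simultaneously, which is why the hypotheses use $\eps/6$ for $\hat W_e$ and $\eps/5$ for $\hat W$ while the returning threshold uses the intermediate value $\eps/2$. I would state the arithmetic inequalities explicitly and note that everything goes through whenever $\phi \leq 1$ and $\eps \in (0,1]$, completing the proof.
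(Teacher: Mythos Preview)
Your proof is correct and close in spirit to the paper's, but organized a bit differently. The paper establishes the single intermediate inequality $\left|\hat W_e/\hat W - f_e(A)/W\right| \leq \eps/2$ (showing one side and appealing to symmetry), from which both completeness and soundness follow immediately for any threshold $\phi$. You instead verify the two threshold cases separately and arrive at the sufficient condition $\phi/5 - \eps/10 \leq 1/3$, which then relies on $\phi \leq 1$. Both arguments are elementary substitutions with the same constants; the paper's version is slightly cleaner because the $\eps/2$ bound on the ratio difference is $\phi$-free, while yours makes the dependence on the decision threshold explicit. Neither approach has any real advantage beyond presentation.
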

\begin{proof} 
  We need $|\frac{\hat W_e}{\hat W} - \frac{f_e(A)}{W} | \leq \eps/2$.
  We show the upper bound, the lower bound argument is symmetric.
\begin{align*}
\frac{\hat W_e}{\hat W} 
&\leq 
\frac{f_e(A)}{\hat W} + \frac{\eps}{6} \frac{W}{\hat W}
\leq
\frac{f_e(A)}{W} \frac{1}{1-\eps/5} + \frac{\eps}{5} \frac{1+\eps/5}{1-\eps/5}
\\ &\leq
\frac{f_e(A)}{W} + \frac{\eps}{4} + \frac{\eps}{4}
=
\frac{f_e(A)}{W} + \frac{\eps}{2}. \qed
\end{align*}
\end{proof}

Given this result, we can focus just on approximating the frequency $f_e(A)$ of all items.

%%%%%%%%%%%%%%%%%%%%%%%%%%%%%%%%%%%%%%%%%%%%%
\subsection{Weighted Heavy Hitters, Protocol 1}
We start with an intuitive approach to the distributed streaming
problem: run a streaming algorithm (for frequency estimation) on each
site, and occasionally send the full summary on each site to the
coordinator.  We next formalize this protocol (P1).

On each site we run the Misha-Gries summary~\cite{mg-fre-82} for
frequency estimation, modified to handle weights, with $2/\eps =
1/\eps'$ counters.  We also keep track of the total weight $W_i$ of
all data seen on that site $i$ since the last communication with the
coordinator.  When $W_i$ reaches a threshold $\tau$, site $i$ sends
all of its summaries (of size only $O(m/\eps)$) to the coordinator.
We set $\tau = (\eps/2m) \hat W$, where $\hat W$ is an estimate of the
total weight across all sites, provided by the coordinator.  At this
point the site resets its content to empty.  This is summarized in
Algorithm \ref{alg:P1w-site}.

The coordinator can merge results from each site into a single summary
without increasing its error bound, due to the mergeability of such
summaries~\cite{ACHPWY12}.  It broadcasts the updated total weight
estimate $\hat W$ when it increases sufficiently since the last
broadcast.  See details in Algorithm \ref{alg:P1w-coord}.

\begin{algorithm}
\caption{\label{alg:P1w-site} P1: Tracking heavy-hitters (at site $S_i$)}
\begin{algorithmic}
  \FOR{$(a_n, w_n)$ in round $j$}
    \STATE Update $G_i \leftarrow \textsf{MG}_{\eps'}(G_i,(a_n, w_n))$.  
    \STATE Update total weight on site $W_i \pluseq w_n$.  
    \IF {($W_i \geq \tau = (\eps/2m) \hat W$)}
      \STATE Send $(G_i, W_i)$ to coordinator; make $G_i, W_i$ empty.  
    \ENDIF
  \ENDFOR
\end{algorithmic}
\end{algorithm}

\begin{algorithm}
\caption{\label{alg:P1w-coord} P1: Tracking heavy-hitters (at $C$)}
\begin{algorithmic}
  \STATE On input $(G_i, W_i)$:
  \STATE Update sketch $S \leftarrow \textsf{Merge}_{\eps'}(S, G_i)$ and $W_C \pluseq W_i$.  
  \IF {($W_C/\hat W > 1+\eps/2$)}
    \STATE Update $\hat W \leftarrow W_C$, and broadcast $\hat W$ to all sites.  
  \ENDIF
\end{algorithmic}
\end{algorithm}
\vspace{-3mm}
\begin{lemma}
   (P1) Algorithms \ref{alg:P1w-site} and \ref{alg:P1w-coord} maintain that
  for any item $e \in [u]$ that $|f_e(S) - f_e(A)| \leq \eps W_A$.
  The total communication cost is $O((m/\eps^2) \log (\beta N))$
  elements.
\end{lemma}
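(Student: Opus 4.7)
The plan is to separate the claim into two pieces---the per-element error bound and the total communication bound---each of which reduces to tracking the relationships between the true total weight $W = W_A$, the coordinator's running total $W_C$, the current broadcast estimate $\hat W$, and the local send threshold $\tau = (\eps/2m)\hat W$. First I would set $\eps' = \eps/2$, so that the weighted \textsf{MG} sketch at each site guarantees per-element error at most $\eps' W_i$ on the weight $W_i$ it has locally processed, and recall the mergeability of \textsf{MG} summaries: after the coordinator has merged every sketch it has received, the combined sketch $S$ has per-element error at most $\eps' W_C$ on the portion of the stream actually transmitted.

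The gap between $W_C$ and the true $W$ is exactly the weight still sitting at the sites. Since each site sends as soon as $W_i \geq \tau$, the total unreported weight is at most $m\tau = (\eps/2)\hat W$. Combining with the coordinator's broadcast rule $W_C \leq (1+\eps/2)\hat W$ gives $W \leq W_C + (\eps/2)\hat W \leq (1+\eps)\hat W$, so $\hat W$ is always a $(1+\eps)$-approximation of $W$. The per-element error of $S$ against the full stream is then at most $\eps' W_C$ (the sketch error) plus $(\eps/2)\hat W$ (the unreported weight), which is at most $\eps W$, as desired.

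For the communication analysis, I would partition time into rounds indexed by the successive values of $\hat W$. Since $\hat W$ multiplies by at least $1+\eps/2$ each time it is updated and $\hat W$ never exceeds $W \leq \beta N$, there are $O((1/\eps)\log(\beta N))$ rounds. Within a single round, the weight received at the coordinator is at most $(\eps/2)\hat W$, and each site-to-coordinator message delivers a chunk of size $\tau = \Theta(\eps \hat W / m)$, yielding $O(m)$ such messages per round. Each message carries a weighted \textsf{MG} summary with $O(1/\eps)$ counters, contributing $O(m/\eps)$ elements per round. The broadcast of $\hat W$ to all sites adds only $O(m)$ per round and is dominated. Summing over rounds gives $O((m/\eps^2)\log(\beta N))$ total elements.

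The main subtlety I expect is the mildly circular dependence in the correctness step: the bound on unreported weight is phrased in terms of $\hat W$, while the bound on $\hat W$ in terms of $W$ itself uses the unreported-weight bound. This is resolved by noting that both inequalities are invariants maintained by the protocol at every instant, so the chain $W \leq W_C + (\eps/2)\hat W \leq (1+\eps/2)\hat W + (\eps/2)\hat W$ is simultaneously valid and inverts to $\hat W \geq W/(1+\eps)$. Once this invariant is pinned down, both the error and the per-round message count fall out of routine accounting.
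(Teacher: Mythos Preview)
Your proposal is correct and follows essentially the same approach as the paper: both decompose the error into the merged-sketch error $\eps' W_C$ plus the unreported weight $m\tau = (\eps/2)\hat W$, and both bound communication as $O(m/\eps)$ elements per round times $O((1/\eps)\log(\beta N))$ rounds. Your round-counting via the multiplicative growth of $\hat W$ by $1+\eps/2$ is if anything a bit cleaner than the paper's additive $(\eps/4)W_A$ increment argument, and your digression establishing $\hat W \geq W/(1+\eps)$ is correct but not actually needed for the lemma (the error bound only requires the trivial $\hat W \leq W$ and $W_C \leq W$).
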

\begin{proof}
  For any item $e \in [u]$, the coordinator's summary $S$ has error
  coming from two sources.  First is the error as a result of merging
  all summaries sent by each site.  By running these with an error
  parameter $\eps' = \eps/2$, we can guarantee~\cite{ACHPWY12} that
  this leads to at most $\eps' W_C \leq \eps W_A/2$, where $W_C$ is
  the weight represented by all summaries sent to the coordinator,
  hence less than the total weight $W_A$.

  The second source is all elements on the sites not yet sent to the
  coordinator.  Since we guarantee that each site has total weight at
  most $\tau = (\eps/2m) \hat W \leq (\eps/2m) W$, then that is also
  an upper bound on the weight of any element on each site.  Summing
  over all sites, we have that the total weight of any element not
  communicated to the coordinator is at most $m \cdot (\eps/2m) W =
  (\eps/2) W$.

  Combining these two sources of error implies the total error on each
  element's count is \emph{always} at most $\eps W$, as desired.

  The total communication bound can be seen as follows.  Each message
  takes $O(1/\eps)$ space.  The coordinator sends out a message to all
  $m$ sites every (at most) $m$ updates it sees from the coordinators;
  call this period an epoch.  Thus each epoch uses $O(m/\eps)$
  communication.  In each epoch, the size of $W_C$ (and hence $\hat
  W$) increases by an additive $m \cdot (\eps/2m) \hat W \geq (\eps/4)
  W_A$, which is at least a relative factor $(1+\eps/4)$.  Thus
  starting from a weight of $1$, there are $k$ epochs until
  $1\cdot(1+\eps/4)^k \geq \beta N$, and thus $k = O(\frac{1}{\eps}
  \log (\beta N))$.  So after all $k$ epochs the total communication
  is at most $O((m/\eps^2) \log(\beta N))$.
\end{proof}

%%%%%%%%%%%%%%%%%%%%%%%%%%%%%%%%%%%%%%%%%%%%%
\subsection{Weighted Heavy-Hitters Protocol 2}
\label{sec:P2w}
Next we observe that we can significantly improve the communication
cost of protocol P1 (above) using an observation, based on an
unweighted frequency estimation protocol by Yi and
Zhang~\cite{yi2013optimal}.  Algorithms \ref{alg:P2w-site} and
\ref{alg:P2w-coord} summarize this protocol.

Each site takes an approach similar to Algorithm \ref{alg:P1w-site},
except that when the weight threshold is reached, it does not send the
entire summary it has, but only the weight at the site.  It still
needs to report heavy elements, so it also sends $e$ whenever any
element $e$'s weight has increased by more than $(\eps/m) \hat W$
since the last time information was sent for $e$.  Note here it only
sends that element, not all elements.
 
After the coordinator has received $m$ messages, then the total weight
constraint must have been violated.  Since $W \leq \beta N$, at most
$O(\log_{(1+\eps)}(\beta N)) = O((1/\eps)\log(\beta N))$ rounds are
possible, and each round requires $O(m)$ total weight messages.
It is a little trickier (but not too hard) to see it requires only a total
of $O((m/\eps)\log(\beta N))$ element messages, as follows from the
next lemma; it is in general not true that there are $O(m)$ such
messages in one round.

\begin{algorithm}
\caption{\label{alg:P2w-site} P2: Tracking heavy-hitters (at site $S_i$)}
\begin{algorithmic}
  \FOR{each item $(a_n, w_n)$}
  \STATE $W_i \pluseq w_n$ and $\Delta_{a_n} \pluseq w_n$.  
  \IF {($W_i \geq (\eps/m) \hat{W}$)} 
  	\STATE Send $(\textsf{total},W_i)$ to $C$ and reset $W_i=0$.  
  \ENDIF
  \IF {($\Delta_{a_n} \geq (\eps/m) \hat{W}$)}
  	\STATE Send $(a_n,\Delta_{a_n})$ to $C$ and reset $\Delta_{a_n}=0$.  
  \ENDIF
  \ENDFOR
\end{algorithmic}
\end{algorithm}
\vspace{-1mm}
\begin{algorithm}
\caption{\label{alg:P2w-coord} P2: Tracking heavy-hitters (at $C$)}
\begin{algorithmic}
  \STATE On message $(\textsf{total}, W_i)$:
  \STATE Set $\hat{W} \pluseq W_i$ and $\#\s{msg} \pluseq 1$.    
  \IF {($\#\s{msg} \geq m$)}
    \STATE Set $\#\s{msg}=0$ and broadcast $\hat W$ to all sites.  
  \ENDIF
  \STATE On message $(a_n, \Delta_n)$: set $\hat{W}_{a_n} \pluseq \Delta_{a_n}$.  
\end{algorithmic}
\end{algorithm}

\vspace{-4mm}
\begin{lemma}
\label{lem:P2w-comm}
After $r$ rounds, at most $O(m \cdot r)$ element update messages have been sent.
\end{lemma}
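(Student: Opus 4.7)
The plan is to bound the number of element update messages in a single round by $O(m)$, then sum the bound over all $r$ rounds. Fix any round $j$: throughout it, the coordinator's broadcast value $\hat{W}$ is a fixed constant $\hat{W}_j$, so the common trigger threshold $\tau_j = (\eps/m)\hat{W}_j$ is fixed. Both types of messages---the total-weight messages and the element update messages---require accumulating at least $\tau_j$ weight before being triggered. My strategy is to first bound the total weight $\Delta W_j$ arriving across all sites during round $j$, and then divide by $\tau_j$ to bound the number of element updates.

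A round ends exactly when the coordinator has received $m$ total-weight messages. I would account for $\Delta W_j$ as the sum of two parts: (i) the weight communicated in those $m$ total messages, and (ii) the residual uncommunicated weight remaining in the $W_i$ counters at each site. Each total message carries at most $\tau_j + \beta$ weight, since the counter $W_i$ was strictly below $\tau_j$ before being incremented by at most $\beta$ (recall $w_n \in [1,\beta]$). Each residual $W_i$ is below $\tau_j$ by the trigger condition. This gives $\Delta W_j \leq m(\tau_j + \beta) + m\tau_j = O(m\tau_j)$ once $\hat W_j \gtrsim m\beta/\eps$.

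Now I would charge element update messages against this weight budget. Every element update message for some element $e$ from site $i$ requires $\Delta_e \geq \tau_j$ weight of element $e$ to accumulate at site $i$ since its last such message. These per-element per-site accumulators partition the arriving weight disjointly (each incoming item increments exactly one such accumulator), so the sum of their increments over round $j$ is at most $\Delta W_j$. Therefore the number of element updates in round $j$ is at most $\Delta W_j / \tau_j = O(m)$, and summing over $r$ rounds yields $O(m \cdot r)$.

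The main obstacle is handling the additive $m \beta$ slack in $\Delta W_j$, which makes the naive bound $O(m + m\beta/\tau_j)$ per round. This is $O(m)$ only once $\tau_j \gtrsim \beta$, i.e., $\hat{W}_j \gtrsim m\beta/\eps$. For the early rounds when $\hat{W}$ is still small relative to $\beta$, I would argue separately that few such early rounds can occur (since $\hat{W}$ grows by at least a $(1+\eps)$ factor per round once the trigger is meaningful, as the $m$ total messages contribute $\geq m\tau_j = \eps \hat{W}_j$), or absorb these initial $O(\log(m\beta/\eps))$ rounds into the $O(\log(\beta N))$ overall bound from the preceding discussion. Thus the per-round bound of $O(m)$ holds except for a negligible startup, and the lemma follows by summation.
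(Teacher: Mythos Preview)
Your approach has a genuine gap: the per-round bound of $O(m)$ element updates does not hold. In fact, the paper explicitly warns just before stating this lemma that ``it is in general not true that there are $O(m)$ such messages in one round.'' The problem is the step where you conclude ``the number of element updates in round $j$ is at most $\Delta W_j / \tau_j$.'' An element message is triggered when $\Delta_e \geq \tau_j$, but the counter $\Delta_e$ is reset only when a message is sent, not at round boundaries. So the $\tau_j$ worth of weight that triggers the message may have been accumulated over several earlier rounds; only a small fraction of it need be round-$j$ weight. Concretely, at the start of round $j$ every active counter satisfies $\Delta_e < \tau_{j-1}$, so only $\tau_j - \tau_{j-1} \approx \eps\,\tau_{j-1}$ of fresh weight is needed to fire. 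One can arrange roughly $m(j-1)$ distinct counters each sitting just below $\tau_{j-1}$ at the start of round $j$ (by spreading the weight of rounds $1,\ldots,j-1$ thinly over that many elements so that nothing ever triggers), and then round $j$'s incoming weight $\Delta W_j \approx m\tau_j$ pushes all of them over, yielding $t_j = \Theta(mj)$ element messages in a single round. This is not $O(m)$.

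The paper's proof is therefore amortized rather than per-round: it tracks a running surplus $T_r = rm - \sum_{i=1}^r t_i$ and argues inductively that $T_r \geq 0$. The key observation is that if earlier rounds sent fewer than $m$ element messages, that unspent ``budget'' (measured in weight) carries forward, and because thresholds only grow, weight saved in round $i' < r$ accounts for strictly less than one message in round $r$'s currency. Hence any burst $t_r > m$ must be covered by the accumulated surplus $k_r = T_{r-1}$, giving $t_r \leq m + k_r$ and thus $T_r \geq 0$. Your weight-accounting instinct is right globally---$\sum_i t_i \tau_i$ is bounded by the total arrived weight---but you cannot localize it to a single round; you need the amortized bookkeeping.
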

\begin{proof}
  We prove this inductively.  Each round gets a budget of $m$
  messages, but only uses $t_i$ messages in round $i$.  We maintain a
  value $T_r = r \cdot m - \sum_{i = 1}^r t_i$.  We show inductively
  that $T_r \geq 0$ at all times.

  The base case is clear, since there are at most $m$ messages in
  round $1$, so $t_1 \leq m$, thus $T_1 = m - t_1 \geq 0$.  Then since
  it takes less than $1$ message in round $i$ to account for the
  weight of a message in a round $i' < i$.  Thus, if
  $\sum_{i=1}^{r-1} t_i = n_r$, so $k_r = (r-1)m - n_r$, then if round
  $i$ had more than $m + k_r$ messages, the coordinator would have
  weight larger than having $m$ messages from each round, and it would
  have at some earlier point ended round $r$.  Thus this cannot
  happen, and the inductive case is proved.
\end{proof}

The error bounds follow directly from the unweighted case from
\cite{yi2013optimal}, and is similar to that for (P1).  We can thus
state the following theorem.

\vspace{-3mm}
\begin{theorem}
  Protocol 2 (P2) sends $O(\frac{m}{\eps} \log(\beta N))$ total
  messages, and approximates all frequencies within $\eps W$.
\end{theorem}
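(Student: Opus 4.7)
The plan is to split the theorem into its two claims and argue each in turn, relying on Lemma \ref{lem:P2w-comm} for the element-message count and on the thresholding structure of Algorithms \ref{alg:P2w-site} and \ref{alg:P2w-coord} for everything else.

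For the approximation bound, I would first observe that at any moment the coordinator's estimate $\hat{W}_e$ equals the sum of all $\Delta_{a_n}$ messages for element $e$ that have been received so far. At each site $S_i$, the uncommunicated residual $\Delta_e^{(i)}$ for element $e$ is capped by the guard $(\eps/m)\hat{W}$, because the instant it would exceed that threshold, the site flushes it and resets the counter to $0$. Summing this bound across the $m$ sites gives a total uncommunicated weight of at most $m \cdot (\eps/m)\hat{W} = \eps \hat{W}$, and since $\hat{W} \le W$ (the coordinator only accumulates weight that actually arrived), we conclude $0 \le f_e(A) - \hat{W}_e \le \eps W$ for every element $e$.

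For the communication bound, I would separate messages into three types: weight updates $(\textsf{total}, W_i)$, element updates $(a_n, \Delta_{a_n})$, and broadcasts of the refreshed $\hat{W}$ to all sites. Define a round as the interval between two consecutive broadcasts. Each weight update that triggers a send contributes at least $(\eps/m)\hat{W}$ to the coordinator's running total, so after $m$ such updates the coordinator's weight is at least $(1+\eps)\hat{W}$. Hence $\hat{W}$ grows by a multiplicative factor of at least $(1+\eps)$ per round. Because the final weight is bounded by $\beta N$, the number of rounds satisfies $r = O\bigl((1/\eps)\log(\beta N)\bigr)$. Each round contributes exactly $m$ weight-update messages and $m$ broadcast messages, giving $O(m r) = O\bigl((m/\eps)\log(\beta N)\bigr)$ for types (i) and (iii). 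For type (ii), Lemma \ref{lem:P2w-comm} directly yields $O(m r)$ total element-update messages over $r$ rounds, which is again $O\bigl((m/\eps)\log(\beta N)\bigr)$. Summing the three types yields the claimed bound.

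The subtlest point, and the one I would be most careful about, is that sites threshold against the most recently broadcast $\hat{W}$ rather than the true current total weight $W$. Because the coordinator closes a round and rebroadcasts as soon as $m$ weight messages have arrived, the stale $\hat{W}$ used by a site is always within a $(1+\eps)$ factor of the latest coordinator value, and hence within a constant factor of $W$; absorbing this constant into $\eps$ (or, if one wishes sharper constants, rerunning the round-counting argument with $\eps/2$) keeps both the error guarantee and the round bound intact. The rest is bookkeeping that mirrors the unweighted analysis in \cite{yi2013optimal}.
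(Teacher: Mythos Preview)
Your proposal is correct and follows essentially the same approach as the paper, which itself defers the error bound to the unweighted analysis of \cite{yi2013optimal} (mirroring the P1 argument) and invokes Lemma~\ref{lem:P2w-comm} for the element-update messages; you have simply made those deferred steps explicit. One small simplification: your final paragraph overcomplicates the staleness issue for the error bound---since the broadcast $\hat W$ is always a lower bound on the true $W$, the residual bound $\Delta_e^{(i)} < (\eps/m)\hat W \le (\eps/m)W$ holds directly with no constant to absorb; the $(1+\eps)$-factor observation is only needed for the round-count argument, where it already appears in your second paragraph.
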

\vspace{-2mm}

One can use the space-saving algorithm~\cite{metwally2006integrated}
to reduce the space on each site to $O(m/\eps)$, and the space on the
coordinator to $O(1/\eps)$.

%%%%%%%%%%%%%%%%%%%%%%%%%%%%%%%%%%%%%%%%%%%%%
\subsection{Weighted Heavy-Hitters Protocol 3}
\label{sec:P3w}
The next protocol, labeled (P3), simply samples elements to send to
the coordinator, proportional to their weight.

Specifically we combine ideas from priority sampling~\cite{DLT07} for
without replacement weighted sampling, and distributed sampling on
unweighted elements \cite{CMYZ12}.  In total we maintain a random
sample $S$ of size at least $s = O(\frac{1}{\eps^2}
\log\frac{1}{\eps})$ on the coordinator, where the elements are chosen
{\em proportional to their weights}, unless the weights are large
enough (say greater than $W/s$), in which case they are always chosen.
By deterministically sending all large enough weighted elements, not
only do we reduce the variance of the approach, but it also means the
protocol naturally sends the full dataset if the desired sample size
$s$ is large enough, such as at the beginning of the stream.
Algorithm \ref{alg:P3w-site} and Algorithm \ref{alg:P3w-coord}
summarize the protocol.

We denote total weight of sample by $W_S$.  
On receiving a pair $(a_n, w_n)$, a site generates a random number
$r_n \in \unif(0,1)$ and assigns a priority $\rho_n = w_n / r_n$ to
$a_n$. Then the site sends triple $(a_n, w_n, \rho_n)$ to the
coordinator if $\rho_n \geq \tau$, where $\tau$ is a global threshold
provided by the coordinator.

Initially $\tau$ is $1$, so sites simply send any items they receive
to the coordinator. At the beginning of further rounds, the
coordinator doubles $\tau$ and broadcasts it to all sites. Therefore
at round $j$, $\tau = \tau_j = 2^j$. In any round $j$, the coordinator
maintains two priority queues $Q_{j}$ and $Q_{j+1}$. On receiving a
new tuple $(a_n, w_n, \rho_n)$ sent by a site, the coordinator places
it into $Q_{j+1}$ if $\rho_n \geq 2\tau$, otherwise it places $a_n$
into $Q_j$.

Once $|Q_{j+1}| = s$, the round ends. At this time, the coordinator
doubles $\tau$ as $\tau = \tau_{j+1} = 2\tau_j$ and broadcasts it to
all sites.  Then it discards $Q_j$ and examines each item
$(a_n,w_n,\rho_n)$ in $Q_{j+1}$, if $\rho_n \geq 2\tau$, it goes into
$Q_{j+2}$, otherwise it remains in $Q_{j+1}$.

\begin{algorithm}
\caption{\label{alg:P3w-site} P3: Tracking heavy-hitters (at site $S_i$)}
\begin{algorithmic}
  \FOR{$(a_n, w_n)$ in round $j$}
  \STATE choose $r_n \in \unif(0,1)$ and set $\rho_n = w_n/r_n$.
  \STATE \textbf{if} $\rho_n \geq \tau$ \textbf{then} send $(a_n, w_n, \rho_n)$ to $C$.  
  \ENDFOR
\end{algorithmic}
\end{algorithm}

\begin{algorithm}
\caption{\label{alg:P3w-coord} P3: Tracking heavy-hitters (at $C$)}
\begin{algorithmic}
  \STATE On input of $(a_n, w_n, \rho_n)$ from any site in round $j$:
  \STATE \textbf{if} $\rho > 2 \tau_j$ \textbf{then} put $a_n$ in $Q_{j+1}$,
  \STATE \hspace{14mm} \textbf{else} put $a_n$ in $Q_j$.  
  \IF {$|Q_{j+1}| \geq s$}      
    \STATE Set $\tau_{j+1} = 2 \tau_j$; broadcast $\tau_{j+1}$ to all sites.  
    \FOR {$(a_n,w_n, \rho_n) \in Q_{j+1}$} 
    \STATE \textbf{if} $\rho_n > 2\tau_{j+1}$, put $a_n$ in $Q_{j+2}$.  
    \ENDFOR
  \ENDIF
\end{algorithmic}
\end{algorithm}

At any time, a sample of size exactly $s$ can be derived by
subsampling from $Q_j \cup Q_{j+1}$.  But it is preferable to use a
larger sample $S = Q_j \cup Q_{j+1}$ to estimate properties of $A$, so
we always use this full sample.

%%%%%%%%%%%%%%%%%%%%%%%%%%%%
\Paragraph{Communication analysis.}
The number of messages sent to the coordinator in each round is $O(s)$
with high probability. To see that, consider an arbitrary round
$j$. Any item $a_n$ being sent to coordinator at this round, has
$\rho_n \geq \tau$. This item will be added to $Q_{j+1}$ with
probability
\begin{align*}
\Pr(\rho_n \geq 2\tau \mid \rho_n \geq \tau) 
&= 
\frac{\Pr(\rho_n \geq 2\tau)}{\Pr(\rho_n \geq \tau)} 
= 
\frac{\Pr(r_n \leq \frac{w_n}{2\tau})}{\Pr(r_n \leq \frac{w_n}{\tau})} 
\\ &= 
\frac{\min(1,\frac{w_n}{2\tau})}{\min(1,\frac{w_n}{\tau})}
\geq 
\frac{1}{2}.
\end{align*}
Thus sending $4s$ items to coordinator, the expected number of items
in $Q_{j+1}$ would be greater than or equal to $2s$. Using a
Chernoff-Hoeffding bound $\Pr(2s-|Q_{j+1}|>s) \leq \exp(-2s^2/4s) =
\exp(-s/2)$. So if in each round $4s$ items are sent to coordinator,
with high probability (at least $1-\exp(-s/2)$), there would be $s$
elements in $Q_{j+1}$. Hence each round has $O(s)$ items sent with
high probability.
The next lemma, whose proof is 
contained in the appendix, 
bounds the number of rounds. Intuitively, each round requires the
total weight of the stream to double, starting at weight $s$, and this
can happen $O(\log(\beta N/s))$ times.  \vspace{-2mm}
\begin{lemma}
\label{LEM:P3W-ROUNDS}
The number of rounds is at most $O(\log(\beta N /s))$ with probability
at least $1-e^{-\Omega(s)}$.
\end{lemma}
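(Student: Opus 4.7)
The plan rests on a structural observation: for round $r$ to end, at some instant during round $r$ we must have $|Q_{r+1}| \ge s$, and every item residing in $Q_{r+1}$ has priority $\rho_n > 2\tau_r = 2^{r+1}$. I would show this by induction on $r$. At the transition from round $r-1$ to round $r$, the coordinator only migrates items from $Q_r$ into $Q_{r+1}$ when their priority exceeds $2\tau_r$; and during round $r$ itself, new arrivals are placed into $Q_{r+1}$ only when $\rho_n > 2\tau_r$. Items not meeting these thresholds either remain in lower queues or are discarded at a round boundary. Hence a necessary condition for round $r$ to end is that at least $s$ of the received items satisfy $\rho_n > 2^{r+1}$.

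Next, let $N_r$ be the total number of items in the entire stream whose priority satisfies $\rho_n > 2^{r+1}$. Since each $\rho_n = w_n/r_n$ is a function of an independent uniform $r_n$, the corresponding Bernoulli indicators are mutually independent, with success probability $\Pr[\rho_n > 2^{r+1}] = \min(1,\, w_n/2^{r+1}) \le w_n/2^{r+1}$. Summing over the entire stream yields
\[
\E[N_r] \;\le\; \frac{1}{2^{r+1}} \sum_n w_n \;=\; \frac{W}{2^{r+1}} \;\le\; \frac{\beta N}{2^{r+1}}.
\]

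Pick $R = \lceil \log_2(\beta N/s) \rceil + c$ for a sufficiently large absolute constant $c$; then $\E[N_R] \le s/2^{c+1}$. A standard multiplicative Chernoff bound for a sum of independent (non-identically distributed) Bernoullis gives $\Pr[N_R \ge s] \le \exp(-\Omega(s))$. Because the event ``round $R$ ends'' is contained in $\{N_R \ge s\}$, round $R$ fails to end with probability at least $1 - e^{-\Omega(s)}$. Since the algorithm cannot enter any round $r > R$ without first completing round $R$, the total number of rounds is at most $R = O(\log(\beta N/s))$ with the claimed probability.

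The main obstacle is the structural induction in the first paragraph: one must verify that items never inadvertently bypass the queue system, and that an item discarded with $Q_j$ at the end of round $j$ cannot later resurface in $Q_{r+1}$. Once the clean correspondence between the triggering event $|Q_{r+1}| \ge s$ and the count of stream items with $\rho_n > 2^{r+1}$ is established, the remaining step is a textbook concentration calculation, and the bound $W \le \beta N$ controls the horizon.
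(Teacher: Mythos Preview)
Your proposal is correct and follows essentially the same route as the paper's proof: identify a threshold round $R \approx \log_2(\beta N/s)$, observe that entering (or completing) that round requires at least $s$ stream items to have priority exceeding $2^{R+1}$, bound the expected number of such items by $\beta N/2^{R+1} = O(s)$ via the independent-uniform construction of priorities, and finish with a multiplicative Chernoff bound. The paper carries out exactly this computation with explicit constants ($\alpha=1$, yielding $\exp(-s/6)$) and additionally remarks on the degenerate case $s>N$, where everything is sent in the first round; your argument absorbs both of these into the big-$O$ and $\Omega$ notation without loss.
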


Since with probability at least $1-e^{-\Omega(s)}$, in each
round the coordinator receives $O(s)$ messages from all sites
and broadcasts the threshold to all $m$ sites, we can then combine
with Lemma \ref{LEM:P3W-ROUNDS} to bound the total messages.
\vspace{-2mm}
\begin{lemma}
\label{lem:P3w-mes}
This protocol sends $O((m+s) \log\frac{\beta N}{s})$ messages with
probability at least $1-e^{-\Omega(s)}$.  We set $s =
\Theta(\frac{1}{\eps^2} \log \frac{1}{\eps})$.
\end{lemma}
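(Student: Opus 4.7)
The plan is to assemble the per-round message count with the round-count bound already stated, then wrap a single union bound around them, and finally justify the choice of $s$ via the sampling guarantee needed for $\eps W$-accurate frequency estimation.

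First, I would fix an arbitrary round $j$ and count messages in two categories. The coordinator-to-site traffic is cheap: when a round ends, the coordinator does a single broadcast of the new threshold $\tau_{j+1}$, costing exactly $m$ messages. For the site-to-coordinator traffic, I would reuse the conditional-probability calculation already given just above the lemma: any item sent in round $j$ has probability at least $1/2$ of landing in $Q_{j+1}$. Hence the number of items shipped in round $j$ before $|Q_{j+1}|$ first reaches $s$ is a stopping time governed by a Bernoulli($\geq 1/2$) process, and a Chernoff--Hoeffding bound applied to the first $4s$ shipments shows $\Pr[|Q_{j+1}|<s \text{ after } 4s \text{ sends}] \leq e^{-s/2}$. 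So with probability at least $1-e^{-s/2}$, round $j$ closes using at most $4s$ item messages plus the $m$ broadcast messages, i.e.\ $O(m+s)$ messages total.

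Next I would invoke Lemma \ref{LEM:P3W-ROUNDS}, which gives $R = O(\log(\beta N/s))$ rounds with probability at least $1-e^{-\Omega(s)}$. Taking a union bound over these $R$ rounds (for the per-round Chernoff event) plus the round-count event yields a total failure probability of at most $R \cdot e^{-s/2} + e^{-\Omega(s)} = e^{-\Omega(s)}$, since $s \geq \log R$ is well within the regime $s = \Theta(\eps^{-2}\log(1/\eps))$ we will choose. Conditioned on the good event, the total message count is $\sum_{j=1}^{R} O(m+s) = O((m+s)\log(\beta N/s))$, which is the stated bound.

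Finally, the choice $s = \Theta(\frac{1}{\eps^2}\log\frac{1}{\eps})$ comes from the accuracy requirement rather than the communication accounting: priority sampling with $s$ samples (combined with the deterministic inclusion of items whose weight exceeds $W/s$) yields an $(\eps W)$-additive estimator for every element's weight with failure probability $O(\eps)$, by the variance bound for priority sampling of \cite{DLT07} together with a standard union bound over the support of the heavy-hitter candidate set of size $O(1/\eps)$. The main subtlety I expect is not the per-round Chernoff step but verifying that the union bound across rounds is legitimate given that the Bernoulli trials in different rounds are defined relative to different thresholds $\tau_j$; this is fine because, conditioned on the sequence of thresholds, the coins $r_n$ used on each item are fresh and independent, so each round's stopping-time analysis is self-contained and the union bound applies directly.
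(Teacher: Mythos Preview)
Your proposal is correct and follows essentially the same approach as the paper: the paper's ``proof'' is just the sentence preceding the lemma, which combines the per-round $O(s)$ Chernoff bound (already derived in the paragraph above) with the $O(\log(\beta N/s))$ round count from Lemma~\ref{LEM:P3W-ROUNDS}, and you have simply made the union bound over rounds explicit. One small remark: your justification for the choice $s=\Theta(\eps^{-2}\log(1/\eps))$ via priority-sampling variance bounds plus a union bound over $O(1/\eps)$ candidates differs from the paper, which instead derives it in Lemma~\ref{LEM:P3W-EST} via a Chernoff--Hoeffding inequality for negatively correlated random variables---but that is tangential here, since Lemma~\ref{lem:P3w-mes} is purely a communication bound and the value of $s$ is merely asserted, not proved, in its statement.
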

\vspace{-2mm}

Note that each site only requires $O(1)$ space to store the threshold,
and the coordinator only requires $O(s)$ space.

%%%%%%%%%%%%%%%%%%%%%%%%%%%%
\Paragraph{Creating estimates.}
To estimate $f_e(A)$ at the coordinator, we use a set $S' = Q_j \cup
Q_{j+1}$ which is of size $|S'| = s' > s$.  Let $\hat \rho$ be the
priority of the smallest priority element in $S'$.  Let $S$ be all
elements in $S'$ except for this single smallest priority element.
For each of the $s'-1$ elements in $S$ assign them a weight $\bar w_i
= \max(w_i, \hat \rho)$, and we set $W_S = \sum_{a_i \in S} \bar w_i$.
Then via known priority sampling results~\cite{DLT07,Sze06}, it
follows that $\E[W_S] = W_A$ and that $(1-\eps) W_A \leq W_S \leq
(1+\eps) W_A$ with large probability (say with probability $1-\eps^2$,
based on variance bound $\Var[W_S] \leq W_A^2/(s'-2)$~\cite{Sze06} and
a Chebyshev bound).  Define $S_e = \{a_n \in S \mid a_n = e\}$ and
$f_e(S) = \sum_{a_n \in S_e} \bar w_n$.

\vspace{2mm} The following lemma, whose proof is in the appendix,
shows that the sample
maintained at the coordinator gives a good estimate on item
frequencies.  At a high-level, we use a special Chernoff-Hoeffding
bound for negatively correlated random variables~\cite{PS97} (since
the samples are without replacement), and then only need to consider
the points selected that have small weights, and thus have values in
$\{0, \hat \rho\}$.  \vspace{-2mm}
\begin{lemma} 
\label{LEM:P3W-EST}
With $s = \Theta((1/\eps^2) \log(1/\eps))$, the coordinator can use the
estimate from the sample $S$ such that, with large probability, for
each item $e \in [u]$, $\left|f_e(S) - f_e(A) \right| \leq \eps W_A$.
\end{lemma}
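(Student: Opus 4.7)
My approach will decompose $f_e(S) - f_e(A)$ into a sum of bounded mean-zero random variables, apply a concentration bound that accommodates the lack of independence among the samples, and then pin down the threshold $\hat\rho$ in terms of $W_A/s$.

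\textbf{Step 1: Decomposition.} Fix an element $e\in[u]$ and condition on the realized threshold $\hat\rho$. Partition the tuples in $A$ bearing label $e$ into \emph{heavy} items with $w_i\ge\hat\rho$ and \emph{light} items with $w_i<\hat\rho$. By construction of priority sampling, every heavy item is retained in $S$ with $\bar w_i=w_i$, so its contribution to $f_e(S)$ matches its contribution to $f_e(A)$ exactly and cancels in the difference. For each light item with $a_i=e$, write $Y_i=\hat\rho\cdot\mathbb{1}[\rho_i\ge\hat\rho]\in\{0,\hat\rho\}$; the standard priority-sampling inclusion probability $\Pr[\rho_i\ge\hat\rho\mid\hat\rho]=w_i/\hat\rho$ gives $\E[Y_i\mid\hat\rho]=w_i$. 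Hence, letting $L_e$ denote the set of light items with label $e$,
\[
f_e(S)-f_e(A) \;=\; \sum_{i\in L_e}(Y_i-w_i),
\]
a sum of mean-zero random variables each of range at most $\hat\rho$.

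\textbf{Step 2: Concentration.} Since priority sampling effectively retains the $s$ highest-priority items (a without-replacement style selection), the indicators $\mathbb{1}[\rho_i\ge\hat\rho]$ across $i$ are negatively associated. I will invoke the Chernoff-Hoeffding bound for negatively correlated variables due to Panconesi and Srinivasan~\cite{PS97}. Applied to the multiplicative form with mean $\mu_e=\sum_{i\in L_e}w_i/\hat\rho\le W_A/\hat\rho$, this yields
\[
\Pr\!\Big[\,|f_e(S)-f_e(A)|>\eps W_A\,\Big]\;\le\;2\exp\!\Big(-c\cdot\eps^2 W_A/\hat\rho\Big)
\]
for an absolute constant $c>0$, after substituting the deviation level $\eps W_A$ back through the relation $|f_e(S)-f_e(A)|=\hat\rho\cdot|\sum_{i\in L_e}\mathbb{1}[\rho_i\ge\hat\rho]-\mu_e|$.

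\textbf{Step 3: Bounding $\hat\rho$.} I will show $\hat\rho=O(W_A/s)$ with probability at least $1-e^{-\Omega(s)}$. The expected number of items whose priority exceeds a deterministic threshold $\tau$ is $\sum_i\min(1,w_i/\tau)\le W_A/\tau$; setting $\tau=\Theta(W_A/s)$ and applying a standard Chernoff bound on the negatively associated indicators shows that at most $s$ items exceed $\tau$ with high probability, which forces $\hat\rho\le\tau$.

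\textbf{Step 4: Calibration and closing.} Combining Steps~2 and~3 yields a failure probability of $2\exp(-\Omega(\eps^2 s))$. Choosing $s=\Theta((1/\eps^2)\log(1/\eps))$ drives this below $\eps^{C}$ for any desired constant $C$, and a union bound over the at most $n+s$ distinct labels appearing in $A\cup S$ (the only labels for which either side is nonzero) delivers the simultaneous guarantee claimed.

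\textbf{Main obstacle.} The chief technical hurdle is handling the dependence of the indicators $\mathbb{1}[\rho_i\ge\hat\rho]$ on the random threshold $\hat\rho$ itself, which is circular if treated naively. The clean route is to condition on the order statistics of $\{\rho_i\}$ (equivalently on $\hat\rho$ and the identity of the $s$-th largest priority) and then exploit the negative association of the remaining inclusion events via \cite{PS97}. A secondary subtlety is that the uniform bound over $e$ should avoid a $\log u$ factor; this is handled by restricting the union bound to labels that actually appear in the stream or in the sample, of which there are only polynomially many in $n$, absorbable into the constant in $s$.
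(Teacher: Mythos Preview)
Your proposal is essentially the same approach as the paper's: heavy/light split, unbiasedness of the priority-sampling estimator on the light part, and a Chernoff--Hoeffding inequality for negatively correlated indicators via~\cite{PS97}. Two minor deviations are worth noting. First, where you bound $\hat\rho=O(W_A/s)$ by a fresh Chernoff argument in Step~3, the paper instead works with the round threshold $\tau_j$ (so the heavy/light split is against a quantity that is deterministic given the round) and bounds the range directly from the already-established priority-sampling concentration $W_S\le(1+\eps)W_A$, which gives $|S|\,\tau_j\le(1+\eps)W_A$ since every $\bar w_n\ge\hat\rho\ge\tau_j$; this sidesteps the circularity you flag in your ``Main obstacle'' paragraph. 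Second, your union bound in Step~4 is unnecessary: the lemma is stated per item $e$, and the paper proves it as such. Your claim that a union over up to $n+s$ labels is ``absorbable into the constant in~$s$'' is not correct (it would force an extra $\log n$ in $s$), but since the simultaneous guarantee is not what is being asserted, this does not affect the result.
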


\vspace{-4mm}
\begin{theorem}
\label{thm:P3w-mes}
Protocol 3 (P3) sends $O((m+s) \log\frac{\beta N}{s})$ messages with large
probability; It gets a set $S$ of size $s =\Theta(\frac{1}{\eps^2} \log
\frac{1}{\eps})$ so that $|f_e(S) - f_e(A)| \leq \eps W$.
\end{theorem}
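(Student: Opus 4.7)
\textbf{Proof plan for Theorem \ref{thm:P3w-mes}.} The theorem is essentially a packaging of the two preceding lemmas, so the plan is to combine them via a union bound after picking the sample size $s$ appropriately. First I would recall that Lemma \ref{lem:P3w-mes} already gives the communication bound: with probability at least $1 - e^{-\Omega(s)}$, the protocol sends $O((m+s) \log(\beta N / s))$ messages total across all rounds. This accounts for (i) the $O(s)$ messages per round from sites to the coordinator, (ii) the $O(m)$ broadcast messages per round from the coordinator, and (iii) the $O(\log(\beta N /s))$ bound on the number of rounds from Lemma \ref{LEM:P3W-ROUNDS}.

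Next I would invoke Lemma \ref{LEM:P3W-EST} for the frequency-estimate guarantee: conditioning on being in some round $j$ with a sample $S$ of size at least $s$, we have $|f_e(S) - f_e(A)| \leq \eps W$ for every $e \in [u]$ with high probability (say $1 - \eps^2$ by the Chebyshev argument referenced in the discussion, or stronger by the negatively correlated Chernoff bound of Panconesi--Srinivasan used there). Setting $s = \Theta(\tfrac{1}{\eps^2} \log \tfrac{1}{\eps})$ is precisely what Lemma \ref{LEM:P3W-EST} requires, and this same choice keeps the communication bound at $O((m + \tfrac{1}{\eps^2} \log \tfrac{1}{\eps}) \log(\beta N/s))$, as stated.

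Finally I would apply a union bound to merge the two high-probability events: the event that the round structure and message count proceeds as analyzed (failure probability $e^{-\Omega(s)}$ from Lemmas \ref{LEM:P3W-ROUNDS} and \ref{lem:P3w-mes}) and the event that the sample $S$ produces $\eps W$-accurate frequency estimates for all elements (failure probability small by Lemma \ref{LEM:P3W-EST} together with a union bound over the at most $O(\log(\beta N/s))$ rounds in which we might query). Since $s = \Omega(\tfrac{1}{\eps^2} \log \tfrac{1}{\eps})$, the failure probability $e^{-\Omega(s)}$ is polynomially small in $1/\eps$, and the combined event still holds with large probability. The main subtlety, and the only step that is not a direct citation, is checking that the accuracy guarantee of Lemma \ref{LEM:P3W-EST} holds uniformly over all times (not just a single queried instant): this is handled because the sample $S = Q_j \cup Q_{j+1}$ evolves only between round boundaries, so a union bound over the $O(\log(\beta N/s))$ rounds is sufficient and absorbs into the same asymptotic failure probability. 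Together these yield the stated bounds, completing the proof.
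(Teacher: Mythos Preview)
Your proposal is correct and matches the paper's approach: the paper states Theorem~\ref{thm:P3w-mes} without a separate proof, treating it as the immediate combination of Lemma~\ref{lem:P3w-mes} (communication) and Lemma~\ref{LEM:P3W-EST} (accuracy) with the choice $s = \Theta(\tfrac{1}{\eps^2}\log\tfrac{1}{\eps})$. Your added care about uniformity over rounds via a union bound is more explicit than the paper itself, but entirely in the same spirit.
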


%%%%%%%%%%%%%%%%%%%%%%%%%%%%%%%%%
\subsubsection{Sampling With Replacement}
\label{sec:P3w-SwR}
We can show similar results on $s$ samples \emph{with replacement},
using $s$ independent samplers. In round $j$, for each element
$(a_n,w_n)$ arriving at a local site, the site generates $s$
independent $r_n$ values, and thus $s$ priorities $\rho_n$. If any of
them is larger than $\tau_j$, then the site forwards it to
coordinator, along with the index (or indices) of success.

For each of $s$ independent samplers, say for sampler $t \in [s]$, the
coordinator maintains the top $2$ priorities $\rho^{(1)}_t$ and
$\rho^{(2)}_t$, $\rho^{(1)}_t > \rho^{(2)}_t$, among all it received.
It also keeps the element information $a_t$ associated with
$\rho^{(1)}$.  For the sampler $i \in [s]$, the coordinator keeps a
weight $\bar w_i = \rho^{(2)}_i$.  One can show that $\E[\bar w_i] =
W$, the total weight of the entire stream~\cite{DLT07}.  We improve
the global estimate as $\hat W = (1/s) \sum_{i=1}^s \bar w_i$, and
then assign each element $a_i$ the same weight $\hat w_i = \hat W/s$.
Now $\E[\sum_{i=1}^s \hat w_i] = W$, and each $a_i$ is an independent
sample (with replacement) chosen proportional to its weight.  Then
setting $s = O((1/\eps^2) \log (1/\eps))$ it is known that these
samples can be used to estimate all heavy hitters within $\eps W$ with
probability at least $1-e^{-\Omega(s)}$.

The $j$th round terminates when the $\rho^{(2)}_i$ for all $i$ is
larger than $2\tau_j$. At this point, coordinator sets $\tau_{j+1} = 2
\tau_j$, informs all sites of the new threshold and begins the $(j+1)$th
round.

\Paragraph{Communication analysis.} 
Since this protocol is an adaptation of existing
results~\cite{CMYZ12}, its communication is $O((m + s \log s) \log
(\beta N) = O((m + \frac{1}{\eps^2}\log^2 \frac{1}{\eps}) \log (\beta
N))$
messages.
This result doesn't improve the error bounds or communication bounds
with respect to the without replacement sampler described above, as is
confirmed in Section \ref{sec:exp}.  Also in terms of running time (without
parallelism at each site), sampling {\em without replacement} will be
better. 

%%%%%%%%%%%%%%%%%%%%%%%%%%%%%%%%%%%%%%%%%%%%%
\subsection{Weighted Heavy-Hitters Protocol 4}
\label{sec:P4w}
This protocol is inspired by the unweighted case from Huang
\etal~\cite{huang2012randomized}.  Each site maintains an estimate of
the total weight $\hat W$ that is provided by the coordinator and
always satisfies $\hat W \leq W \leq 2 \hat W$, with high
probability.  It then sets a probability $p = 2\sqrt{m}/(\eps \hat W)$.  Now
given a new element $(a,w)$ with some probability $\bar p$, it sends
to the coordinator $(e, \bar w_{e,j} = f_e(A_j))$ for $a=e \in [u]$;
this is the total weight of all items in its stream that equal element
$e$.  Finally the coordinator needs to adjust each $\bar w_{e,j}$ by
adding $1/p-1$ (for elements that have been witnessed) since that
is the expected number of items with element $e$ in the stream until 
the next update for $e$.

If $w$ is an integer, then one option is to pretend it is actually $w$
distinct elements with weight $1$.  For each of the $w$ elements we
create a random variable $Z_i$ that is $1$ with probability $p$ and
$0$ otherwise.  If \emph{any} $Z_i =1$, then we send $f_e(A_j)$.
However this is inefficient (say if $w = \beta = 1000$), and only
works with integer weights.

Instead we notice that at least one $Z_i$ is $1$ if none are $0$, with
probability $1 - (1-p)^w \approx 1 - e^{-pw}$.  So in the integer
case, we can set $\bar p = 1-(1-p)^w$, and then since we send a more
accurate estimate of $f_e$ (as it essentially comes later in the
stream) we can directly apply the analysis from Huang
\etal~\cite{huang2012randomized}.  To deal with non integer weights,
we set $\bar p = 1-e^{-p w}$, and describe the approach formally on a
site in Algorithm \ref{alg:P4w-site}.

Notice that the probability of sending an item is asymptotically the
same in the case that $w=1$, and it is smaller otherwise (since we
send at most one update $\bar w_{e,j}$ per batch).  Hence the
communication bound is asymptotically the same, except for the number
of rounds.  Since the weight is broadcast to the sites from the
coordinator whenever it doubles, and now the total weight can be
$\beta N$ instead of $N$, the number of rounds is $O(\log (\beta N))$
and the total communication is $O((\sqrt{m}/\eps) \log (\beta N))$
with high probability.

\begin{algorithm}
\caption{\label{alg:P4w-site} P4: Tracking of heavy-hitters (at site $S_j$)}
\begin{algorithmic}
  \STATE Given weight $\hat W$ from $C$, set $p = 2\sqrt{m}/(\eps \hat W)$.

  \FOR {each item $(a,w)$ it receives} 
  \STATE For $a = e$ update $f_e(A_j) := f_e(A_j) + w$.  
  \STATE Set $\bar p = 1 - e^{-pw}$.
  \STATE With probability $\bar p$ send $\bar w_{e,j} = f_e(A_j)$ to $C$.
\ENDFOR
\end{algorithmic}
\end{algorithm}

When the coordinator is sent an estimate $\bar w_{e,j}$ of the total
weight of element $e$ at site $j$, it needs to update this estimate
slightly as in Huang \etal, so that it has the right expected value.
It sets $\hat w_{e,j} = \bar w_{e,j} + 1/p$, where again $p = 2
\sqrt{m}/(\eps \hat W)$; $\hat w_{e,j} = 0$ if no such
messages are sent.  The coordinator then estimates each $f_e(A)$ as
$\hat W_e = \sum_{j=1}^m \hat w_e$.

We first provide intuition how the analysis works, if we used $\bar p
= 1 - (1-p)^w$ (i.e. $\approx 1 - e^{-pw}$) and $w$ is an integer.  In this
case, we can consider simulating the process with $w$ items of weight
$1$; then it is identical to the unweighted algorithm, except we
always send $\bar w_{e,j}$ at then end of the batch of $w$ items.
This means the expected number until the next update is
still $1/p-1$, and the variance of $1/p^2$ and error bounds of Huang
\etal~\cite{huang2012randomized} still hold.

\vspace{-2mm}
\begin{lemma}
  The above protocol guarantees that $|f_e(A) - \hat W_e | \leq
  \eps W$ on the coordinator, with probability at least $0.75$.
\label{lem:P4w-err}
\end{lemma}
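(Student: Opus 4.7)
The plan is to show that $\hat W_e = \sum_{j=1}^m \hat w_{e,j}$ is an (essentially) unbiased estimator of $f_e(A) = \sum_{j=1}^m f_e(A_j)$ whose variance is at most $(\eps W/2)^2$, and then apply Chebyshev's inequality. The intuition mirrors Huang et al.~\cite{huang2012randomized}: the correction $+1/p$ added at the coordinator cancels, in expectation, the ``gap'' between the most recent value $\bar w_{e,j}$ reported to the coordinator and the current $f_e(A_j)$ at the site, and this gap has variance $O(1/p^2)$.

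First I would analyze a single site $j$ and a fixed element $e$. The key observation is that $\bar p = 1 - e^{-pw}$ is exactly the probability that a rate-$p$ Poisson process on the weight axis has at least one event in an interval of length $w$. So I couple the randomness at site $j$ with an independent rate-$p$ Poisson process $N_j$ on the cumulative $e$-weight axis $[0, f_e(A_j)]$: a ``send'' is triggered at the end of the arriving item whose weight interval contains a new $N_j$-event, and $\bar w_{e,j}$ is the value of $f_e(A_j)$ at the most recent such trigger (or $0$ if none). Under this coupling, the gap $G = f_e(A_j) - \bar w_{e,j}$ is controlled by the age of $N_j$ at the query time, plus at most one item's weight of slack from aligning to item boundaries. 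Since the inter-arrival times of $N_j$ are $\mathrm{Exp}(p)$ with mean $1/p$ and variance $1/p^2$, this gives $\E[G] \approx 1/p$ and $\Var[G] \leq 1/p^2$ up to lower-order corrections (and in the integer-weight case reduces exactly to Huang et al.\ by treating a weight-$w$ item as $w$ unit-weight copies). Hence $\hat w_{e,j}$ satisfies $\E[\hat w_{e,j}] = f_e(A_j)$ and $\Var[\hat w_{e,j}] \leq 1/p^2$.

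Second, because the $m$ sites draw their coin flips independently, I combine the per-site variance bounds and substitute $p = 2\sqrt{m}/(\eps \hat W)$:
\begin{equation*}
\Var[\hat W_e] \;=\; \sum_{j=1}^m \Var[\hat w_{e,j}] \;\leq\; \frac{m}{p^2} \;=\; \frac{\eps^2 \hat W^2}{4} \;\leq\; \frac{\eps^2 W^2}{4},
\end{equation*}
where the last inequality uses the coordinator's invariant $\hat W \leq W$. Chebyshev's inequality then yields
\begin{equation*}
\Pr\!\left[\,|\hat W_e - f_e(A)| > \eps W\,\right] \;\leq\; \frac{\Var[\hat W_e]}{(\eps W)^2} \;\leq\; \frac{1}{4},
\end{equation*}
which is the claimed 0.75 success probability.

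The main obstacle is making the Poisson-process coupling watertight for non-integer weights: one must verify that batching (the site can only fire a send at an item boundary, not at the exact location of the Poisson event inside that item) does not spoil the $1/p^2$ variance bound or introduce more than $O(1/p)$ bias. Since the extra slack per interval is bounded by one item's weight (at most $\beta$), it only contributes a lower-order term that is absorbed by $1/p^2$ in the overall bound; once this is checked, the per-site bound matches Huang et al.\ and the Chebyshev step is routine.
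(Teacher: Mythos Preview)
Your proposal is correct and follows the same overall skeleton as the paper: establish per-site bounds $\E[\hat w_{e,j}] = f_e(A_j)$ and $\Var[\hat w_{e,j}] \leq 1/p^2$, sum the variances over the $m$ independent sites to get $\Var[\hat W_e] \leq m/p^2 = \eps^2 \hat W^2/4 \leq \eps^2 W^2/4$, and then apply Chebyshev to obtain the $0.75$ bound. The only difference is in how the per-site moments are justified for non-integer weights. The paper argues by discretization: it scales every weight by $10^k$ so that all weights become integers, invokes Huang \etal\ verbatim on the resulting unit-weight stream with $p_k = p/10^k$, and then lets $k \to \infty$, using $(1-p_k)^{w_k} \to e^{-pw}$ to match the algorithm's actual sampling probability and to show that the mean $f_e(A_j)-1/p$ and variance $1/p^2$ survive the limit. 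Your Poisson-process coupling is the continuous version of exactly this limit (the rate-$p$ Poisson process is the scaling limit of the Bernoulli($p_k$) process on the $10^{-k}$ grid), so the two arguments are really the same computation in different coordinates. Your formulation is arguably cleaner because it names the limiting object directly and makes the ``$+1/p$'' correction transparent as the mean of the exponential age; the paper's formulation has the advantage that, once discretized, it is literally the setting of Huang \etal, so nothing new needs to be proved. The batching obstacle you flag is the only place the two presentations diverge in detail, and the paper sidesteps it the same way you do: in the discretized picture the send is issued at the end of the batch, which only helps (the reported count is at least as fresh as it would be at the underlying Bernoulli/Poisson event), so the $1/p^2$ variance bound carries over unchanged.
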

\begin{proof}
  Consider a value of $k$ large enough so that $w \cdot 10^k$ is
  always an integer (i.e., the precision of $w$ in a system is at most
  $k$ digits past decimal point).  Then we can hypothetically simulate
  the unweighted case using $w_k = w \cdot 10^k$ points.  Since now
  $\hat W$ represents $10^k$ times as many unweighted elements, we
  have $p_k = p/10^k = \sqrt{m}/(\eps \hat W 10^k)$.  This means
  the probability we send an update should be $1-(1-p_k)^{w_k}$ in
  this setting.

  Now use that for any $x$ that $\lim_{n \to \infty} (1-\frac{x}{n})^n
  = e^{-x}$.  Thus setting $n = w_k$ and $x = p_k \cdot w_k = (p/10^k)
  (w 10^k) = pw$ we have $\lim_{k \to \infty} 1 - (1-p_k)^{w_k} =
  1-e^{-pw}$.

  Next we need to see how this simulated process affects the error on
  the coordinator.  Using results from Huang \etal
  \cite{huang2012randomized}, where they send an estimate $\bar
  w_{e,j}$, the expected value $\E[\bar w_{e,j}] = f_e(A_j) - 1/p + 1$
  at any point afterwards where that was the last update.  This
  estimates the count of weight $1$ objects, so in the case where they
  are weight $10^{-k}$ objects the estimate of $f_e(A_j)^{(k)} =
  f_e(A_j) 10^{k}$ is using $\bar w_{e,j}^{(k)} = \bar w_{e,j}
  10^{k}$.  Then, in the limiting case (as $k \to \infty$), we adjust
  the weights as follows.
\begin{align*}
\E[\bar w_{e,j}] 
& = 
\E[\bar w_{e,j}^{(k)}] 10^{-k} 
= 
(f_e(A_j)^{(k)} - 1/p_k + 1) \cdot 10^{-k}
\\ &= 
(f_e(A_j) 10^k - \frac{10^k}{p} + 1) 10^{-k}
=
f_e(A_j) - \frac{1}{p} + 10^{-k},
\end{align*}
so as $\lim_{k \to \infty} \E[\bar w_{e,j}] = f_e(A_j) - 1/p$.  So our
procedure has the right expected value.  Furthermore, it also follows
that the variance is still $1/p^2$, and thus the error bound from
\cite{huang2012randomized} that any $|f_e(A) - \hat W_e| \leq \eps W$
with probability at least $0.75$ still holds.
\end{proof}

\vspace{-5mm}
\begin{theorem}
Protocol 4 (P4) sends $O(\frac{\sqrt{m}}{\eps} \log (\beta N))$ total messages and with probability $0.75$ has $|f_e(A) - \hat W_e| \leq \eps W$.    
\end{theorem}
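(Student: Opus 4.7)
The theorem has two assertions: a total message bound of $O(\tfrac{\sqrt{m}}{\eps}\log(\beta N))$ and the error guarantee $|f_e(A) - \hat W_e| \leq \eps W$ with probability at least $0.75$. The error guarantee is already established in Lemma~\ref{lem:P4w-err} via the limiting argument that simulates non-integer weights by scaling to integer-weight items and applying the unweighted analysis of Huang~\etal~\cite{huang2012randomized}. So the remaining work is the communication bound, and I would structure its proof in three parts.

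First, I would bound the number of \emph{rounds}. A round is delimited by the coordinator broadcasting a new estimate $\hat W$, which happens whenever $\hat W$ doubles (so that the invariant $\hat W \leq W \leq 2\hat W$ is maintained w.h.p., as in Huang~\etal). Since $W$ starts $O(1)$ and $W \leq \beta N$ at query time, the number of rounds is at most $R = O(\log(\beta N))$. Each round also accounts for $O(m)$ broadcast messages, yielding $O(m \log(\beta N))$ broadcast communication; this is dominated by the site-to-coordinator traffic in the regime $1/\eps \geq \sqrt{m}$ in which the theorem is stated (same convention as Huang~\etal).

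Second, I would bound the expected number of site-to-coordinator messages within a fixed round. In such a round, every site uses the same $p = 2\sqrt{m}/(\eps \hat W)$, and an item $(a,w)$ triggers a message with probability $\bar p = 1 - e^{-pw} \leq pw$. Summing this upper bound across all items processed at all sites during the round, and using that the total weight added in a round is $O(\hat W)$ (the invariant forces the round to end once $W$ passes $2\hat W$), the expected number of messages is at most $p \cdot O(\hat W) = O(\sqrt{m}/\eps)$. Summing over the $R$ rounds gives an expected $O(\tfrac{\sqrt{m}}{\eps} \log(\beta N))$ messages.

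Third, I would convert the expectation bound into a high-probability bound. Since each item's message decision is an independent Bernoulli, I would apply a Chernoff bound per round, taking a union bound over the $O(\log(\beta N))$ rounds. The hard part, which I would treat most carefully, is coupling this concentration argument with the coordinator's estimate invariant $\hat W \leq W \leq 2\hat W$: this invariant is itself only maintained w.h.p., and all the per-round probabilities $p$ depend on it being accurate. I would therefore condition on the high-probability event that the invariant holds throughout (inherited directly from the unweighted analysis of Huang~\etal\ via the limiting argument of Lemma~\ref{lem:P4w-err}), and then carry out the Chernoff/union-bound on top. Combining everything yields the claimed $O(\tfrac{\sqrt{m}}{\eps} \log(\beta N))$ total messages.
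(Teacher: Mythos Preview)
Your proposal is correct and structurally matches the paper: the error guarantee is delegated to Lemma~\ref{lem:P4w-err}, the round count comes from doublings of $\hat W$ up to $\beta N$, and the per-round site-to-coordinator traffic is $O(\sqrt{m}/\eps)$, with broadcasts absorbed under the usual $1/\eps \geq \sqrt{m}$ convention.

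The one methodological difference worth noting is how the per-round message count is bounded. The paper argues by comparison: since $\bar p = 1 - e^{-pw}$ never exceeds the probability $1-(1-p)^w$ that a simulated batch of $w$ unit-weight items would trigger at least one send, the weighted protocol sends no more messages than the unweighted protocol of Huang~\etal\ on a stream of total length $\sum_i w_i \leq \beta N$, and their $O((\sqrt{m}/\eps)\log N)$ bound transfers with $N$ replaced by $\beta N$. You instead bound directly via $\bar p \leq pw$, sum to $p\cdot O(\hat W) = O(\sqrt{m}/\eps)$ per round, and then apply Chernoff plus a union bound over rounds. Your route is slightly more self-contained (it does not lean on the unweighted black box for the message count), while the paper's comparison argument is terser and reuses the existing high-probability statement wholesale; both yield the same bound.
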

\vspace{-2mm}

The bound can be made to hold with probability $1-\delta$ by running
$\log(2/\delta)$ copies and taking the median.  The space on each site
can be reduced to $O(1/\eps)$ by using a weighted variant of the
space-saving algorithm~\cite{metwally2006integrated}; the space on the
coordinator can be made $O(m/\eps)$ by just keeping weights for which
$\bar w_{i,e} \geq 2 \eps \hat W_j$, where $\hat W_j$ is a
$2$-approximation of the weight on site $j$.

\section{Distributed Matrix Tracking}
\label{sec:matrix}

We will next extend weighted frequent item estimation protocols to
solve the problem of tracking an approximation to a distributed
matrix. Each element $a_n$ of a stream for any site is now a row of
the matrix. As we will show soon in our analysis, it will be
convenient to associate a weight with each element defined as the
squared norm of the row, i.e., $w_n=\|a_n\|^2$. Hence, for reasons
outlined in Section \ref{sec:weights}, we assume in our analysis that the squared norm of
every row is bounded by a value $\beta$.  There is a
designated coordinator $C$ who has a two-way communication channel
with each site and whose job is to maintain a much smaller matrix $B$
as an approximation to $A$ such that for any unit vector $x\in \R^{d
  \times 1}$ (with $\|x\|=1$) we can ensure that:
\[
| \|A x\|^2 - \| B x\|^2 | \leq \eps \| A \| _F^2.
\]
Note that the covariance of $A$ is captured as $A^T A$ (where ${}^T$
represents a matrix transpose), and that the above expression is
equivalent to
\[
\|A^T A - B^T B\|_2 \leq \eps \|A\|_F^2.
\]
Thus, the approximation guarantee we preserve shows that the
covariance of $A$ is well-approximated by $B$.  And the covariance is
the critical property of a matrix that needs to be (approximately)
preserved as the basis for most downstream data analysis; e.g., for
PCA or LSI.

Our measures of complexity will be the communication cost and the space
used at each site to process the stream.  We measure communication in
terms of the number of messages, where each message is a row of length $d$, the same as the input stream.  Clearly, the space and
computational cost at each site and coordinator is also important, but
since we show that all proposed protocols can be run as streaming
algorithms at each site, and will thus not be space or computation
intensive.

%%%%%%%%%%%%%%%%%%%%%%%%%%%%%%%%%%%%%%%%%%%%%%%%%%%%%%%%%%%%%%%%%%
\Paragraph{Overview of protocols.} 
The protocols for matrix tracking mirror those of weighted item
frequency tracking.  This starts with a similar batched streaming
baseline P1. Protocol P2 again reduces the total communication bound,
where a global threshold is given for each ``direction'' instead of
the total squared Frobenious norm.  Both P1 and P2 are deterministic.
Then matrix tracking protocol P3 randomly selects rows with
probability proportional to their squared norm and maintains an
$\eps$-sample at the coordinator. Using this sample set, we can derive
a good approximation.

Given the success of protocols P1, P2, and P3, it is tempting to also
extend protocol P4 for item frequency tracking in Section
\ref{sec:P4w} to distributed matrix tracking.  However, unlike the
other protocols, we can show that the approach described in Algorithm
\ref{alg:P4w-site} cannot be extended to matrices in any
straightforward way while still maintaining the same communication
advantages it has (in theory) for the weighted heavy-hitters case.
Due to lack of space, we defer this explanation, and the related experimental
results to the appendix section.

%%%%%%%%%%%%%%%%%%%%%%%%%%%%%%%%%%%%%%%%%%%%%%%%%%%%%%%%%%%%%%%%%%
\subsection{Distributed Matrix Tracking Protocol 1}\
\label{sec:P1m}
We again begin with a batched version of a streaming algorithm, shown
as Algorithm \ref{alg:P1m-site} and \ref{alg:P1m-coord}.  That is we
run a streaming algorithm (e.g. Frequent Directions~\cite{Lib12},
labeled \textsf{FD}, with error $\eps' = \eps/2$) on each site, and
periodically send the contents of the memory to the coordinator.
Again this is triggered when the total weight (in this case squared
norm) has increased by $(\eps/2m) W$.

\begin{algorithm}
\caption{\label{alg:P1m-site} P1: Deterministic Matrix Tracking  (at $S_i$)}
\begin{algorithmic}
  \FOR{$(a_n, w_n)$ in round $j$}
    \STATE Update $B_i \leftarrow \textsf{FD}_{\eps'}(B_i,a_n)$; and $F_i \pluseq \|a_n\|^2$.    
    \IF {($F_i \geq \tau = (\eps/2m) \hat F$)}
      \STATE Send $(B_i,F_i)$ to coordinator; make $B_i,F_i$ empty.  
    \ENDIF
  \ENDFOR
\end{algorithmic}
\end{algorithm}

\begin{algorithm}
\caption{\label{alg:P1m-coord} P1: Deterministic Matrix Tracking (at $C$)}
\begin{algorithmic}
  \STATE On input $(B_i,F_i)$:
  \STATE Update sketch $B \leftarrow \textsf{Merge}_{\eps'}(B, B_i)$ and $F_C \pluseq F_i$.  
  \IF {($F_C/\hat F > 1+\eps/2$)}
    \STATE Update $\hat F \leftarrow F_C$, and broadcast $\hat F$ to all sites.  
  \ENDIF
\end{algorithmic}
\end{algorithm}

As with the similar frequency tracking algorithm, based on Frequent
Directions~\cite{Lib12} satisfying the mergeable
property~\cite{ACHPWY12}, we can show this maintains at most $\eps
\|A\|_F^2$ total error at all times, and requires a total of
$O((m/\eps^2) \log (\beta N))$ total rows of communication.

%%%%%%%%%%%%%%%%%%%%%%%%%%%%%%%%%%%%%%%%%%%%%%%%%%%%%%%%%%%%%%%%%%
%%%%%%%%%%%%%%%%%%%%%%%%%%%%%%%%%%%%%%%%%%%%%%%%%%%%%%%%%%%%%%%%%%
\subsection{Distributed Matrix Tracking Protocol 2} 
\label{sec:P2m}
Again, this protocol is based very closely on a weighted heavy-hitters
protocol, this time the one from Section \ref{sec:P2w}.

Each site $S_j$ maintains a matrix $B_j$ of the rows seen so far at
this site and not sent to coordinator.  In addition, it maintains
$\hat F$, an estimate of $\|A\|_F^2$, and $F_j = \|B_j\|_F^2$,
denoting the total squared Frobenius norm received since its last
communication to $C$ about $\hat F$.  The coordinator $C$ maintains a
matrix $B$ approximating $A$, and $\hat{F}$, an
$\eps$-approximation of $\|A\|_F^2$.

Initially each $\hat F$ is set to zero for all sites.  When site $j$
receives a new row, it calls Algorithm \ref{alg:det-tracking-site},
which basically sends $\|B_j x\|^2$ in direction $x$ when it is
greater than some threshold provided by the coordinator, if one
exists.

\begin{algorithm}
  \caption{\label{alg:det-tracking-site} P2: Deterministic Matrix Tracking  (at $S_j$)}
\begin{algorithmic}
  \STATE $F_j \pluseq \| a_i\|^2$
  \IF {($F_j \geq \frac{\eps}{m} \hat{F}$)}
    \STATE Send $F_j$ to coordinator; set $F_j =0$.
  \ENDIF
  \STATE  Set $B_j \leftarrow [B_j ; a_i]$
  \STATE  $[U, \Sigma, V] = \svd(B_j)$
  \FOR {($(v_\ell, \sigma_\ell)$ such that $\sigma_\ell^2 \geq \frac{\eps}{m} \hat{F}$)}
    \STATE Send $\sigma_\ell v_\ell$ to coordinator; set $\sigma_\ell = 0$.
  \ENDFOR
  \STATE $B_j = U \Sigma V^T$
\end{algorithmic}
\end{algorithm}
\vspace{-1mm}
\begin{algorithm}
  \caption{\label{alg:det-tracking-coord} P2: Deterministic Matrix Tracking  (at $C$)}
\begin{algorithmic}
	\STATE On a scalar message $F_j$ from site $S_j$
	\STATE Set $\hat F \pluseq F_j$ and $\#\s{msg} \pluseq 1$.  
	\IF {($\#\s{msg} \geq m$)}
	\STATE Set $\#\s{msg}=0$ and broadcast $\hat F$ to all sites.  
	\ENDIF
	\STATE On a vector message $r = \sigma v$: append $B \leftarrow [B; r]$
\end{algorithmic}
\end{algorithm}

On the coordinator side, it either receives a vector form message
$\sigma v$, or a scalar message $F_j$.  For a scalar $F_j$,
it adds it to $\hat F$.  After at most $m$ such scalar messages, it
broadcasts $\hat F$ to all sites.  For vector message $r = \sigma
v$, the coordinator updates $B$ by appending $r$ to $B \leftarrow
[B; r]$. The coordinator's protocol is summarized in Algorithm
\ref{alg:det-tracking-coord}.

\vspace{-3mm}
\begin{lemma}
  At all times the coordinator maintains $B$ such that for any unit vector $x$
\begin{equation}
\label{eq:invariant2}
\| A x\|^2 - \eps \| A\|_F^2 \leq \| B x\|^2 \leq \| A x\|^2
\end{equation}
\end{lemma}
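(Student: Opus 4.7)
The plan is to maintain, at every instant, a per-site invariant that cleanly splits $\|A_j x\|^2$ into the contribution still sitting in the local sketch $B_j$ and the contribution already forwarded to the coordinator, and then sum these invariants across sites to recover \eqref{eq:invariant2}.

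Concretely, for each site $S_j$ let $A_j$ denote the submatrix of all rows received so far at $S_j$, and let $R_j$ be the set of vector messages $r = \sigma_\ell v_\ell$ already sent by $S_j$. I will prove inductively (on the stream events at $S_j$) that for every vector $x$,
\[
\|A_j x\|^2 \;=\; \|B_j x\|^2 \;+\; \sum_{r \in R_j} \langle r, x\rangle^2.
\]
The base case is vacuous since $A_j$, $B_j$ and $R_j$ are empty. For the inductive step, appending $a_i$ to $B_j$ changes $\|B_j x\|^2$ by exactly $\langle a_i, x\rangle^2$, matching the change in $\|A_j x\|^2$. The delicate step is the shrinkage: writing the enlarged sketch as $U\Sigma V^T$, we have $\|[B_j;a_i]\,x\|^2=\sum_\ell \sigma_\ell^2\langle v_\ell,x\rangle^2$; for every $\ell$ with $\sigma_\ell^2\geq\tau:=(\eps/m)\hat F$, the algorithm sends $r=\sigma_\ell v_\ell$ (contributing $\sigma_\ell^2\langle v_\ell,x\rangle^2$ to the right-hand side) and zeros out $\sigma_\ell$ in $\Sigma$ (removing the same term from $\|B_j x\|^2$). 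Thus both sides change identically and the invariant persists.

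Summing the invariant over all $m$ sites, and noting that all rows sent as vector messages end up in $B$ at the coordinator, gives
\[
\|A x\|^2 \;=\; \|B x\|^2 \;+\; \sum_{j=1}^m \|B_j x\|^2.
\]
Since each $\|B_j x\|^2\geq 0$, the upper bound $\|Bx\|^2\leq\|Ax\|^2$ is immediate. For the lower bound I will bound each residual. After the shrinkage step at $S_j$, every remaining squared singular value satisfies $\sigma_\ell^2<\tau$, so by orthonormality of the right singular vectors and $\|x\|=1$,
\[
\|B_j x\|^2 \;=\; \sum_\ell \sigma_\ell^2\langle v_\ell,x\rangle^2 \;\leq\; \tau\sum_\ell\langle v_\ell,x\rangle^2 \;\leq\; \tau.
\]
Summing over $j$ yields $\sum_j\|B_j x\|^2\leq m\tau=\eps\hat F$.

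It remains to argue $\hat F\leq\|A\|_F^2$, which I view as the main bookkeeping obstacle. The coordinator's $\hat F$ is the sum of all scalar messages $F_j$ it has received; each such $F_j$ accounts only for $\|a_i\|^2$ of rows processed at $S_j$ since the last scalar send, and each row contributes to at most one such scalar message over the entire execution. Hence $\hat F \le \sum_{a_i\in A}\|a_i\|^2 = \|A\|_F^2$. Combining this with the previous display gives $\sum_j\|B_j x\|^2\leq \eps\|A\|_F^2$, and the lower bound in \eqref{eq:invariant2} follows. \qed
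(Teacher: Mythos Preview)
Your proof is correct and follows essentially the same approach as the paper's own proof: establish the exact decomposition $\|Ax\|^2 = \|Bx\|^2 + \sum_{j}\|B_j x\|^2$ via the orthogonality of the SVD, get the upper bound for free, and bound each residual $\|B_j x\|^2 \le (\eps/m)\hat F$ using that all surviving squared singular values lie below the threshold together with $\hat F \le \|A\|_F^2$. If anything, your write-up is slightly more careful than the paper's in spelling out the induction for the decomposition and the bookkeeping argument for $\hat F \le \|A\|_F^2$.
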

\begin{proof}
  To prove this, we also need to show it maintains another property on
  the total squared Frobenious norm:
\begin{equation}
\label{eq:invariant1}
(1-2\eps)\| A\|_F^2  < \hat{F} \leq \| A\|_F^2.  
\end{equation}
This follows from the analysis in Section \ref{sec:P2w} since the
squared Frobenius norm is additive, just like weights.  The following
analysis for the full lemma is also similar, but requires more care in
dealing with matrices. First, for any $x$ we have
\[
\|A x\|^2 = \|B x\|^2 + \sum_{j=1}^m \|B_j x\|^2.
\]
This follows since $\|A x\|^2 = \sum_{i=1}^n \langle a_i, x\rangle^2$,
so if nothing is sent to the coordinator, the sum can be decomposed
like this with $B$ empty.  We just need to show the sum is preserved
when a message $r = \sigma_1 v_1$ is sent.  Because of the
orthogonal decomposition of $B_j$ by the $\svd(B_j) = [U, \Sigma, V]$,
then $\|B_j x\|^2 = \sum_{\ell=1}^d \langle \sigma_\ell v_\ell, x
\rangle^2$.  Thus if we send any $\sigma_\ell v_\ell$ to the
coordinator, append it to $B$, and remove it from $B_j$, the sum is
also preserved.  Thus, since the norm on $B$ is always less than on
$A$, the right side of (\ref{eq:invariant2}) is proven.

To see the left side of (\ref{eq:invariant2}) we need to use
(\ref{eq:invariant1}), and show that not too much mass remains on the
sites.  First we bound $\|B_j x\|^2$.
\[
\|B_j x\|^2 
= 
\sum_{\ell=1}^d \sigma_\ell^2 \langle v_\ell, x\rangle^2 
\leq 
\sum_{\ell=1}^d \frac{\eps}{m} \hat F \langle v_\ell, x\rangle^2 
=
\frac{\eps}{m} \hat F
\leq 
\frac{\eps}{m} \|A\|_F^2.
\]
And thus $\sum_{j=1}^m \|B_j x\|^2 \leq m \frac{\eps}{m} \|A\|_F^2 = \eps \|A\|_F^2$ and
hence 
\[
\|A x\|^2 \leq \|B x\|^2 + \sum_{j=1}^m \|B_j x\|^2 \leq \|B x\|^2 + \eps \|A\|_F^2. \qed
\]
\end{proof}\vspace{-2mm}

The communication bound follows directly from the analysis of the
weighted heavy hitters since the protocols for sending messages and
starting new rounds are identical with $\|A\|_F^2$ in place of $W$,
and with the squared norm change along the largest direction (the top
right singular value) replacing the weight change for a single
element.  Thus the total communication is $O(\frac{m}{\eps} \log
(\beta N))$.

\vspace{-2mm}
\begin{theorem}
  For a distributed matrix $A$ whose squared norm of rows are bounded
  by $\beta$ and for any $0 \leq \eps \leq 1$, the
  above protocol (P2) continuously maintains $\hat{A}$ such that $0 \leq \|
  A x\|^2 - \| B x\|^2 \leq \eps \| A\|_F^2$ and incurs a total
  communication cost of $O((m/\eps)\log (\beta N))$ messages.
\end{theorem}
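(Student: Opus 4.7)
The plan is to split the theorem into its two claims, the approximation guarantee and the communication bound, and handle them separately. The approximation claim is immediate: the preceding lemma already shows that under invariant $(1-2\eps)\|A\|_F^2 < \hat F \le \|A\|_F^2$, the coordinator's $B$ satisfies $0 \le \|Ax\|^2 - \|Bx\|^2 \le \eps \|A\|_F^2$ for every unit vector $x$. I would just cite that lemma and concentrate the work on the communication count.

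For the communication bound, I would mirror the analysis of the weighted heavy-hitters protocol P2 (Section \ref{sec:P2w}), using the dictionary: the Frobenius mass $\|A\|_F^2$ plays the role of the total weight $W$, the site-local $F_j$ plays the role of the per-site unsent weight, and a squared singular value $\sigma_\ell^2$ of $B_j$ along direction $v_\ell$ plays the role of a single element's accumulated weight $\Delta_e$. The scalar $F_j$ messages are counted exactly as in P2w: each site sends one whenever its unreported squared Frobenius mass crosses the threshold $(\eps/m)\hat F$, after $m$ such scalar messages the coordinator's $\hat F$ has grown by at least a $(1+\eps)$ factor, and since $\|A\|_F^2 \le \beta N$ there are $O((1/\eps)\log(\beta N))$ rounds, each contributing $O(m)$ scalar plus broadcast messages, for $O((m/\eps)\log(\beta N))$ total.

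The new step is bounding the vector messages $\sigma_\ell v_\ell$. Here I would invoke the orthogonal decomposition $\|B_j\|_F^2 = \sum_\ell \sigma_\ell^2$ provided by $\svd(B_j)$. This additivity means that each vector message shipped to the coordinator removes at least $(\eps/m)\hat F$ of squared Frobenius mass from the site, mathematically mirroring the way a per-element weight message in P2w removes at least $(\eps/m)\hat W$ of per-element mass. Consequently, the amortized charging argument of Lemma \ref{lem:P2w-comm} applies verbatim: after $r$ rounds, at most $O(m \cdot r)$ vector messages have been sent, since otherwise the coordinator's accumulated Frobenius mass would already have crossed the round-$r$ threshold and a new round would have begun earlier. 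Combined with the $O((1/\eps)\log(\beta N))$ bound on the number of rounds, this yields $O((m/\eps)\log(\beta N))$ vector messages, which dominates the overall communication.

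The main obstacle I expect is justifying the amortized round-charging step in the matrix setting. In the weighted case, each element has a cleanly identified cumulative weight $\Delta_e$; in the matrix case, the ``elements'' are continuously changing top singular directions of $B_j$ and a single incoming row $a_n$ can contribute simultaneously to several directions. The careful observation that makes the analysis go through is that although the directions rotate, the squared Frobenius norm stored at a site is invariant under the $\svd$ step, and is strictly decreased by exactly $\sigma_\ell^2 \ge (\eps/m)\hat F$ whenever direction $\ell$ is flushed. Once this invariant is in place, the inductive budget argument of Lemma \ref{lem:P2w-comm} transfers directly, and the rest is the same bookkeeping as in the weighted protocol.
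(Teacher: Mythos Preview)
Your proposal is correct and follows essentially the same approach as the paper: the error bound is delegated to the preceding lemma, and the communication bound is obtained by translating the weighted heavy-hitters P2 analysis (rounds driven by $\hat F$ doubling, plus the amortized per-round message count of Lemma~\ref{lem:P2w-comm}) via the dictionary $\|A\|_F^2 \leftrightarrow W$ and $\sigma_\ell^2 \leftrightarrow \Delta_e$. Your explicit discussion of why the amortization survives despite the rotating singular directions---namely, that each vector message deterministically removes at least $(\eps/m)\hat F$ of squared Frobenius mass from the site---is exactly the observation the paper is relying on when it says the protocols are ``identical'' under this substitution.
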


%%%%%%%%%%%%%%%%%%%%%%%%%%%%%%%%%%%%%%%%%%%%%%%%%%%%%%%%%%%
\vspace{-2mm}
\Paragraph{Bounding space at sites.}
It is possible to also run a small space streaming algorithm on each
site $j$, and also maintain the same guarantees.  The Frequent
Directions algorithm~\cite{Lib12}, presented a stream of rows $a_i$
forming a matrix $A$, maintains a matrix $\tilde A$ using $O(1/\eps')$
rows such that $0 \leq \|Ax\|^2 - \|\tilde A x\|^2 \leq \eps' \|
A\|_F^2$ for any unit vector $x$.

In our setting we run this on two matrices on each site with $\eps' =
\eps / 4m$.  (It can actually just be run on $B_j$, but then the proof
is much less self-contained.)  It is run on $A_j$, the full matrix.
Then instead of maintaining $B_j$ that is $A_j$ after subtracting all
rows sent to the coordinator, we maintain a second matrix $S_j$ that
contains all rows sent to the coordinator; it appends them one by one,
just as in a stream.  Now $\|B_j x\|^2 = \|A_jx\|^2 - \|S_jx\|^2$.
Thus if we replace both $A_j$ with $\tilde A_j$ and $S_j$ with $\tilde
S_j$, then we have
\[
\|B_j x\|^2 = \|A_jx \|^2 - \|S_j x\|^2 \leq \|\tilde A_j x\|^2 -
\|\tilde S_j x\|^2 + \frac{\eps}{4m} \|A_j\|_F^2,
\]
and similarly $\|B_j x\|^2 \geq \|\tilde A_j x\|^2 - \|\tilde S_j
x\|^2 - \frac{\eps}{4m} \|A_j\|_F^2$ (since $\|S_j\|_F^2 \leq
\|A_j\|_F^2$).  
From here we will abuse notation and write $\|\tilde
B_j x\|^2$ to represent $\|\tilde A_j x\|^2 - \|\tilde S_j x\|^2$.

Now we send the top singular vectors $v_\ell$ of $\tilde B_j$ to the
coordinator only if $\|\tilde B_j v_\ell\|^2 \geq \frac{3\eps}{4m}
\hat F$.  Using our derivation, thus we only send a message if $\|B_j
v_\ell\|^2 \geq \frac{\eps}{2m} \|A\|_F^2$, so it only sends at most
twice as many as the original algorithm.  Also if $\|B_j v_\ell\|^2 >
\frac{\eps}{m} \|A\|_F^2$ we always send a message, so we do not
violate the requirements of the error bound.

The space requirement per site is then $O(1/\eps') = O(m/\eps)$ rows.
This also means, as with Frequent Directions~\cite{Lib12}, we can run
Algorithm \ref{alg:det-tracking-site} in batch mode, and only call the
\svd\ operation once every $O(1/\eps')$ rows.

It is straightforward to see the coordinator can also use Frequent
Directions to maintain an approximate sketch, and only keep
$O(1/\eps)$ rows.

%%%%%%%%%%%%%%%%%%%%%%%%%%%%%%%%%%%%%%%%%%%%%%%%%%%%%%%%%%%%%%%%%%
%%%%%%%%%%%%%%%%%%%%%%%%%%%%%%%%%%%%%%%%%%%%%%%%%%%%%%%%%%%%%%%%%%
\subsection{Distributed Matrix Tracking Protocol 3}
\label{sec:P3m}
Our next approach is very similar to that discussed in Section
\ref{sec:P3w}.  On each site we run Algorithm \ref{alg:P3w-site}, the
only difference is that for an incoming row $a_i$, it treats it as an
element $(a_i, w_i = \|a_i\|^2)$.  The coordinator's communication
pattern is also the same as Algorithm \ref{alg:P3w-coord}, the only
difference is how it interprets the data it receives.

As such, the communication bound follows directly from Section
\ref{sec:P3w}; we need $O((m + (1/\eps^2) \log(1/\eps)) \log (\beta N
\eps))$ messages, and we obtain a set $S$ of at least $s =
\Theta((1/\eps)^2 \log(1/\eps))$ rows chosen proportional to their squared
norms; however if the squared norm is large enough, then it is in the
set $S$ deterministically.  To simplify notation we will say that
there are exactly $s$ rows in $S$.

%%%%%%%%%%%%%%%%%%%%%%%%%%%%%%%%%%%%%%%%%%%%%%%%%%%%%%%%%%%%%%%%%%
\Paragraph{Estimation by coordinator.}
The coordinator ``stacks'' the set of rows $\{a_1, \ldots, a_s\}$ to
create an estimate $B = [a_1; \ldots; a_s]$.  We will show that
for any unit vector $x$ that $| \|A x\|^2 - \|B x\|^2 | \leq \eps
\|A\|_F^2$.

If we had instead used the weighted sampling with replacement protocol
from Section \ref{sec:P3w-SwR}, and retrieved $s = O(1/\eps^2)$ rows
of $A$ onto the coordinator (sampled proportionally to $\|a_i\|^2$ and
then rescaled to have the same weight), we could immediately show the
desired bound was achieved using know results on column
sampling~\cite{drineas2006fast2}.  However, as is the case with
weighted heavy-hitters, we can achieve the same error bound for
without replacement sampling in our protocol, and this uses less
communication and running time.

Recall for rows $a_i$ such that $\|a_i\|^2 \geq \hat \rho$, (for a
priority $\hat \rho < 2 \tau$) it keeps them as is; for other rows, it
rescales them so their squared norm is $\hat \rho$.  And $\hat \rho$
is defined so that $\E[\|B\|_F^2] = \|A\|_F^2$, thus $\hat \rho \leq
W/s$.  \vspace{-2mm}
\begin{theorem}
  Protocol 3 (P3) uses $O((m+s)\log(\beta N/s))$ messages of
  communication, with $s = \Theta((1/\eps^2)\log(1/\eps))$, and for
  any unit vector $x$ we have $| \|A x\|^2 - \|B x\|^2 | \leq \eps
  \|A\|_F^2$, with probability at least $1 -1/s$.
\end{theorem}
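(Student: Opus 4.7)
The proof splits cleanly into a communication bound and an accuracy bound. The communication bound is immediate: Protocol 3 for matrix tracking runs Algorithms~\ref{alg:P3w-site}--\ref{alg:P3w-coord} verbatim, treating each incoming row $a_i$ as a weighted item with weight $w_i = \|a_i\|^2$. Then $W = \sum_i \|a_i\|^2 = \|A\|_F^2$ and the row-norm bound $\|a_i\|^2 \leq \beta$ gives $W \leq \beta N$, so Lemma~\ref{lem:P3w-mes} directly yields the $O((m+s)\log(\beta N/s))$ message count with the same high-probability guarantee.

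For the accuracy bound, I would first rewrite it as a matrix-norm statement: the guarantee $|\|Ax\|^2 - \|Bx\|^2| \leq \eps \|A\|_F^2$ holding for every unit $x$ is equivalent to $\|A^T A - B^T B\|_2 \leq \eps \|A\|_F^2$. Then I would verify unbiasedness. Partition indices into $I_{\text{big}} = \{i : \|a_i\|^2 \geq \hat\rho\}$ (rows retained deterministically) and $I_{\text{small}} = \{i : \|a_i\|^2 < \hat\rho\}$ (rows retained with probability $\|a_i\|^2/\hat\rho$ by the priority rule $\rho_i = w_i/r_i$, and then rescaled to squared norm $\hat\rho$). The contribution $Y_i$ of an $I_{\text{small}}$ row to $B^T B$ is $(\hat\rho/\|a_i\|^2) a_i a_i^T$ if sampled and $0$ otherwise, so $\E[Y_i] = a_i a_i^T$; summed with the deterministic $I_{\text{big}}$ contributions this gives $\E[B^T B] = A^T A$.

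The technical core is a matrix-concentration argument applied to $X = \sum_{i \in I_{\text{small}}} (Y_i - a_i a_i^T)$. Each summand is self-adjoint with $\|Y_i - a_i a_i^T\|_2 \leq \hat\rho$, and a direct calculation gives $\E[Y_i^2] = \hat\rho\, a_i a_i^T$, hence the matrix variance bound
\[
\Bigl\| \sum_{i \in I_{\text{small}}} \E[Y_i^2] \Bigr\|_2 \;\leq\; \hat\rho \,\|A^T A\|_2 \;\leq\; \hat\rho \,\|A\|_F^2.
\]
Because $\hat\rho$ is chosen so that $\E[\|B\|_F^2] = \|A\|_F^2$, we have $\hat\rho \leq \|A\|_F^2/s$, so both the uniform bound and the variance parameter scale like $\|A\|_F^2/s$ and $\|A\|_F^4/s$ respectively. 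Applying matrix Bernstein (in the version valid under negative dependence, analogous to the Panconesi--Srinivasan bound used in Lemma~\ref{LEM:P3W-EST}) then gives
\[
\Pr\bigl[\,\|X\|_2 > \eps \|A\|_F^2\,\bigr] \;\leq\; 2d\,\exp\!\bigl(-\Omega(\eps^2 s / (1+\eps))\bigr),
\]
and plugging in $s = \Theta((1/\eps^2)\log(1/\eps))$ makes this at most $1/s$.

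The main obstacle I anticipate is that the setup departs from textbook matrix Bernstein in two ways: the threshold $\hat\rho$ is itself random and data-dependent, and the indicator variables of which small rows get sampled are correlated (priority sampling is a without-replacement scheme and its inclusion indicators are negatively associated). For the first issue I would condition on $\hat\rho$, using only the deterministic upper bound $\hat\rho \leq \|A\|_F^2/s$ inside the conditional argument and then integrating out. For the second I would invoke the matrix analogue of the negative-dependence Chernoff bound (the same tool whose scalar version is used for Lemma~\ref{LEM:P3W-EST}), which permits essentially the same concentration as the independent case; alternatively, one can appeal to Szegedy-style variance bounds for priority sampling lifted to the matrix setting, which show that the second moment of $X$ is no larger than under Poisson sampling and then combine with Markov to obtain a weaker but still acceptable tail bound.
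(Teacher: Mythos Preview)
Your communication argument matches the paper exactly. For the error bound, however, you take a genuinely different route.

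The paper does not pass through matrix Bernstein at all. It fixes a single unit vector $x$, defines \emph{scalar} random variables $X_{i,x} = \langle a_i, x\rangle^2$ for the sampled rows, and applies the same negatively-correlated Chernoff--Hoeffding bound used in Lemma~\ref{LEM:P3W-EST} (with range $\Delta = \hat\rho$, since rows with $\|a_i\|^2 > \hat\rho$ contribute deterministically). This gives $\Pr\bigl[\,|\|Bx\|^2 - \|Ax\|^2| > (\eps/2)\|A\|_F^2\,\bigr] \leq \exp(-\eps^2 s/32)$ for that fixed $x$, and equating the right side to $1/s$ forces $s = \Theta((1/\eps^2)\log(1/\eps))$. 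The guarantee is therefore per-direction: for each $x$ the bound holds with probability $1-1/s$, not simultaneously for all $x$.

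Your matrix-Bernstein route is more ambitious, since it targets the uniform spectral-norm statement $\|A^TA - B^TB\|_2 \leq \eps\|A\|_F^2$, but this introduces two difficulties the paper's argument sidesteps. First, the dimension factor $2d$ in the matrix tail is not absorbed by $s = \Theta((1/\eps^2)\log(1/\eps))$; to drive $2d\,\exp(-\Omega(\eps^2 s))$ below $1/s$ you would need $s$ to grow with $\log d$ as well, which the stated sample size does not. Second, a matrix Bernstein inequality valid under negative association is not a standard off-the-shelf tool; the Panconesi--Srinivasan result you invoke is scalar, and lifting it to matrices requires nontrivial additional work (or a retreat to the Chebyshev-type bound you mention, which would not give the exponential tail you use). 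So while your outline could plausibly be completed and would then yield a stronger conclusion, as written it does not quite close, whereas the paper's per-direction scalar argument is an immediate reuse of Lemma~\ref{LEM:P3W-EST}.
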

\begin{proof}
  The error bound roughly follows that of Lemma \ref{LEM:P3W-EST}.  We
  apply the same negatively correlated Chernoff-Hoeffding bound but instead define random
  variable $X_{i,x} = \langle a_i, x\rangle^2$.  Thus $M_x =
  \sum_{i=1}^s X_{i,x} = \|B x\|^2$.  Again $\Delta = \hat \rho$
  (since elements with $\|a_i\|^2 > \hat \rho$ are not random) and
  $\E[M_x] = \|A x\|^2$.  It again follows that
  \begin{equation*} \label{eq:L2bound} 
  \Pr[| \|B x\|^2 - \|A x\|^2 | \leq \eps \|A\|_F^2/2] \leq \exp(-\eps^2 s /32) \leq \delta.
\end{equation*}
Setting $\delta = \Omega(1/s)$ yields that when $s = \Theta((1/\eps^2)
\log(1/\eps))$ this holds with probability at least $1-\delta = 1- 1/s
= 1-1/\Theta((1/\eps)^2\log(1/\eps))$, for any unit vector $x$.
\end{proof}

\noindent
We need $O(1)$ space per site and $O(s)$ space on coordinator.  

\vspace{-2mm}
\section{Experiments}
\label{sec:exp}

\Paragraph{Datasets.}
For tracking the distributed weighted heavy hitters, we generated data
from Zipfian distribution, and set the skew parameter to 2 in order to
get meaningful distributions that produce some heavy hitters per
run. The generated dataset contained $10^7$ points, in order to assign
them weights we fixed the upper bound (default
$\beta=1,000$) and assigned each point a uniform random weight in
range $[1,\beta]$.  Weights are not necessarily integers.

For the distributed matrix tracking problem, we used two large real
datasets ``PAMAP'' and ``YearPredictionMSD'', from the machine
learning repository of UCI. 

PAMAP is a Physical Activity Monitoring dataset and contains data of
18 different physical activities (such as walking, cycling, playing
soccer, etc.), performed by 9 subjects wearing 3 inertial measurement
units and a heart rate monitor. The dataset contains 54 columns
including a timestamp, an activity label (the ground truth) and 52
attributes of raw sensory data. In our experiments, we used a subset
with $N = 629,250$ rows and $d=44$ columns (removing columns
containing missing values), giving a $N \times d$ matrix (when running
to the end). This matrix is low-rank.

YearPredictionMSD is a subset from the ``Million Songs Dataset"
\cite{Bertin-Mahieux2011} and contains the prediction of the release
year of songs from their audio features. It has over 500,000 rows and
$d=90$ columns. We used a subset with $N =300,000$ rows, representing
a $N \times d$ matrix (when running to the end). This matrix has high
rank.

\Paragraph{Metrics.} The efficiency and accuracy of the weighted heavy
hitters protocols are controlled with input parameter $\eps$
specifying desired error tolerance. We compare them on:
%\vspace{-2mm}
\begin{pkl} 
\item \emph{Recall:} The number of true heavy hitters returned by a
  protocol over the correct number of true heavy hitters.
\item \emph{Precision:} The number of true heavy hitters returned by a
  protocol over the total number of heavy hitters returned by the
  protocol. 
\item \s{err}: Average relative error of the frequencies of the
  true heavy hitters returned by a protocol. 
\item \s{msg}: Number of messages sent during a protocol.
\end{pkl}

\vspace{-1mm}
For matrix approximation protocols, we used:
\begin{itemize} \denselist
\item \s{err}: Defined as $\|A^TA - B^TB\|_2/\|A\|_F^2$, where $A$ is
  the input matrix and $B$ is the constructed low rank approximation
  to $A$. It is equivalent to the following: $\max_{\{x,\;\|x\|=1\}}
  (\|A x\|^2 - \|B x\|^2)/\|A\|_F^2$.
\item \s{msg}: Number of messages (scalar-form and vector-form) sent during a
  protocol.
\end{itemize}  \vspace{-1mm}

We observed that both the approximation errors and communication costs
of all methods {\em are very stable with respect to query time}, by
executing estimations at the coordinator at randomly selected time
instances. Hence, we only report the average \s{err} from queries in
the very end of the stream (i.e., results of our methods on really
large streams).

\begin{figure}[t!]
\begin{centering}
\subfigure[recall vs. $\eps$.]{
\includegraphics[width=\figsize]{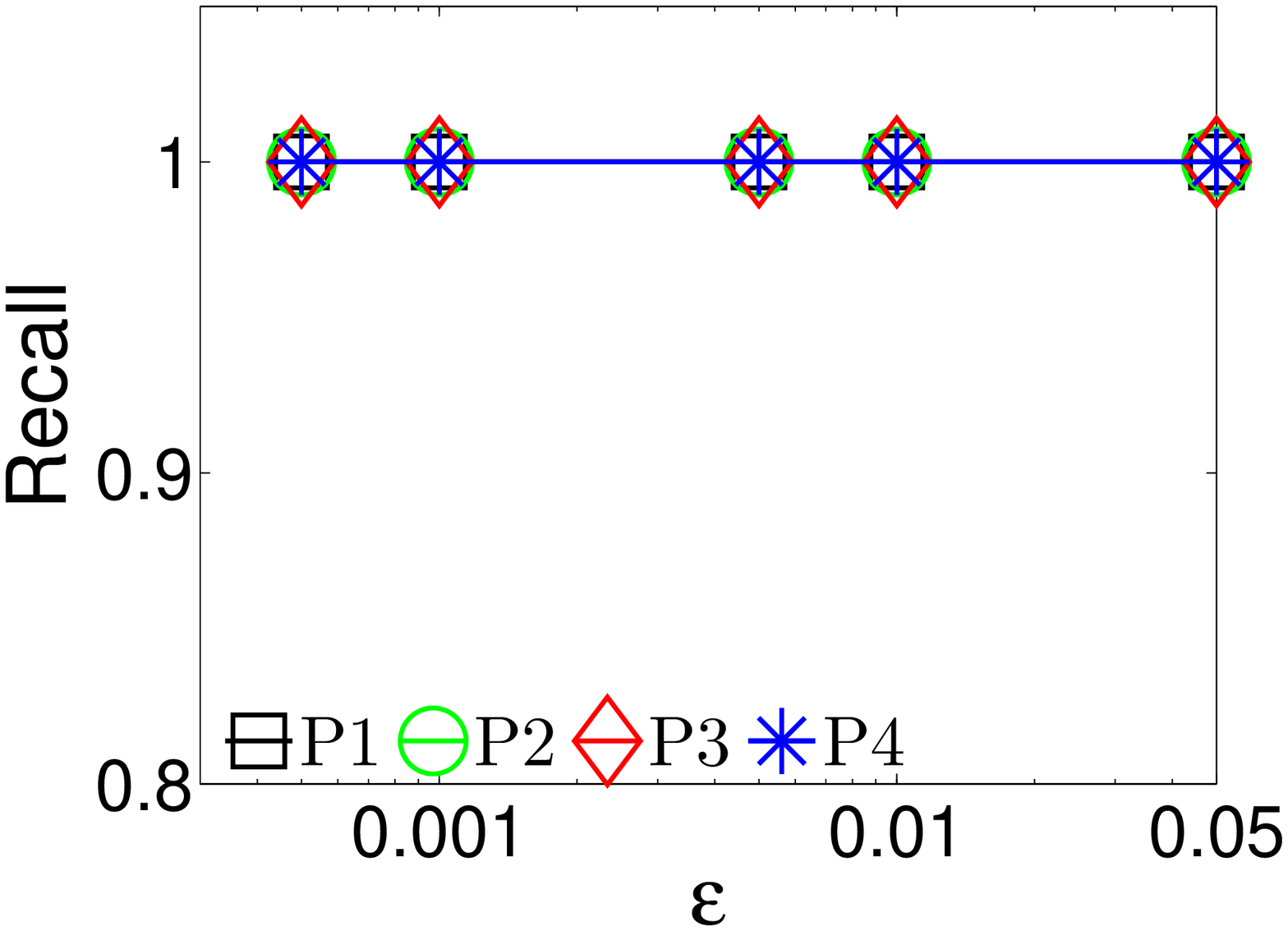}
\label{fig:zipf_recall_eps}
}\vspace{2mm}
\subfigure[precision vs. $\eps$.]{
\includegraphics[width=\figsize]{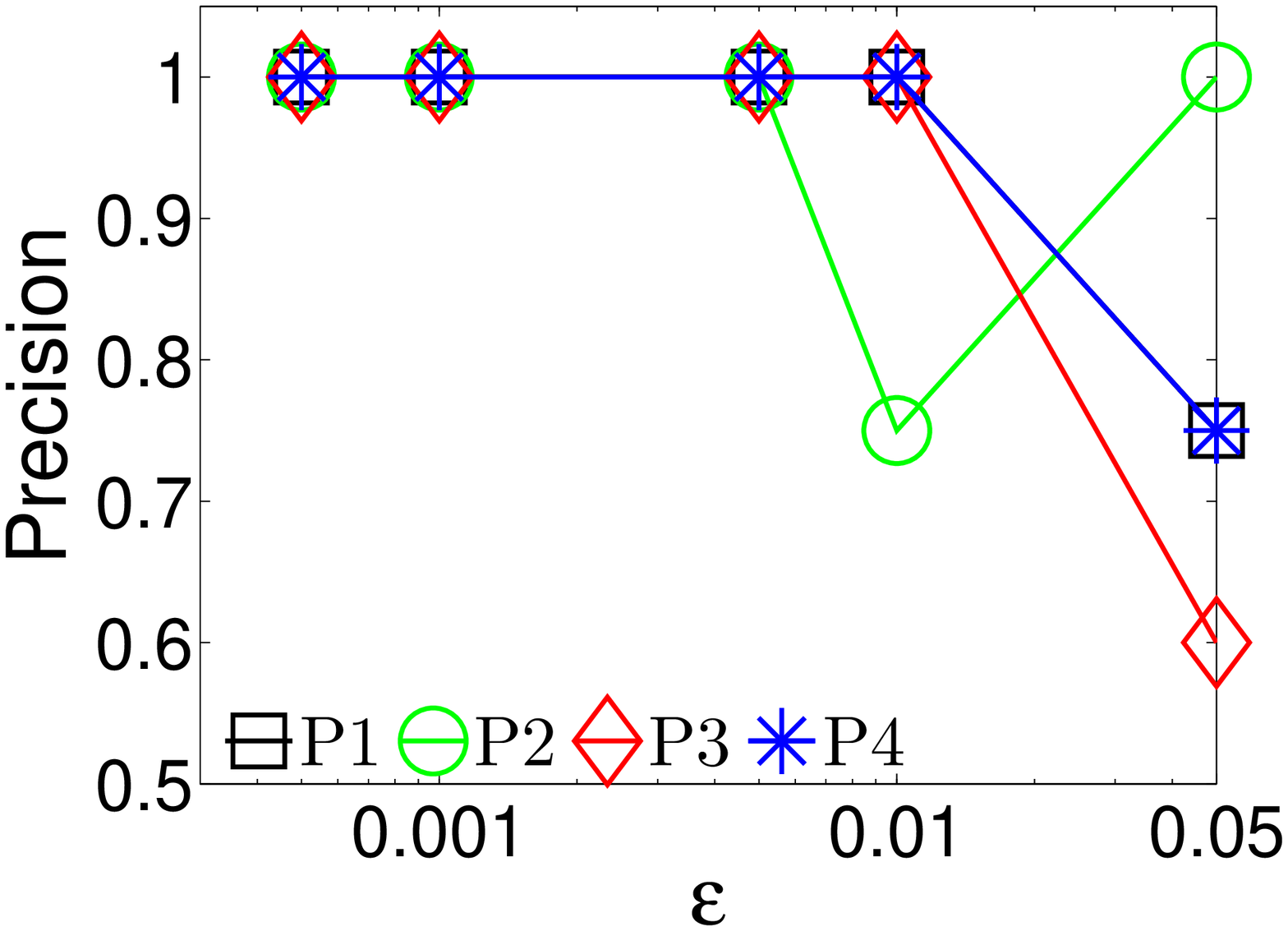}
\label{fig:zipf_prec_eps}
}\vspace{2mm} \subfigure[\s{err} vs. $\eps$.]{
\includegraphics[width=\figsize]{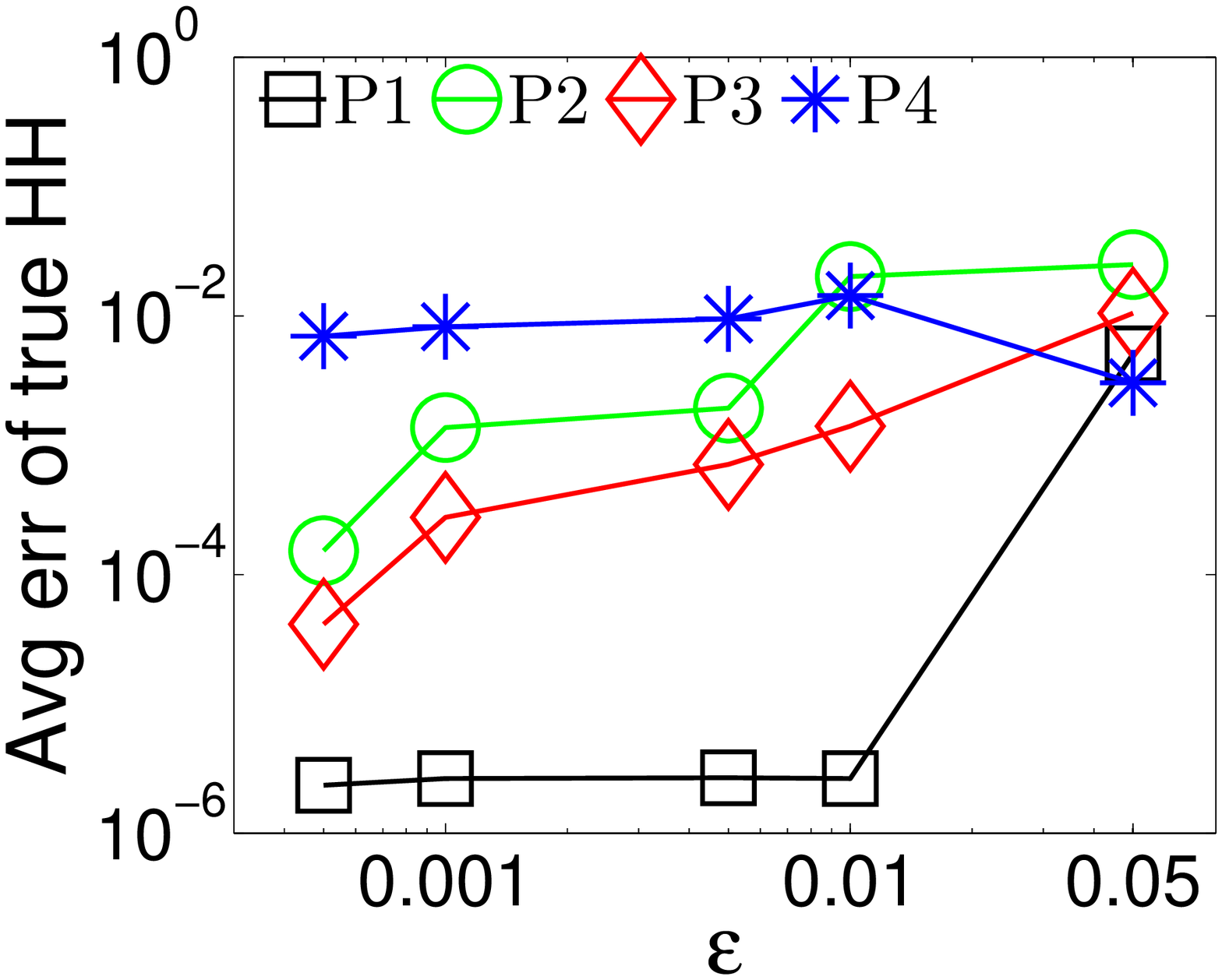}
\label{fig:zipf_errtrue_eps}
}\vspace{2mm}
\subfigure[\s{msg} vs. $\eps$.]{
\includegraphics[width=\figsize]{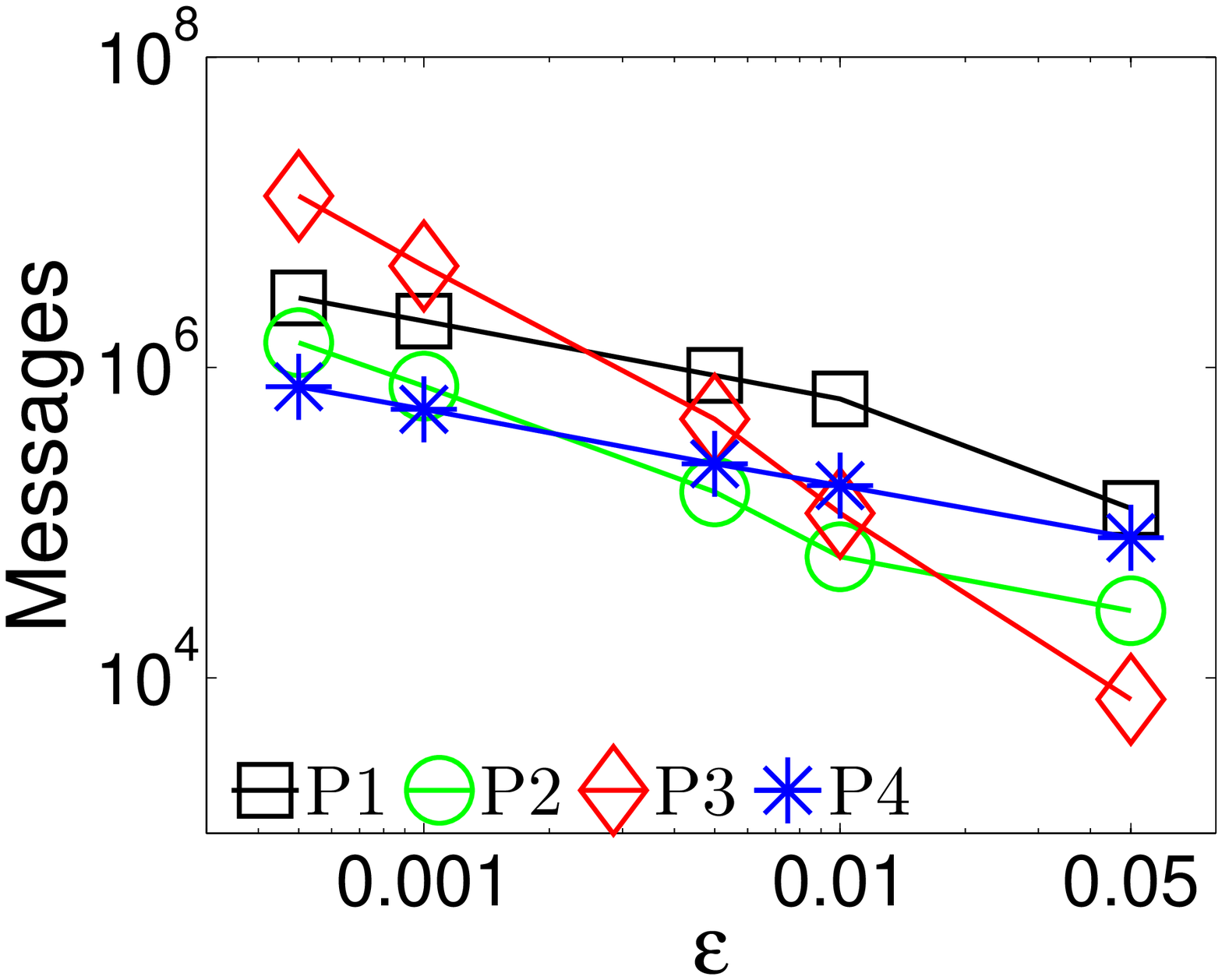}
\label{fig:zipf_msg_eps}
}\vspace{2mm} \subfigure[\s{err} vs. \s{msg}]{
\includegraphics[width=\figsize]{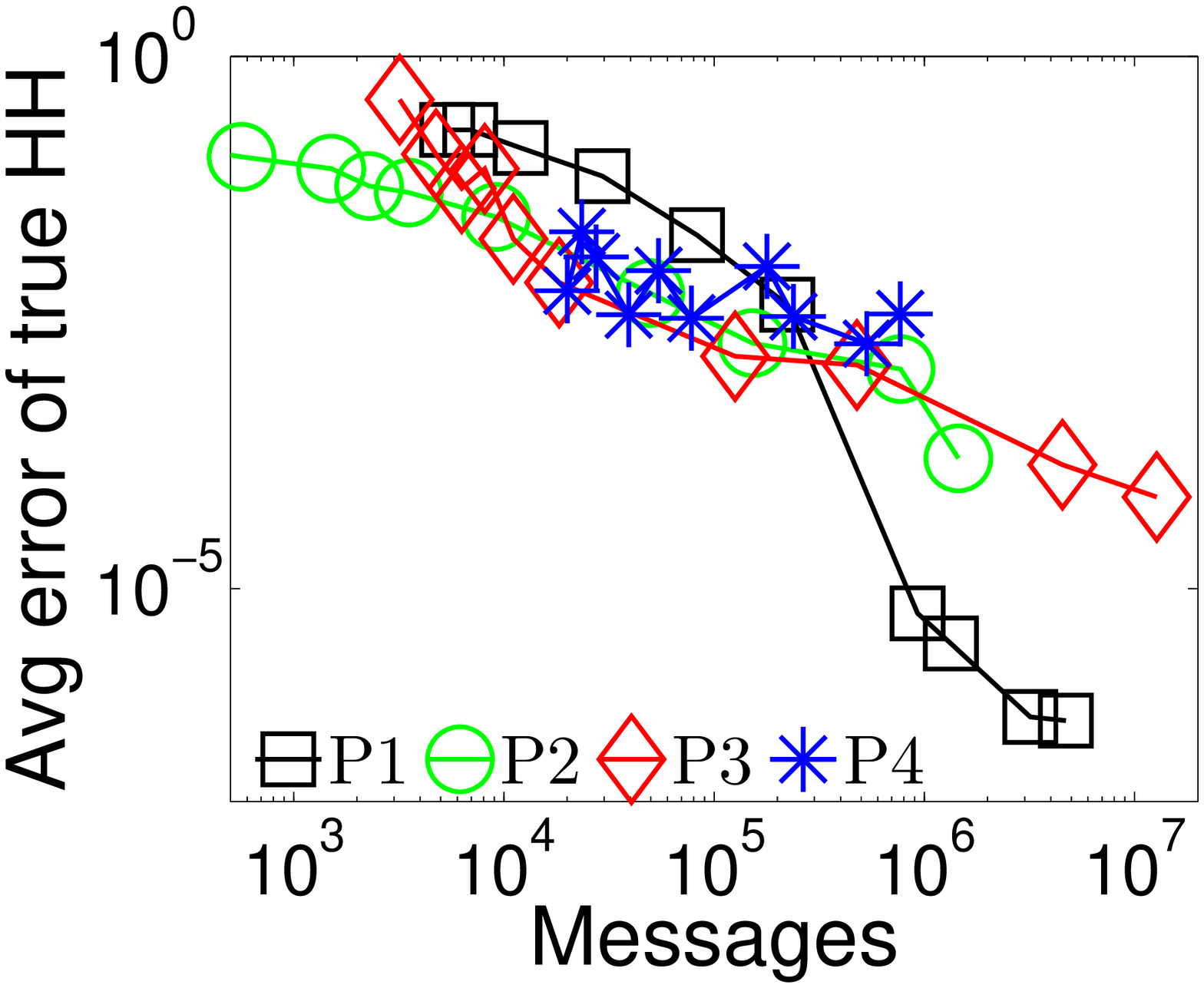}
\label{fig:zipf_msg_err}
}
\subfigure[\s{msg} vs. $\beta$.]{
\includegraphics[width=\figsize, height=1.1in]{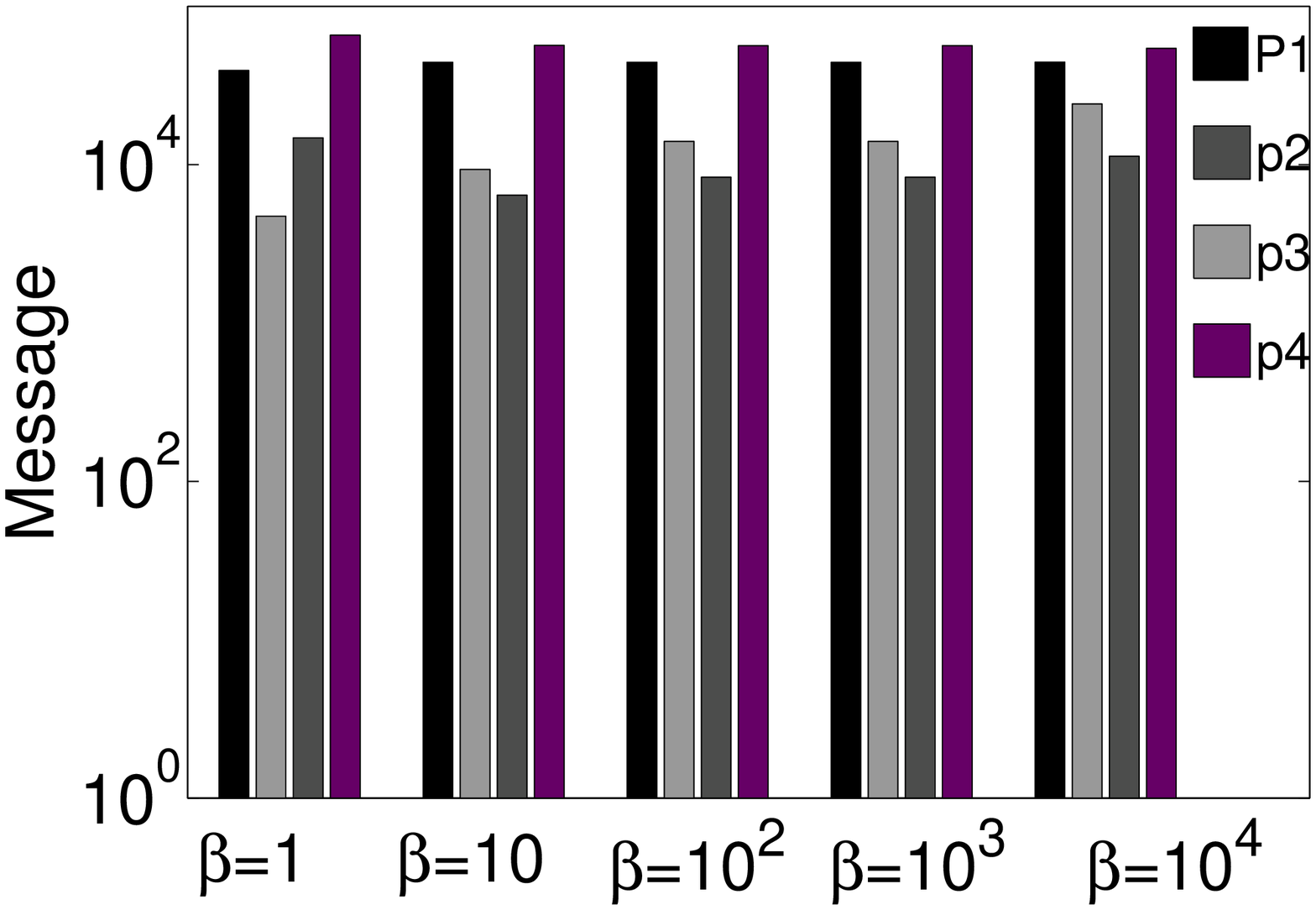}
\label{fig:zipf_beta_msg}
}
\caption{Results for distributed weighted heavy hitters protocols on Zipfian distribution with skew=2.}
\label{fig:zipfian}\vspace{-3mm}
\end{centering}
\end{figure}

%%%%%%%%%%%%%%%%%%%%%%%%%%%%%%%%%%%%%%%%%%%%%%%%%%%%%%%%%%%%
\subsection{Distributed Weighted Heavy Hitters}
We denote four protocols for tracking distributed weighted heavy
hitters as \piw, \piiw, \piiiw\ and \piiiiw\ respectively.  As a
baseline, we could send all $10^7$ stream elements to the coordinator,
this would have no error.  All of our heavy hitters protocols return
an element $e$ as heavy hitter only if $\hat{W}_e/\hat{W} \geq \phi -
\eps/2$ while the exact weighted heavy hitter method which our
protocols are compared against, returns $e$ as heavy hitter if
$f_e(A)/W \geq \phi$.

We set the heavy-hitter threshold $\phi$ to $0.05$ and we varied error
guarantee $\eps$ in the range $\{5\times10^{-4}, 10^{-3}, 5
\times10^{-3},$ $10^{-2}, 5\times10^{-2}\}$. When the plots do not
vary $\eps$, we use the default value of $\eps = 10^{-3}$. Also we
varied number of sites ($m$) from $10$ to $100$, otherwise we have as
default $m = 50$.

All four algorithms prove to be highly effective in estimating
weighted heavy hitters accurately, as shown in \emph{recall} (Figure
\ref{fig:zipf_recall_eps}) and \emph{precision} (Figure
\ref{fig:zipf_prec_eps}) plots. In particular, the recall values for
all algorithms are constant $1.0$.

Note that precision values dip, but this is because the true heavy
hitters have $f_e(A)/W$ above $\phi$ where our algorithms only return
a value if $\hat W_e/\hat{W} \geq \phi - \eps/2$, so they return more
false positives as $\eps$ increases.  For $\eps$ smaller than $0.01$,
all protocols have a precision of $1.0$.

When measuring (the measured) \s{err} as seen in Figure
\ref{fig:zipf_errtrue_eps}, our protocols consistently outperform the
error parameter $\eps$.  The only exception is $\piiiiw$, which has
slightly larger error than predicted for very small $\eps$; recall
this algorithm is randomized and has a constant probability of failure.
\piw\ has almost no error for $\eps = 0.01$ and below; this can be
explained by improved analysis for Misra-Gries~\cite{berinde10:_space}
on skewed data, which applies to our Zipfian data.  Protocols \piiw\
and \piiiw\ also greatly underperform their guaranteed error.

The protocols are quite communication efficient, saving several orders
of magnitude in communication as shown in Figure
\ref{fig:zipf_msg_eps}.  For instance, all protocols use roughly
$10^5$ messages at $\eps=0.01$ out of $10^7$ total stream elements. To
further understand different protocols, we tried to compare them by
making them using (roughly) the same number of messages. This is
achieved by using {\em different} $\eps$ values. As shown in Figure
\ref{fig:zipf_msg_err}; all protocols achieved excellent approximation
quality, and the measured error drops quickly as we allocate more
budget for the number of messages.  In particular, \piiw\ is the best
if fewer than $10^5$ messages are acceptable with \piiiw\ also shown
to be quite effective.  \piw\ performs best if $10^6$ messages are
acceptable.

In another experiment, we tuned all protocols to obtain (roughly) the
same measured error of \s{err} = $0.1$ to compare their communication
cost versus the upper bound on the element weights ($\beta$). Figure
\ref{fig:zipf_beta_msg} shows that they are all robust to the
parameter $\beta$; \piiw\ or \piiiw\ performs the best.

%%%%%%%%%%%%%%%%%%%%%%%%%%%%%%%%%%%%%%%%%%%%%%%%%%%%%%%%%%%%
\subsection{Distributed Matrix Tracking}
The strong results for distributed weighted heavy hitter protocols
transfer over empirically to the results for distributed matrix
tracking, but with slightly different trade-offs.  Again, we denote
our three protocols by \pim, \piim, and \piiim\ in all plots.  {\em As
  a baseline}, we consider two algorithms: they both send all data to
the coordinator. One calls Frequent-Directions (\textsf{FD})
~\cite{Lib12}, and second calls \textsf{SVD} which is optimal but not
streaming.  In all remaining experiments, we have used default value
$\eps= 0.1$ and $m = 50$, unless specified.  Otherwise $\eps$ varied
in range $\{5\times10^{-3}, 10^{-2}, 5 \times10^{-2}, 10^{-1},
5\times10^{-1}\}$, and $m$ varied in range $[10, 100]$.

\begin{table}[tbp]
\begin{center}
\begin{tabular}{|c||c|c||c|c|}
\hline
\textbf{DataSet} & \multicolumn{2}{|c|}{\textbf{PAMAP}, $k=30$} & \multicolumn{2}{|c|}{\textbf{MSD}, $k=50$} \\
\hline 
\textbf{Method} & \text{\s{err}} & \text{\s{msg}} &  \text{\s{err}} & \text{\s{msg}} \\ 
\hline 
\hline
\pim & 7.5859e-06 & 628537  & 0.0057 & 300195  \\ 
\hline
\piim & 0.0265 & 10178  & 0.0695 & 6362\\ 
\hline 
\piiim\textsc{wor} & 0.0057 &  3962  & 0.0189 & 3181 \\ 
\hline 
\piiim\textsc{wr} & 0.0323 &  25555  & 0.0255 & 22964 \\ 
\hline 
\textsf{FD} & 2.1207e-004 & 629250   & 0.0976 & 300000  \\ 
\hline
\textsf{SVD} & 1.9552e-006 & 629250   & 0.0057 & 300000 \\ 
\hline 
\end{tabular} 
\end{center}  \vspace{-5mm}
\caption{\label{tbl:pamap_centr}
Raw numbers of PAMAP and MSD.}\vspace{-4mm}
\end{table}

Table \ref{tbl:pamap_centr} compares all algorithms, including \s{SVD}
and \s{FD} to compute rank $k$ approximations of the matrices, with $k
= 30$ and $k=50$ on PAMAP and MSD respectively.  Since \s{err} values
for the two offline algorithms are minuscule for PAMAP, it indicates
it is a low rank matrix (less than 30), where as MSD is high rank,
since error remains, even with the best rank 50 approximation from the
\s{SVD} method.

Note that \piiim\textsc{wor} and \piiim\textsc{wr} refer to Protocol
3, {\em without replacement} and {\em with replacement} sampling
strategies, respectively.  As predicted by the theoretical analysis,
we see that \piiim\textsc{wor} outperforms \piiim\textsc{wr} in both
settings, always having much less error and many fewer messages.
Moreover, \piiim\textsc{wor} will gracefully shift to sending all data
deterministically with no error as $\eps$ becomes very
small.
Hence we only use \piiim\textsc{wor} elsewhere, labeled as just
\piiim.

Also note that \pim\ in the matrix scenario is far less effective;
although it achieves very small error, it sends as many messages (or
more) as the naive algorithms.  Little compression is taking place by
\s{FD} at distributed sites before the squared norm threshold is
reached.

\begin{figure}
\begin{centering}
\subfigure[\s{err} vs. $\eps$.]{
\includegraphics[width=\figsize]{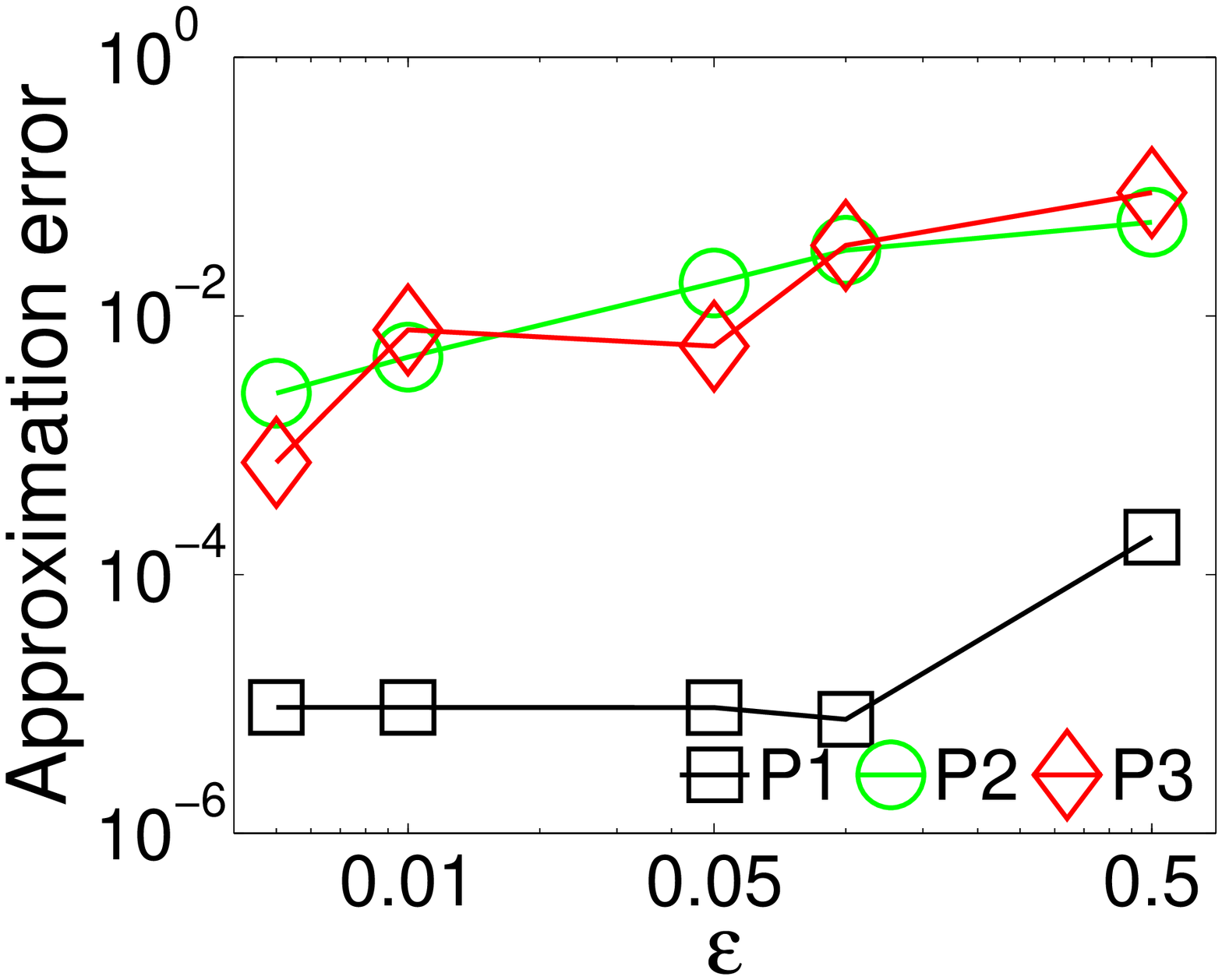}
\label{fig:pamap_err_eps}
}\vspace{2mm}
\subfigure[\s{msg} vs. $\eps$.]{
\includegraphics[width=\figsize, height=1.1in]{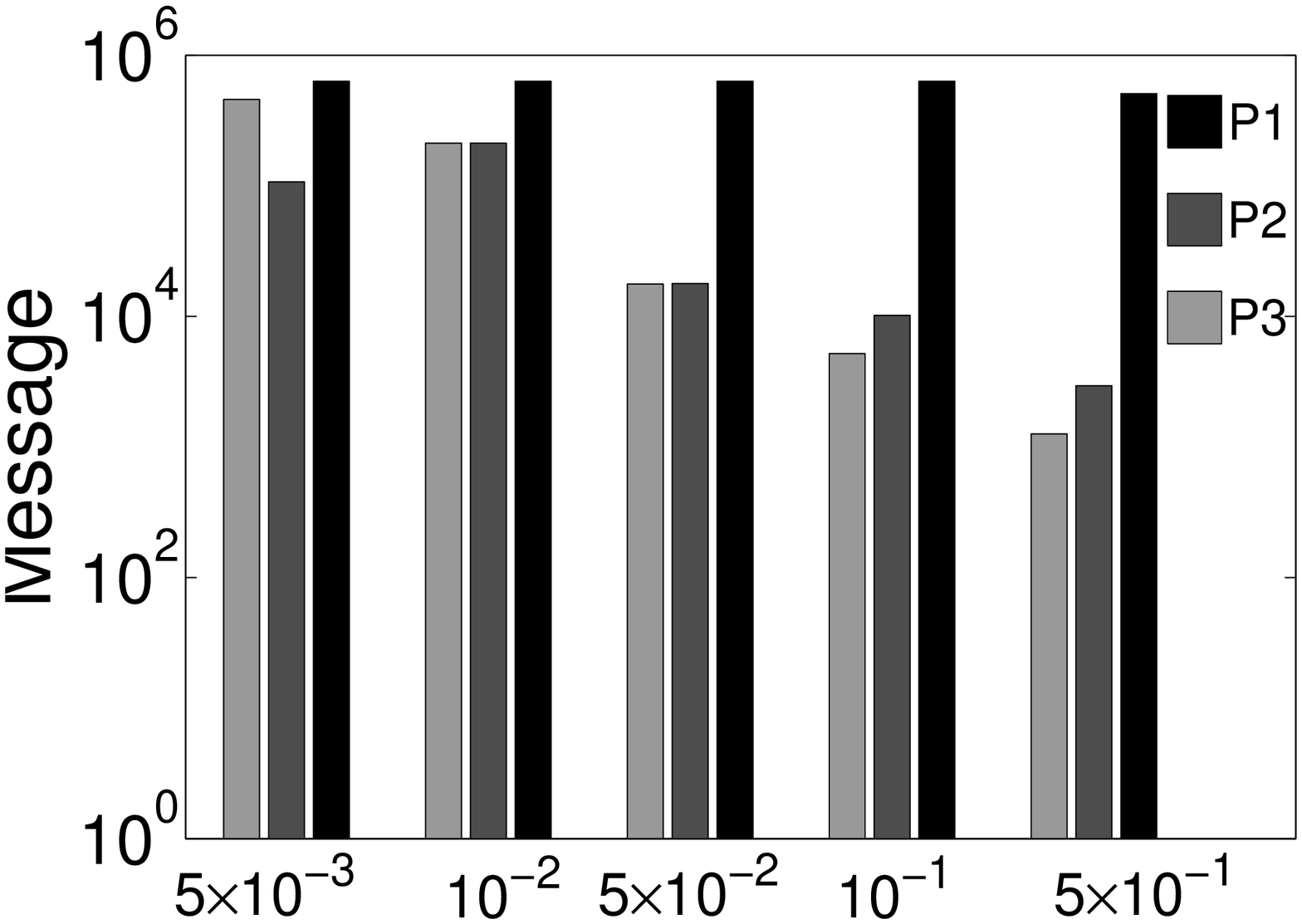}
\label{fig:pamap_msg_eps}
}\vspace{2mm}
\subfigure[\s{msg} vs. site]{
\includegraphics[width=\figsize]{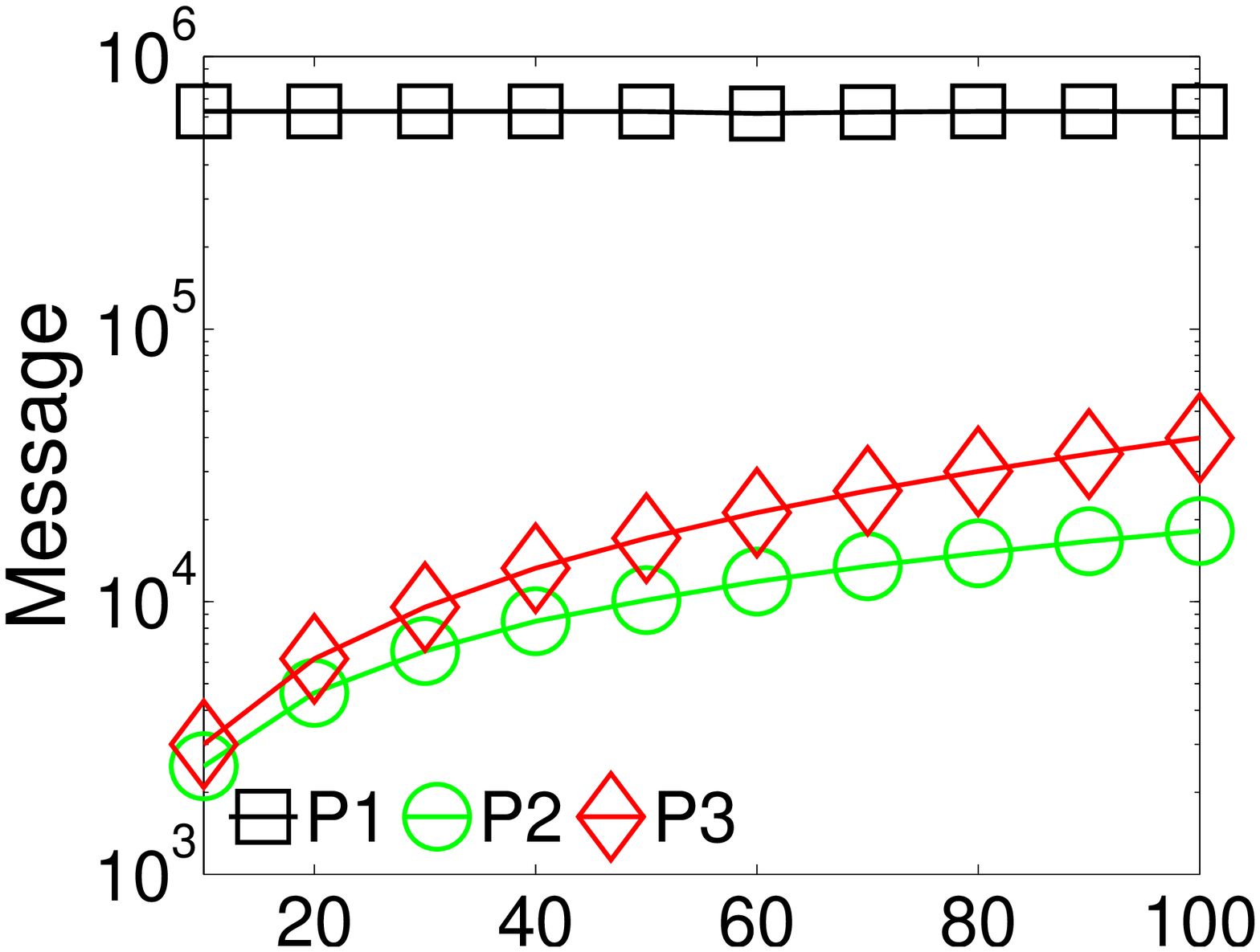}
\label{fig:pamap_msg_sites}
}
\subfigure[\s{err} vs. site]{
\includegraphics[width=\figsize]{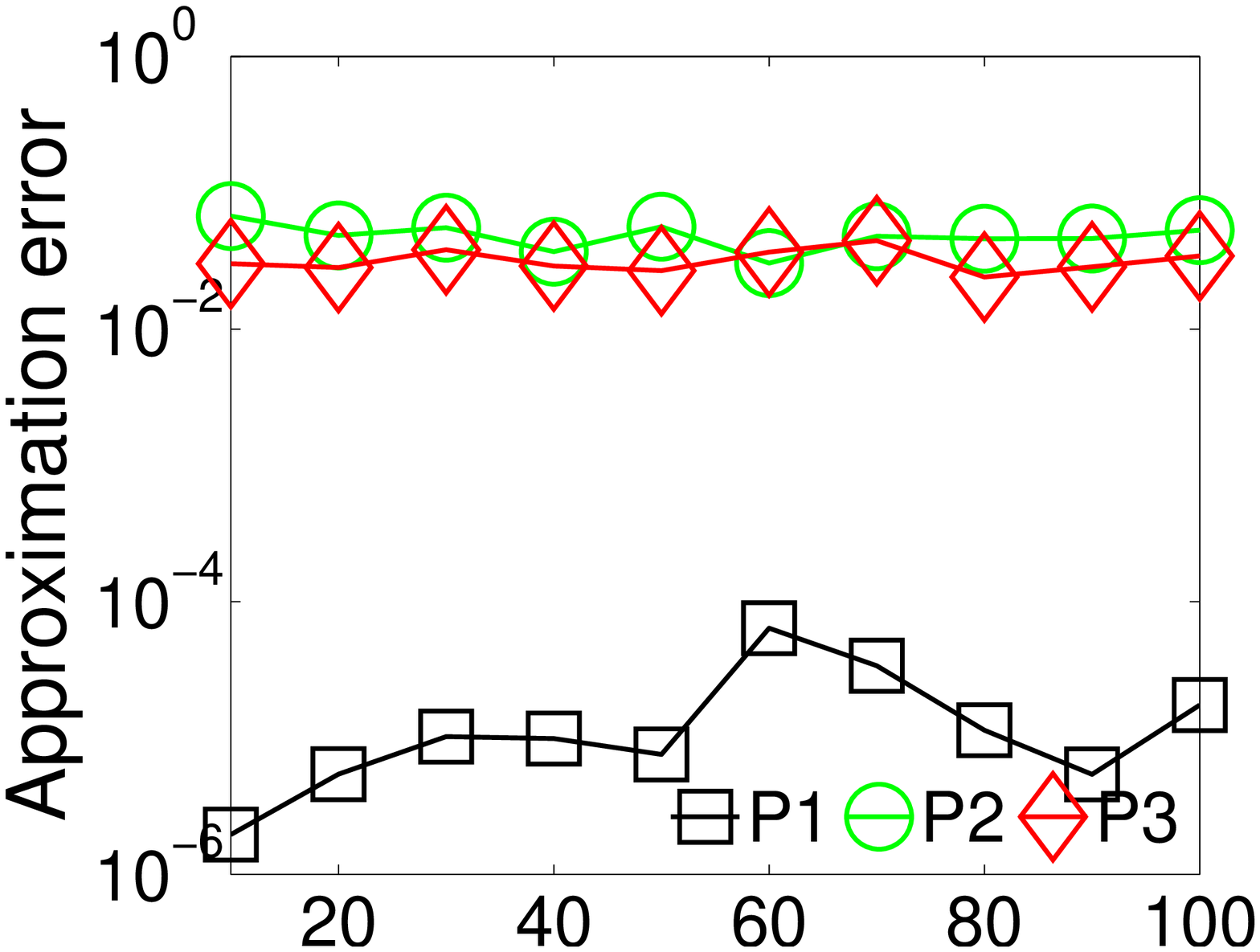}
\label{fig:pamap_err_sites}
}
\caption{Experiments for PAMAP dataset}
\label{fig:pamap}\vspace{-3mm}
\end{centering}
\end{figure}

Figures \ref{fig:pamap_err_eps} and \ref{fig:yp_err_eps} show as
$\eps$ increases, error of protocols increases too. In case of \piiim~
this observation is justified by the fact \piiim\ samples
$O((1/\eps^2) \log(1/\eps))$ elements, and as $\eps$ increases, it
samples fewer elements, hence results in a weaker estimation of true
heavy directions. In case of \piim, as $\eps$ increases, they allocate
a larger error slack to each site and sites communicate less with the
coordinator, leading to a coarse estimation.  Note that again \pim\
vastly outperforms its error guarantees, this time likely explained
via the improved analysis of Frequent-Directions~\cite{GP14}.

Figures \ref{fig:pamap_msg_eps} and \ref{fig:yp_msg_eps} show number
of messages of each protocol vs. error guarantee $\eps$. As we see, in
large values of $\eps$ (say for $\eps > 1/m = 0.02$), \piim\ typically
uses slightly more messages than
\piiim. 
But as $\eps$ decreases, \piiim\ surpasses \piim\ in number of
messages. This confirms the dependency of their asymptotic bound on
$\eps$ ($1/\eps^2$ vs. $1/\eps$). \pim\ generally sends much more
messages than both \piim\ and \piiim.

Next, we examined the number of sites ($m$). Figures
\ref{fig:pamap_msg_sites} and \ref{fig:yp_msg_sites} show that \piim\
and \piiim\ used more communication as $m$ increases, showing a linear
trend with respect to $m$.  \pim\ shows no trend since its
communication depends solely on the total weight of the stream. Note
that \pim\ sends its whole sketch, hence fix number of messages,
whenever it reaches the threshold.  As expected, the number of sites
does not have significant impact on the measured approximation error
in any protocol; see Figures \ref{fig:pamap_err_sites} and
\ref{fig:yp_err_sites}.

\begin{figure}
\begin{centering}
\subfigure[\s{err} vs. $\eps$.]{
\includegraphics[width=\figsize]{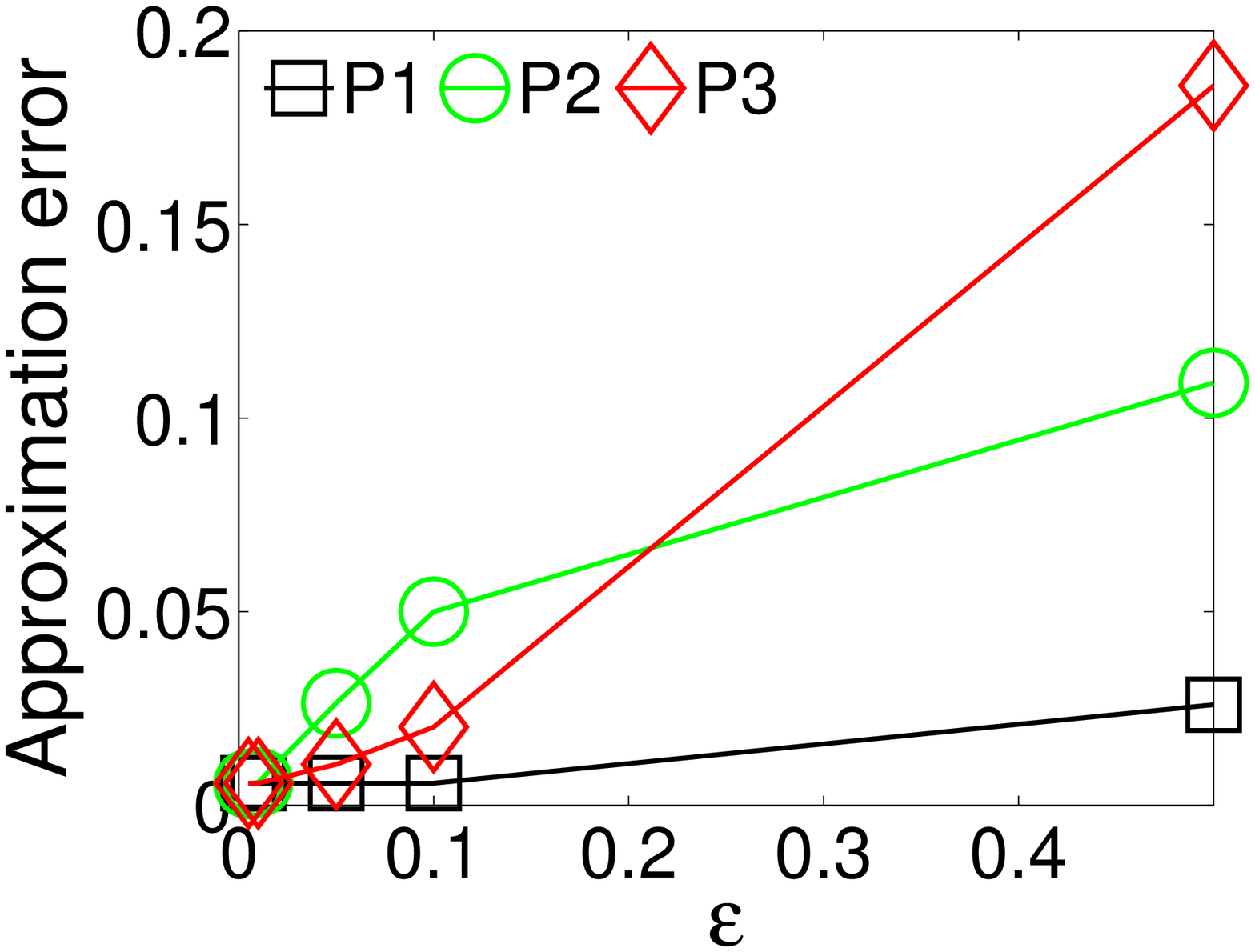}
\label{fig:yp_err_eps}
}\vspace{2mm}
\subfigure[\s{msg} vs. $\eps$.]{
\includegraphics[width=\figsize]{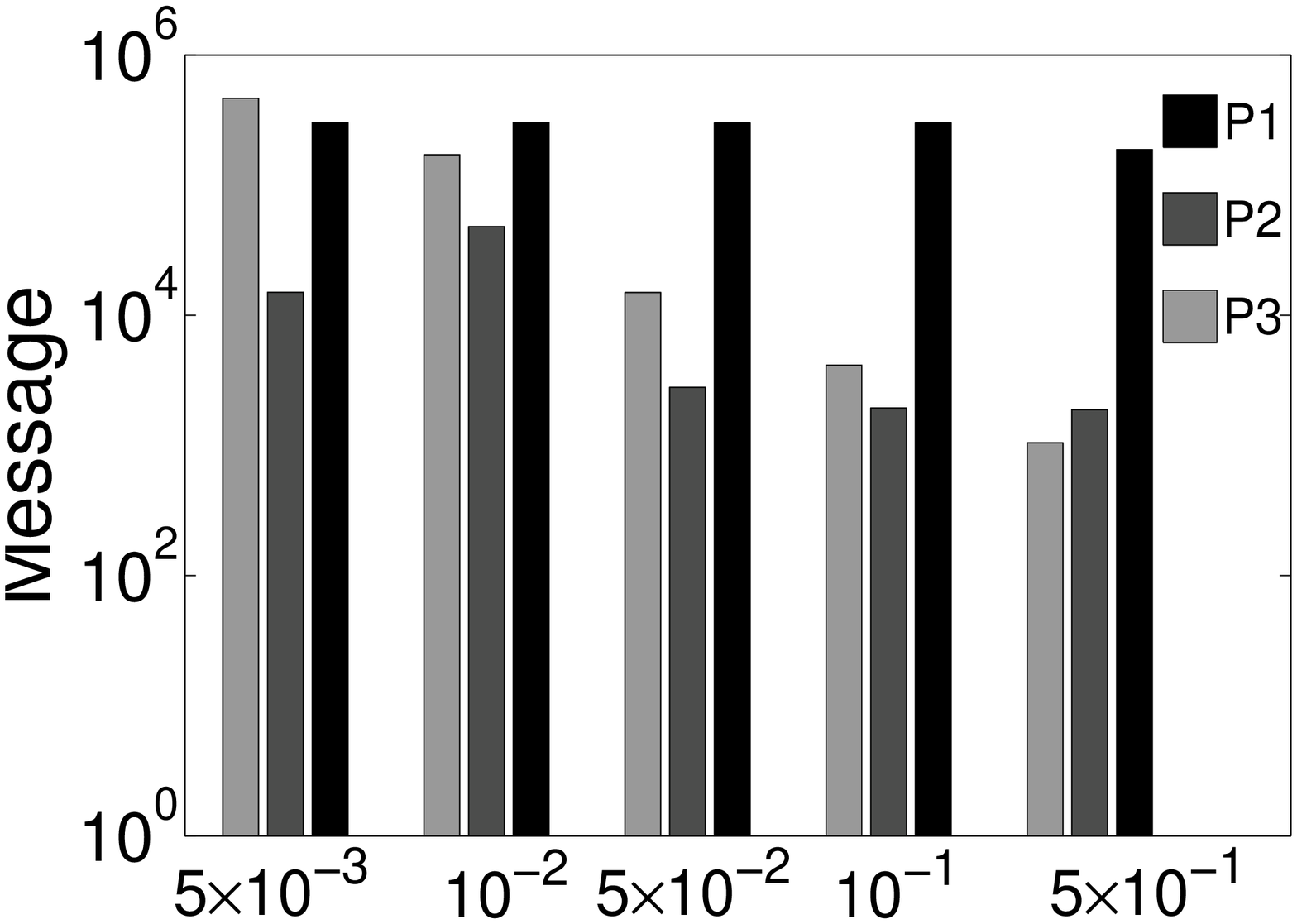}
\label{fig:yp_msg_eps}
}\vspace{2mm}
\subfigure[\s{msg} vs. site]{
\includegraphics[width=\figsize]{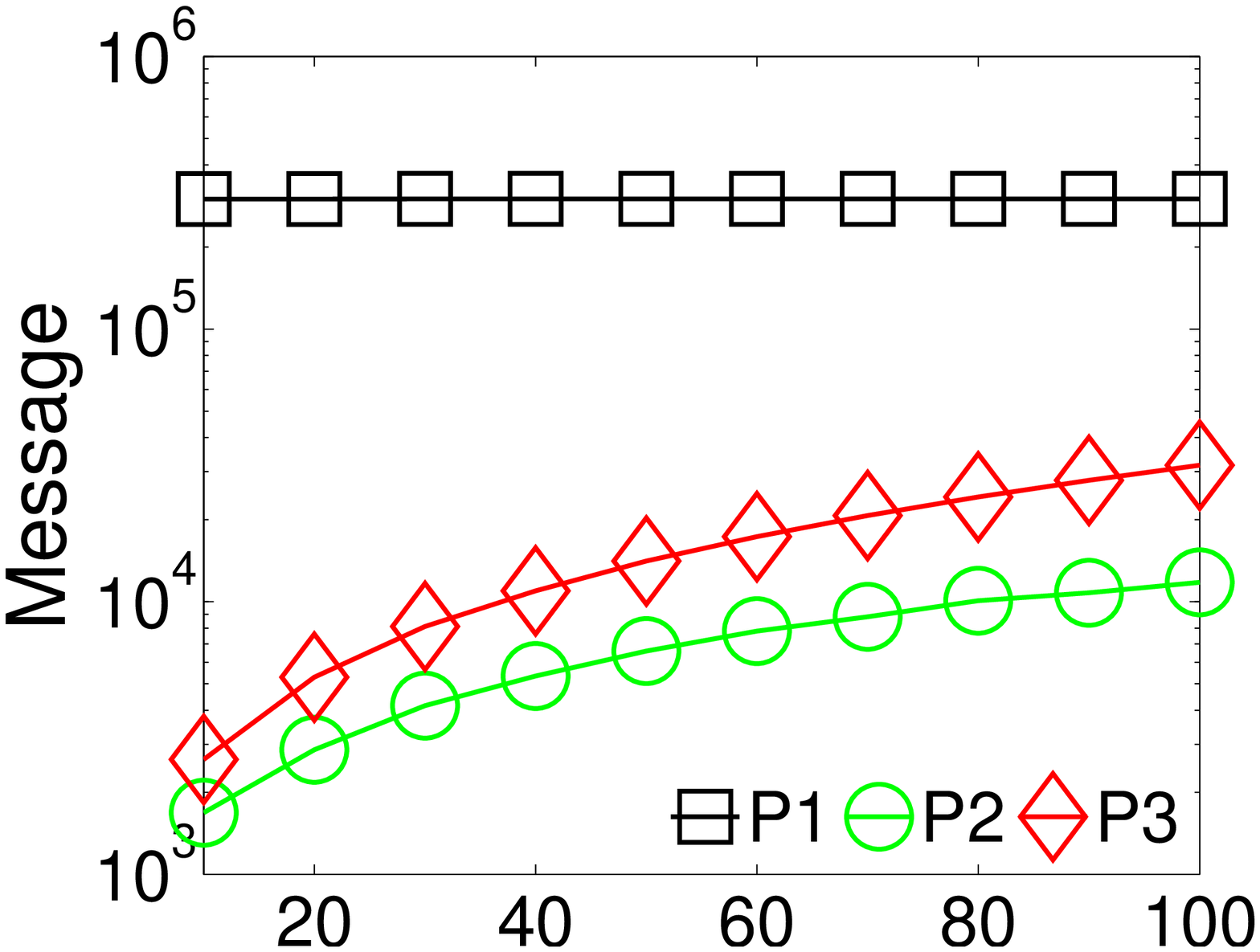}
\label{fig:yp_msg_sites}
}
\subfigure[\s{err} vs. site]{
\includegraphics[width=\figsize]{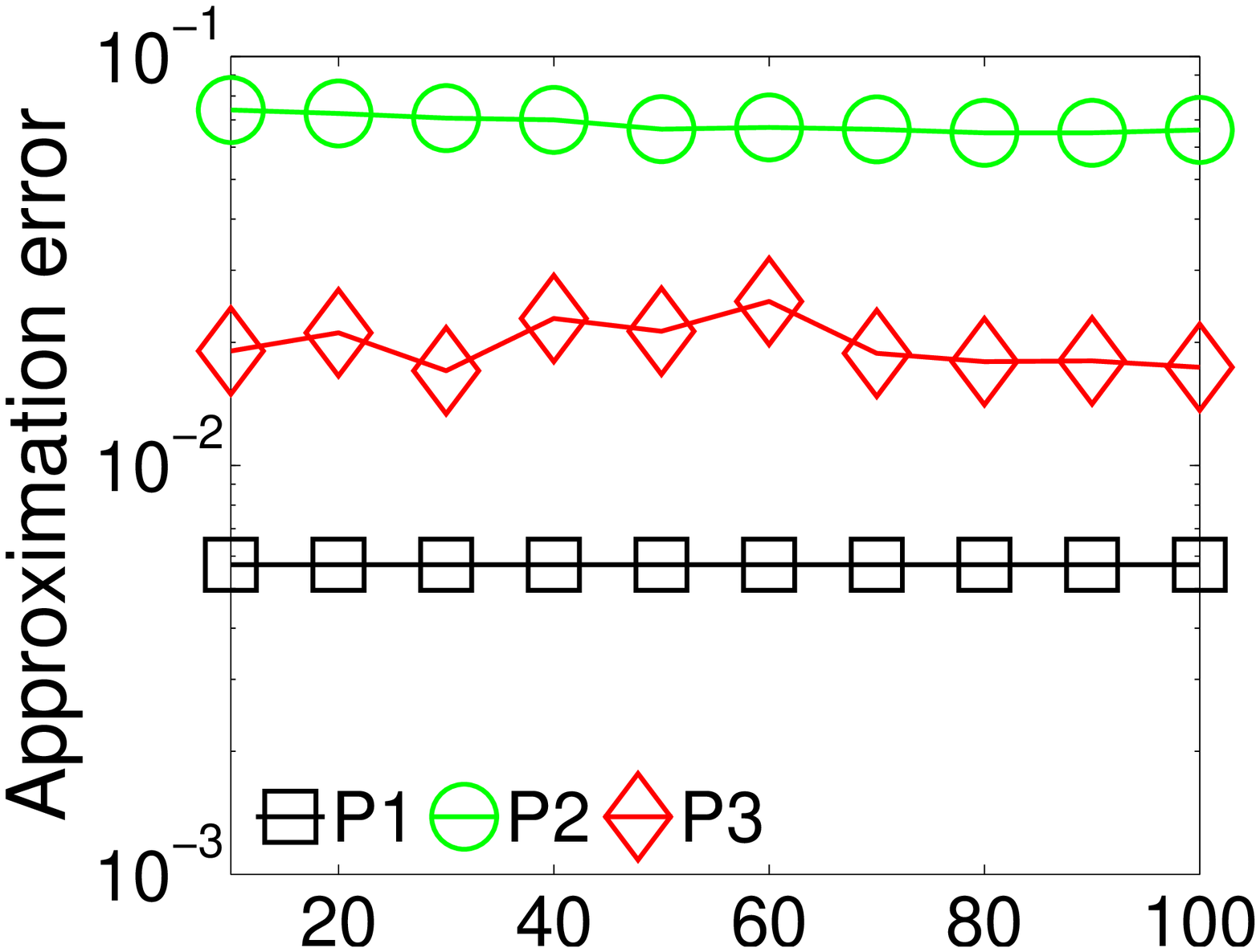}
\label{fig:yp_err_sites}
}
\caption{Experiments for MSD dataset}\vspace{-3mm}
\label{fig:yp}
\end{centering}
\end{figure}

We also compared the performance of protocols by tuning the $\eps$
parameter to achieve (roughly) the same measured error. Figure
\ref{fig:compare} shows their communication cost (\#\s{msg}) vs the
\s{err}.  As shown, protocols \pim, \piim, and \piiim\ incur less
error with more communication and each works better in various regimes
of the \s{err} versus \s{msg} trade-off.  \pim\ works the best when
the smallest error is required, but more communication is permitted.
Even though its communication is the same as the naive algorithms in
these examples, it allows each site and the coordinator to run small
space algorithms.  For smaller communication requirements (several of
orders of magnitude smaller than the naive methods), then either
\piim\ or \piiim\ are recommended.  \piim\ is deterministic, but
\piiim\ is slightly easier to implement. Note that since MSD is high
rank, and even the naive \s{SVD} or \s{FD} do not achieve really
small error (e.g. $10^{-3}$), it is not surprising that our algorithms
do not either.

\begin{figure}
\begin{centering}
\subfigure[PAMAP]{
\includegraphics[width=\figsize]{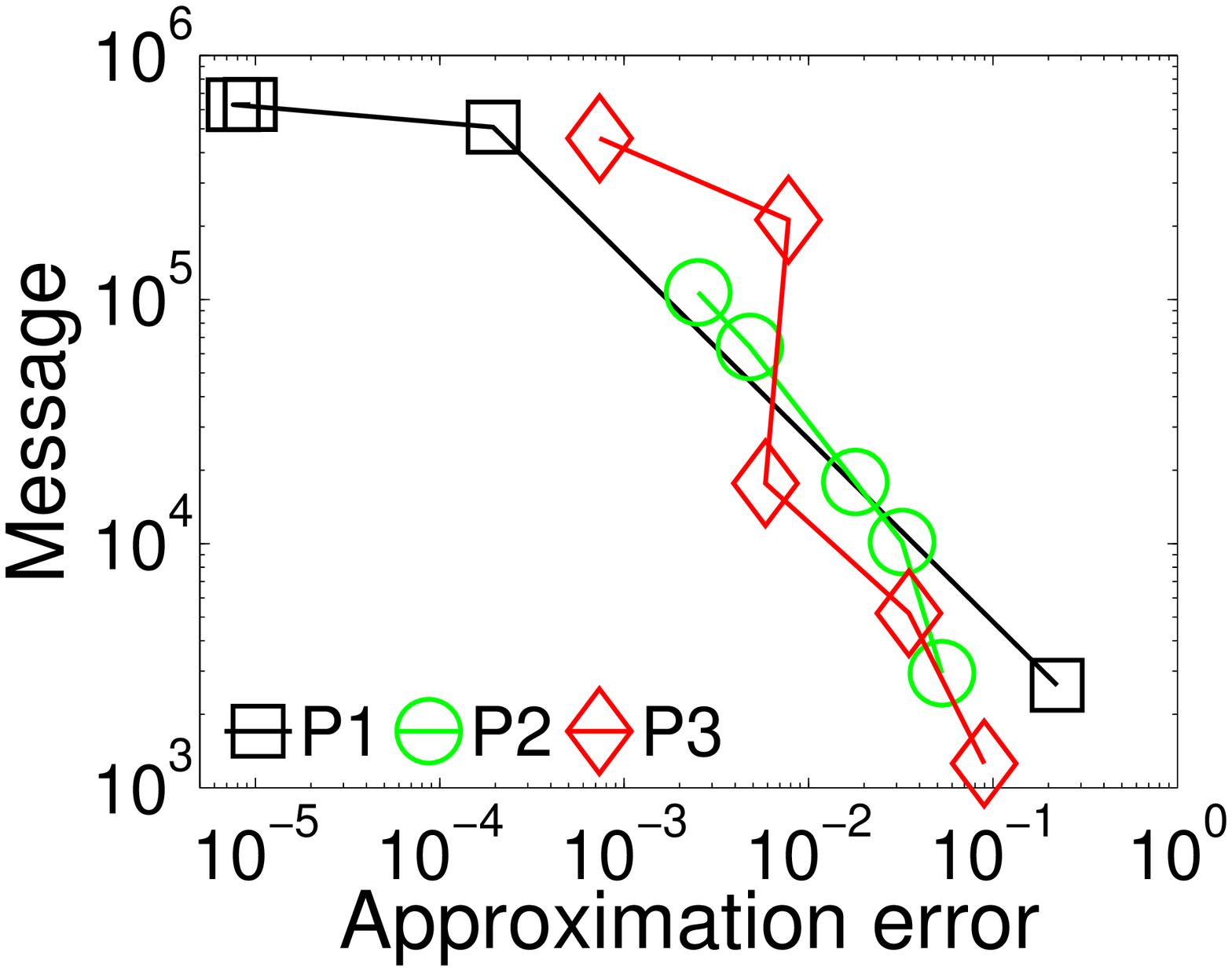}
\label{fig:pamap_msg_err}
}
\subfigure[MSD]{
\includegraphics[width=\figsize]{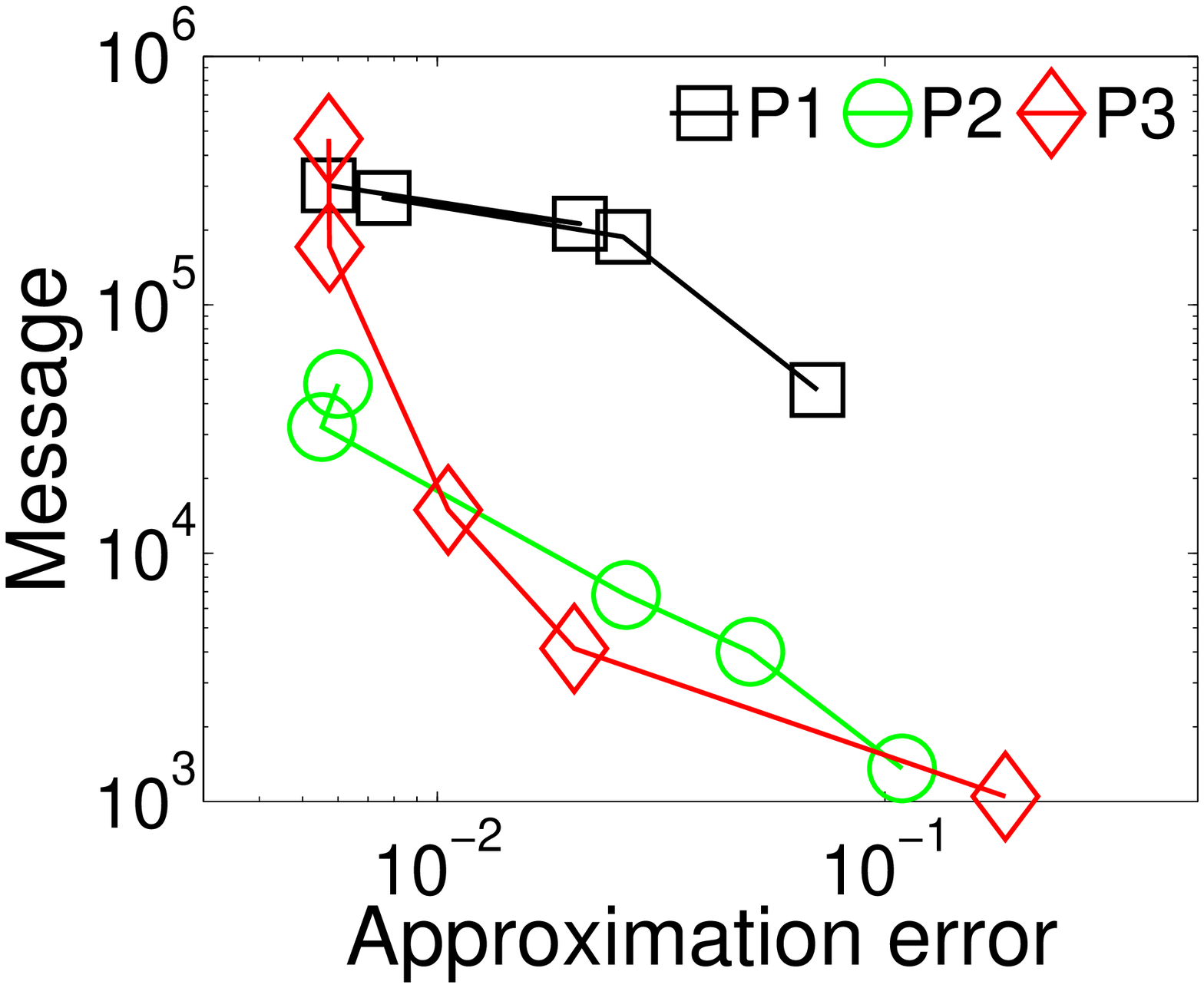}
\label{fig:yp_msg_err}
}
\vspace{-1mm}
\caption{Comparing the two protocols: \s{msg} vs. \s{err}}\vspace{-4mm}
\label{fig:compare}
\end{centering}
\end{figure}

\vspace{-1mm}
\section{Conclusion}
\label{sec:conclude} 
We provide the first protocols for monitoring weighted heavy
hitters and matrices in a distributed stream.  They are backed by
theoretical bounds and large-scale experiments. Our results are based
on important connections we establish between the two problems.  
Interesting open problems include, but are not
limited to, extending our results to the sliding window model, and
investigating distributed matrices that are column-wise distributed
(i.e., each site reports values from a fixed column in a matrix).

\vspace{-2mm}

%%%%%%%% bibliography %%%%%%%%%%%%%%%%%%%%%%
\small 
 \bibliographystyle{abbrv}
\bibliography{mina,discrepancy,HH-mina}

\normalsize

\begin{appendix}
\label{sec:appendix}
\section{Proof of Lemma 4}
\begin{proof}

  In order to reach round $j$ we must have $s$ items with priority
  $\rho_n > \tau_j = 2^j$; this happens with probability $\min(1, w_n
  / 2^j) \leq \min(1, \beta /2^j) \leq \beta /2^j$.  We assume for now
  $\beta \leq 2^j$ for values of $j$ we consider; the other case is
  addressed at the end of the proof.

  Let $X_{n,j}$ be a random variable that is $1$ if $\rho_n \geq
  \tau_j$ and $0$ otherwise.  Let $M_j = \sum_{n=1}^N X_{n,j}$.  Thus
  the expected number of items that have a priority greater than
  $\tau_j$ is $\E[M_j] = \sum_{n=1}^N w_n /2^j \leq N \beta /2^j$.
  Setting $j_N = \lceil \log_2(\beta N /s) \rceil$ then $\E[M_{j_N}]
  \leq s$.

  We now want to use the following Chernoff bound on the $N$
  independent random variables $X_{n,j}$ with $M = \sum_{n=1}^N
  X_{n,j}$ that bounds $ \Pr[M_j \geq (1+\alpha) \E[M_j] ] \leq \exp(-
  \alpha^2 \E[M_j] / (2+\alpha)).  $ 

Note that $\E[M_{j_N+1}] \leq (N
  \beta)/ 2^{j_N+1} \leq s /2$.  Then setting $\alpha = 1$ ensures that
\[
(1+\alpha) \E[M_{j_N+1}] = 2 \cdot (s/2) \leq s.
\]

Thus we can solve 

\begin{align*}
\Pr[M_{j_N+1} \geq s] 
&\leq 
\exp\left( - \frac{s/2}{3}\right) = \exp\left( - \frac{-s}{6}\right).
\end{align*}

Thus since in order to reach round $j_N+1 = \log_2(\beta N/s)+1 =
O(\log(\beta N/s))$, we need $s$ items to have priority greater than
$\tau_{j_N+1}$, this happens with probability at most $e^{-\Omega(s)}$.
Recall that to be able to ignore the case where $w_n > \tau_{j_N+1}$
we assumed that $\beta < \tau_{j_N+1}$.  If this were not true, then
$\beta > 2^{\log_2(\beta N/s)+1} = 2\beta N /s$ implies that
$s > N$, in which case we would send all elements before the end of the
first round, and the number of rounds is $1$.
\end{proof}

\section{Proof of Lemma 6}
\begin{proof}

  To prove our claim, we use the following Chernoff-Hoeffding
  bound~\cite{PS97}.  Given a set of negatively-correlated random
  variables $Y_1, \ldots, Y_r$ and $Y = \sum_{n=1}^r Y_n$, where each
  $Y_n \in [a_i,b_i]$, where 
$\Delta = \max_n (b_n -a_n)$ then $\Pr[ |Y - \E[Y]| \geq \alpha ] \leq \exp(-2\alpha^2/ r \Delta^2)$.

  For a given item $e \in [u]$ and any pair $(a_n,w_n) \in S$, we
  define a random variable%
  \footnote{Note that the random variables $X_{n,e}$ are
    negatively-correlated, and not independent, since they are derived
    from a sample set $S$ drawn without replacement.  Thus we appeal to
    the special extension of the Chernoff-Hoeffding bound by Panconesi
    and Srinivasan~\cite{PS97} that handles this case.  We could of
    course use with-replacement sampling, but these algorithms require
    more communication and typically provide worse bounds in practice, as demonstrated.
  } $X_{n,e}$ as follows:
\begin{equation*}
X_{n,e}= \bar{w}_n\textrm{ if } a_n=e, \textrm{ }0 \textrm{ otherwise}.
\end{equation*}
Define a heavy set $H = \{a_n \in A \mid w_n \geq \tau_j\}$, these
items are included in $S$ deterministically in round $j$.  Let the
light set be defined $L = A \setminus H$.  Note that for each $a_n \in
H$ that $X_{n,e}$ is deterministic, given $e$.  For all $a_n \in S
\cap L$, then $X_{n,e} \in [0,2\tau_j]$ and hence using these as random
variables in the Chernoff-Hoeffding bound, we can set $\Delta =
2\tau_j$.

Define $M_{e} = \sum_{a_n \in S} X_{n,e}$, and note that $f_e(S) =
M_e$ is the estimate from $S'$ of $f_e(A)$.  Let $W_L = \sum_{a_n \in
  L} w_i$.  Since all light elements are chosen with probability
proportionally to their weight, then given an $X_{n,e}$ for $a_i \in S
\cap L$ it has label $e$ with probability $f_e(L) / W_L$.  And in
general $\E[\sum_{a_n \in S \cap L} \bar{w}_n] = \E[W_{S \cap L}] =
W_L$.
Let $H_e = \{a_n \in H \mid a_n = e\}$.  Now we can see
\begin{align*}
\E[f_e(S)] = \E\left[ \sum_{n=1}^{|S|} X_{n,e} \right] 
&= 
\sum_{a_n \in H_e} w_n + \E\left[\sum_{a_n \in S \cap L} \tau_j\right] \cdot \frac{f_e(L)}{W_L}
\\ & = 
f_e(H) + W_L \cdot \frac{f_e(L)}{W_L} 
= 
f_e(A).
\end{align*}

Now we can apply the Chernoff-Hoeffding bound.  
\begin{align*}
&\Pr \left(\left|f_e(S) - f_e(A)\right | > \eps W_A \right) 
= 
\Pr \left(\left |M_e-E[M_e] \right | > \eps W_A \right) 
\\ &\leq 
\exp \left(\frac{-2\eps^2 W_A^2}{|S| 4 \tau_j^2}\right)  
\leq
\exp \left(-\frac{1}{2} \eps^2 |S|/(1+\eps)^2 \right)  
\leq \delta, 
\end{align*}
where the last line follows since $(1+\eps) W_A \geq \sum_{a_n \in S}
\bar w_n \geq |S| \tau_j$, where the first inequality holds with high
probability on $|S|$.  Solving for $|S|$ yields $|S| \geq \frac{(1+\eps)^2}{2\eps^2} \ln (1/\delta)$.  Setting $\delta = O(\eps^2 /\log (1/\eps)) = 1/|S|$ allows the result to hold with probability at least $1-1/|S|$.  
\end{proof}

\section{Distributed Matrix Tracking Protocol 4} 
\label{sec:P4m}

Again treating each row $a_i$ as having weight $w_i = \|a_i\|^2$, then
to mimic the weighted heavy-hitters protocol 4 we want to select each
row with probability $\hat p = 1-e^{-p \|a_i\|^2}$.  Here $p =
2\sqrt{m}/ (\eps \hat F)$ represents the probability to send a weight
$1$ item and $\hat F$ is a $2$-approximation of $\|A\|_F^2$ (i.e.
$\hat F \leq \|A\|_F^2 \leq 2 \hat F$) and is maintained and provided
by the coordinator.  We then only want to send a message from the
coordinator if that row is selected, and then it follows from the
analysis in Section \ref{sec:P3w} that in total $O((\sqrt{m}/\eps)
\log (\beta N))$ messages are sent, since $O(\sqrt{m}/\eps)$ messages
are sent each round in between $\hat F$ doubling and being distributed
by the coordinator, and there are $O(\log (\beta N))$ such rounds.

But replicating the approximation guarantees of protocol 4 is hard.
In Algorithm \ref{alg:P3w-site}, on each message a particular element
$e$ has its count updated \emph{exactly} with respect to a site $j$.
Because of this, we only need to bound the expected weight of stream
elements until another exact update is seen (at $1/p$) and then to
compensate for this we increase this weight by $1/p$ so it has the
right expected value.  It also follows that the variances are bounded
by $1/p^2$, and thus when $p$ is set $\Theta(\sqrt{m}/(\eps W))$ we
get at most $\eps W$ error.

Thus the most critical part is to update the representation (of local
matrices from $m$ sites) on the coordinator so it is exact for some
query.  We show that this can only be done for a limited set of
queries (along certain singular vectors), provide an algorithm to do
so, and then show that this is not sufficient for any approximation
guarantees.

\Paragraph{Replicated algorithm for matrices.}
Each site can keep track of $A_j$ the exact matrix describing all of
its data, and an approximate matrix $\hat A_j$.  The matrix $\hat A_j$
will also be kept on the coordinator for each site.  So the
coordinator's full approximation $\hat A = [\hat A_1; \hat A_2; \ldots
; \hat A_m]$ is just the stacking of the approximation from each site.
Since the coordinator can keep track of the contribution from each
site separately, the sites can maintain $\hat A_j$ under the same
process as the coordinator.
 
In more detail, both the site and the coordinator can maintain the
$[U,\Sigma,V] = \svd(\hat A_j)$, where $V = [v_1, v_2, \ldots, v_d]$
stores the right singular vectors and $\Sigma = \diag(s_1, s_2,
\ldots, s_d)$ are the singular values.  (Recall, $U$ is just an
orthogonal rotation, and does not change the squared norm.)  Thus
$\|\hat A_j x\|^2 = \sum_{i=1}^d s_i^2 \langle v_i, x\rangle^2$.  Now
if we can consider setting $A' = [\hat A_j; r]$ where $\|r\| = \langle v_{i'}, r\rangle$, so it is along the direction of a singular vector
$v_{i'}$, then
\vspace{-2mm}
\[
\|A' x\|^2 
= 
\|r\|^2 \langle v_{i'},x\rangle^2 + \sum_{i=1}^d s_i^2 \langle v_i, x \rangle^2.
\]
Thus if we update the singular value $s_{i'}$ to $\bar s_{i'} =
\sqrt{s_{i'}^2 + \|r\|^2}$, (and to simplify notation $\bar s_i = s_i$
for $i \neq i'$) then $\|A' x\|^2 = \sum_{i=1}^d (s'_i)^2 \langle v_i,
x \rangle^2$.  Hence, we can update the squared norm of $\hat A_j$ in
a particular direction, as long as that direction is one of its right
singular values.  But unfortunately, in general, for arbitrary
direction $x$ (if not along a right singular vector), we cannot do
this update while also preserving or controlling the orthogonal
components.

We can now explain how to use this form of update in a full protocol
on both the site and the coordinator.  The algorithm for the site is
outlined in Algorithm \ref{alg:P3m-site}.  On an incoming row $a$, we
updated $A_j = [A_j; a]$ and send a message with probability $1 -
e^{-p \|a\|^2}$ where $p = 2 \sqrt{m}/(\eps \hat F)$.  If we are
sending a
message, 
we first set $z_i = \sqrt{\|A_j v_i\|^2 + 1/p}$ for all $i \in [d]$,
and send a vector $z = (z_1, z_2, \ldots, z_d)$ to the coordinator.
We next produce the new $\hat A_j$ on both site and coordinator as
follows.  Set $Z = \diag(z_1, z_2, \ldots, z_d)$ and update $\hat A_j
= Z V^T$.  Now along any right singular vector $v_i$ of $\hat A_j$ we
have $\|\hat A_j v_i\|^2 = \|A_j v_i\|^2 + 1/p$.  Importantly note that
the right singular vectors of $\hat A_j$ do not change; although their
singular values and hence ordering may change, the basis does not.

\begin{algorithm}
\caption{\label{alg:P3m-site} P4: Site $j$ process new row $a$}
\begin{algorithmic}
\STATE Given $\hat F$ from coordinator, set $p = 2\sqrt{m}/ (\eps \hat F)$.  
\STATE The site also has maintained $[U,\Sigma,V] = \svd(\hat A_j)$.  
\STATE Update $A_j = [A_j; a]$.  
\STATE Set $\hat p = 1 - e^{-p \|a\|^2}$.  Generate $u \in \unif[0,1]$.  
\IF {($u \leq \hat p$)} 
  \STATE \textbf{for} $i \in [d]$ \textbf{do} $z_i = \sqrt{\|A_j v_i\|^2 + 1/p}$.  
  \STATE Send vector $z = (z_1, \ldots, z_d)$.  
  \STATE Set $Z = \diag(z_1, \ldots, z_d)$; update $\hat A_j = Z V^T$.
\ENDIF
\end{algorithmic}
\end{algorithm}

\begin{figure}[t!]
\begin{center}\includegraphics[width=.55\linewidth]{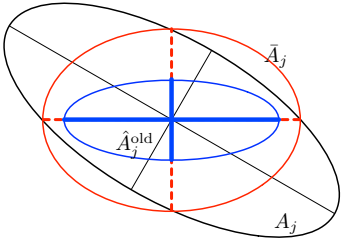}\end{center}
\vspace{-6mm} 
\caption{\label{fig:ellipse} {\small Possible update of $\hat
  A_j^{\text{old}}$ to $\bar A_j$ with respect to $A_j$ for $d=2$.
  Note the norm of a matrix along each direction $x$ is an ellipse,
  with the axes of the ellipse corresponding to the right singular
  vectors.  Thus $\hat A_j^{\text{old}}$ and $\bar A_j$ have the same
  axes, and along those axes both $\bar A_j$ and $A_j$ have the same
  norm, but otherwise are incomparable}.}\vspace{-3mm}
\end{figure}
\Paragraph{Error analysis.}
To understand the error analysis, we first consider a similar
protocol, except where instead of each $z_i$, we set $\bar z_i = \|A_j
v_i\|$ (without the $1/p$). Let $\bar Z = \diag(\bar z_1, \ldots, \bar
z_d)$ and $\bar A_j = \bar Z V^T$.  Now for all right singular vectors
$\|A_j v_i\|^2 = \|\bar A_j v_i\|^2$ (this is not true for general $x$
in place of $v_i$), and since $\bar z_i \geq s_i$ for all $i \in [d]$,
then for all $x$ we have $\|\bar A_j x \|^2 \geq \|\hat A_j^{\text{old}} x\|^2$, where $\hat A_j^{\text{old}}$ is the
approximation before the update.  See Figure \ref{fig:ellipse} to
illustrate these properties.

For directions $x$ that are right singular values of $\hat A_j$, this
analysis should work, since $\|\bar A_j x\|^2 = \|A_j x\|^2$.  But two
problems exist in other directions.  First in any other direction $x$
the norms $\|A_j x\|$ and $\|\bar A_j x\|$ are incomparable, in some
directions each is larger than the other.  Second, there is no utility
to change the right singular vectors of $\hat A_j$ to align with those
of $A_j$.  The skew between the two can be arbitrarily large,
again see Figure \ref{fig:ellipse}, and without being able to adjust
these, this error can not be bounded.

One option would be to after every $\sqrt{m}$ rounds send a Frequent Directions
sketch $B_j$ of $A_j$ of size $O(1/\eps)$ rows from each site to the coordinator.
Then we use this $B_j$ as the new $\hat A_j$.  This has two problems.
First it only has $O(1/\eps)$ singular vectors that are well-defined,
so if there is increased squared norm in its null space, it is not
well-defined how to update it.  And second, still in between these
updates within a round, there is no way to maintain the error.

One can also try to shorten a round to update when $\hat F$ increases
by a $(1+\eps)$ factor, to bound the change within a round.  But this
causes $O((1/\eps) \log (\beta N))$ rounds, as in Section
\ref{sec:P2m}, and leads to $O((\sqrt{m}/\eps^2) \log (\beta N))$
total messages, which is as bad as the very conservative and deterministic algorithm \pim.  
Thus, for direction $x$ that is a right singular value as analyzed
above, we can get a Protocol 4 with $O((\sqrt{m}/\eps) \log (\beta
N))$ communication. But in the general case, how to, or if it is
possible at all to, get $O((\sqrt{m}/\eps) \log (\beta N))$
communication, as Protocol 4 does for weighted heavy hitters, in
arbitrary distributed matrix tracking is an intriguing open problem.

\Paragraph{Experiments with \piiiiw.}
In order to give a taste on why \piiiiw\ does not work, we compared it with other protocols. Figures \ref{fig:p3_pamap} and \ref{fig:p3_msd} show the the error this protocol incurs on PAMAP and MSD datasets.  Not only does it tend to accumulate error for smaller values of $\eps$, but for the PAMAP dataset and small $\eps$, the returned answer is almost all error. 
%%%%%%%% graphs of p3 %%%%%%%%%%%%%%%%%%%%%%%%%%%
\begin{figure}
\begin{centering}
\subfigure[\s{err} vs. $\eps$.]{
\includegraphics[width=\figsize]{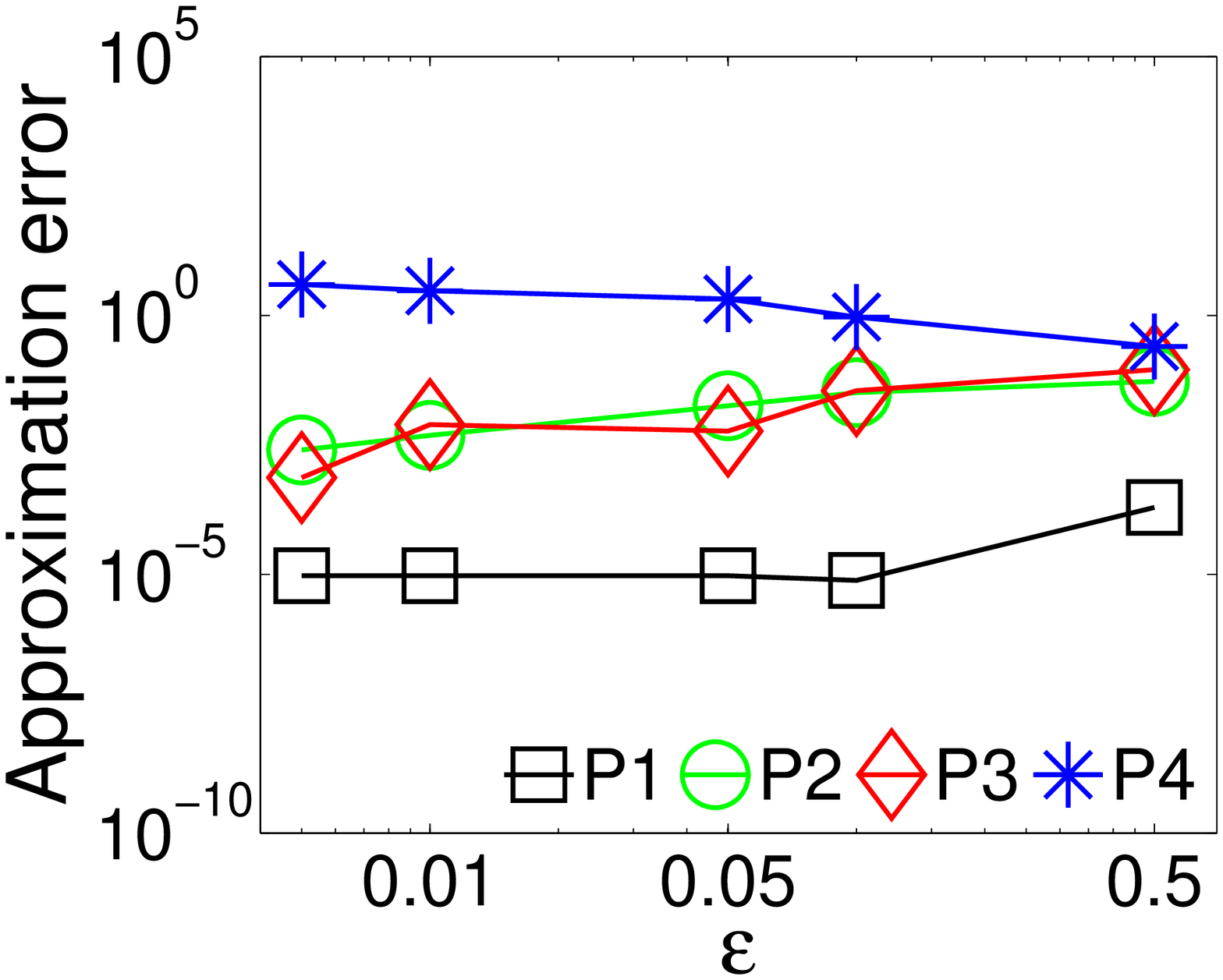}
\label{fig:pamap_err_eps_p3}
}
\subfigure[\s{err} vs. site]{
\includegraphics[width=\figsize]{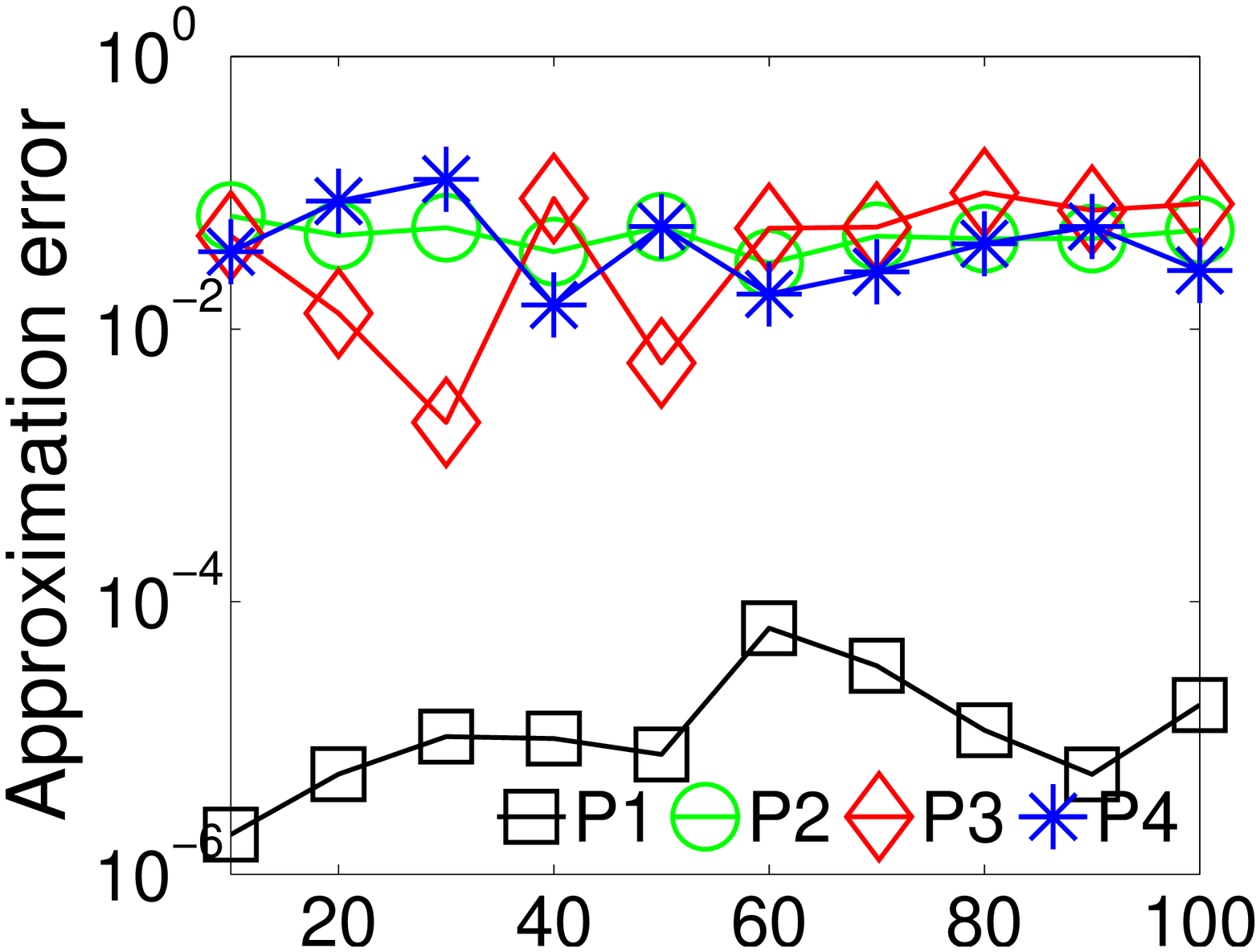}
\label{fig:pamap_err_sites_p3}
}
\caption{P4 vs. other protocols on PAMAP}\vspace{-3mm}
\label{fig:p3_pamap}
\end{centering}
\end{figure}

\begin{figure}
\begin{centering}
\subfigure[\s{err} vs. $\eps$.]{
\includegraphics[width=\figsize]{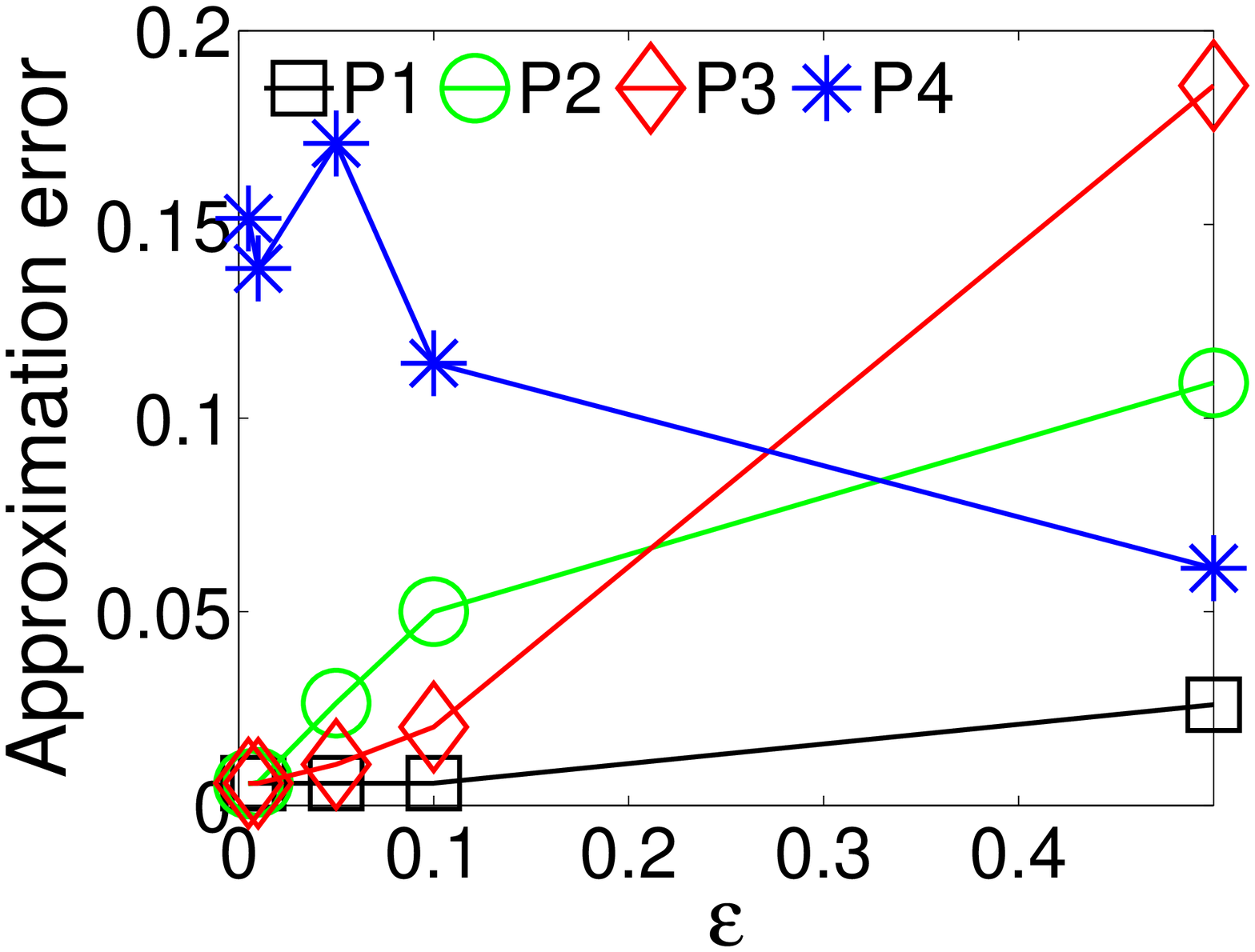}
\label{fig:yp_err_eps_p3}
}\vspace{2mm}
\subfigure[\s{err} vs. site]{
\includegraphics[width=\figsize]{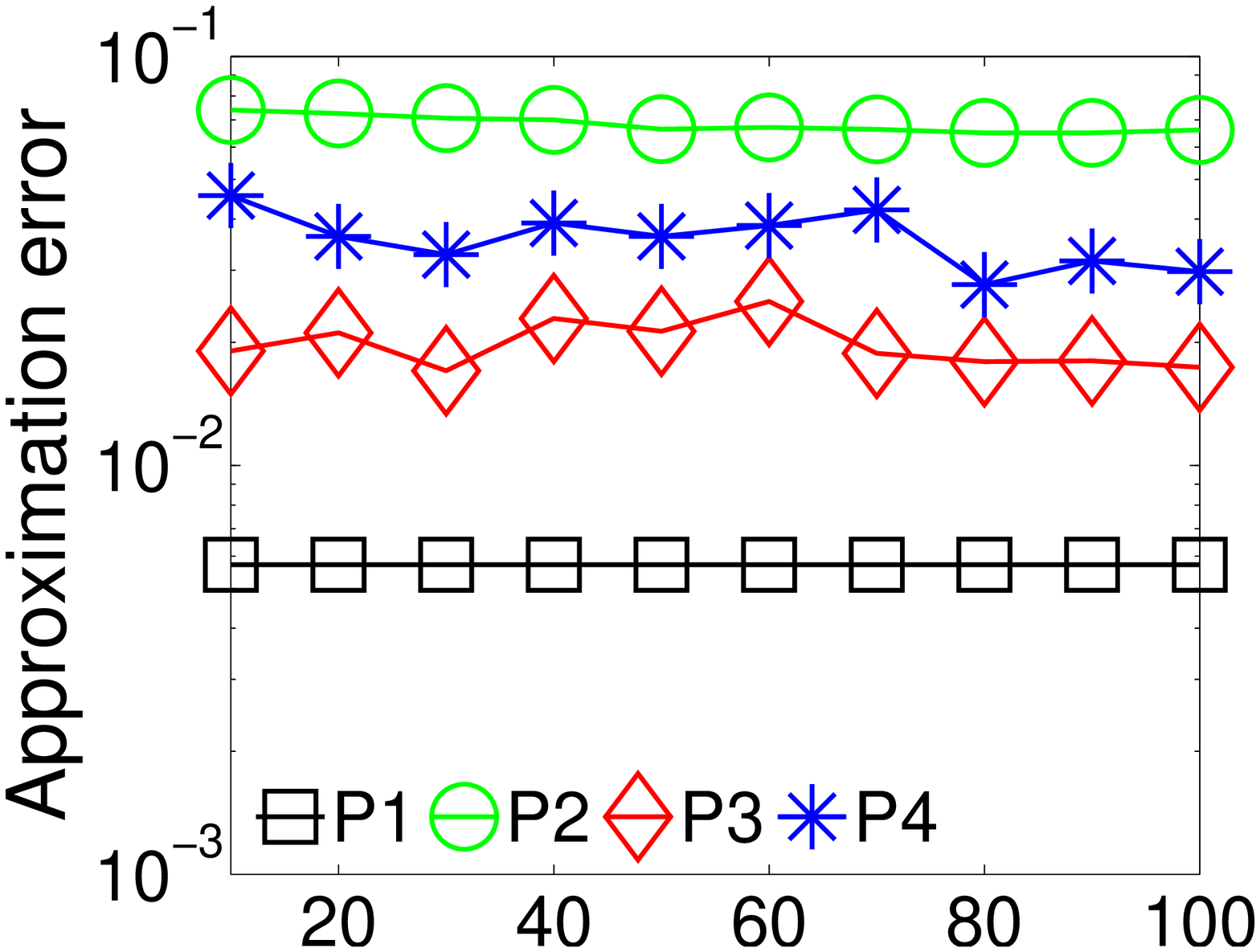}
\label{fig:yp_err_sites_p3}
}
\caption{P4 vs. other protocols on MSD}\vspace{-3mm}
\label{fig:p3_msd}
\end{centering}
\end{figure}

\end{appendix}

\end{document}